\definecolor{bleudefrance}{rgb}{0.19, 0.55, 0.91}
\definecolor{ao(english)}{rgb}{0.0, 0.5, 0.0}
\newcommand{\enrique}[1]{  \ifthenelse{\boolean{showcomments}}
{\todo[inline,color=bleudefrance]{Enrique says: #1}}{}}
\newcommand{\rene}[1]{\ifthenelse{\boolean{showcomments}}
{\todo[inline,color=yellow]{Rene says: #1}}{}}
\newcommand{\addcite}[0]{\ifthenelse{\boolean{showcomments}}
{\textcolor{purple}{(add cite(s)) }}{}}%
\newcommand{\emmargin}[1]{\ifthenelse{\boolean{showcomments}}
{\todo{Enrique: #1)}}{}}
\newcommand{\aem}[1]{
\ifthenelse{\boolean{showedits}}
{\added[id=EM]{#1}}
{#1}
}
\newcommand{\chem}[2]{
\ifthenelse{\boolean{showedits}}
{\replaced[id=EM]{#1}{#2}}
{#1}
}
\newcommand{\dem}[1]{
\ifthenelse{\boolean{showedits}}
{\deleted[id=EM]{#1}}
{}
}
\newtheorem{defn}{Definition}
\newtheorem{thm}{Theorem}
\newtheorem{prop}{Proposition}
\newtheorem{cor}{Corollary}
\newtheorem{lemma}{Lemma}
\newtheorem{exm}{Example}
\newtheorem{rem}{Remark}
\DeclareMathOperator{\nullsp}{Null}
\DeclareMathOperator{\supprt}{Supp}
\DeclareMathOperator{\Ext}{Ext}
\DeclareMathOperator{\proj}{Pr}
\DeclareMathOperator{\nsc}{nsc}
\DeclareMathOperator{\rinte}{rinte}
\DeclareMathOperator{\MinSupp}{MinSupp}
\DeclareMathOperator{\clo}{clo}
\DeclareMathOperator{\aff}{aff}
\newcommand{\smax}{s_{\max}}
\newtheorem{test}{Test}
\newtheorem{ass}{Assumption}
\newtheorem*{property}{(P)}
\begin{document}
%





\title{What is the Largest Sparsity Pattern that Can Be Recovered by 1-Norm Minimization?}

%
%

\author{Mustafa~D.~Kaba, Mengnan~Zhao, Ren\'{e}~Vidal, Daniel~P.~Robinson, Enrique~Mallada
\thanks{{A preliminary version of this paper was presented}
at the $57^{th}$ IEEE Conference on Decision and Control \cite{ZhKaViRoMa:18}.}
\thanks{This work was supported by NSF grants CCF~1618637, IIS~1704458, AMPS~1736448 and CAREER~1752362.}
\thanks{M.D. Kaba is with the Department of Applied Mathematics and Statistics, M. Zhao and E. Mallada are with the Department of Electrical and Computer Engineering, and R. Vidal is with the Mathematical Institute for Data Science and the Department of Biomedical Engineering, at Johns Hopkins University, Baltimore, MD 21218, USA. D.P. Robinson is with the Department of Industrial and Systems Engineering at Lehigh University, Bethlehem, PA 18015, USA. Emails: {\tt \small \{mkaba1, mzhao21, rvidal, mallada\}@jhu.edu} and {\tt \small dpr219@lehigh.edu}.}
}

\maketitle


\begin{abstract}
Much of the existing literature in sparse recovery is concerned with the following question: given a sparsity pattern and a corresponding regularizer, derive conditions on the dictionary under which exact recovery is possible. In this paper, we study the opposite question: given a dictionary and the $\ell_1$-norm regularizer, find the largest sparsity pattern that can be recovered. We show that such a pattern is described by a mathematical object called a ``maximum abstract simplicial complex," and provide two different characterizations of this object:
one based on extreme points and the other based on vectors of minimal support. In addition, we show how this new framework is useful in the study of sparse recovery problems when the dictionary takes the form of a graph incidence matrix or a partial discrete Fourier transform.  In case of incidence matrices, we show that the largest sparsity pattern that can be recovered is determined by the set of simple cycles of the graph. As a byproduct, we show that standard sparse recovery can be certified in polynomial time, although this is known to be NP-hard for general matrices. In the case of the partial discrete Fourier transform, our characterization of the largest sparsity pattern that can be recovered requires the unknown signal to be real and its dimension to be a prime number.
\end{abstract}

\begin{IEEEkeywords}
Compressed sensing, convex optimization,  
nullspace property, sparsity, sparse solution
of linear equations.
\end{IEEEkeywords}

%
\IEEEpeerreviewmaketitle

\section{Introduction}
%
%
%
%
\IEEEPARstart{T}{he} widespread use of sparse recovery methods \cite{donoho2006compressed,mallat2008wavelet} in data acquisition \cite{candes2008introduction}, machine learning \cite{Elhamifar:NIPS11,Elhamifar:NIPS12,Elhamifar:TPAMI13,wright2009robust}, medical imaging \cite{lustig2007sparse,candes2006robust,LiMiZoSe:15,Schwab:SIIMS19}, and networking \cite{coates2007compressed, haupt2008compressed, xu2011compressive} has made sparse recovery a popular research area and application tool. The goal of sparse recovery is to find an unknown signal $\bar{x}\in \mathbb{R}^n$ from {a number of observations} $\Phi \bar{x}$, where $\Phi \in \mathbb{R}^{m\times n}$ is a \emph{dictionary} or \emph{measurement matrix} {possibly (but not necessarily)} with $m \ll n$, under the assumption that the unknown signal $\bar{x}$ is sparse in some sense. For instance, \emph{regular sparsity} assumes that the number of nonzero entries of $\bar{x}$ is limited by an integer $s \ll n$ \cite{mallat2008wavelet}. 
Similarly, \emph{block-sparsity} assumes that there is a collection of non-overlapping groups (i.e. blocks) that cover all entries of $\bar{x}$, and that when the unknown signal is restricted to these groups, only a small number of such restrictions are nonzero \cite{YuLi:06}. Further examples include \emph{group-sparsity} \cite{ObLaVe:11}, a generalization of block-sparsity that removes the assumption of non-overlapping groups, {tree-based sparsity \cite{HeCa:09a, HZM:11}, model-based compressed sensing \cite{BaCeDuHe:10}, and sparsity in levels \cite{AdHaPoRo:17}}. 

When recovering signals with {some notion of} sparsity, the typical approach is to construct a regularizer that will guarantee the recovery of an unknown sparse signal with high probability. However, in some applications of practical importance where the unknown signal is {sparse in a non-regular sense}, we still see that $\ell_1$-recovery is the standard recovery method \cite{PaSuEl:17} or that $\ell_1$-recovery performs as good as  recovery via a tailored regularizer \cite{Elhamifar:TSP12}. There are several reasons why recovery via $\ell_1$-minimization {might be} preferable over recovery via regularizers tailored for the sparsity pattern in hand {for some applications}. Probably the most important reason is that $\ell_1$-minimization can be cast as a linear program, hence a solution can be found efficiently even for  large scale problems \cite{YouRoVi:16}. This motivates the exploration of the exact recovery capabilities of $\ell_1$-minimization. In other words, we would like to understand how far the exact recovery capability of $\ell_1$-minimization goes beyond regular sparsity.

In general, a {\emph{sparsity pattern}} can be defined as a collection of index sets. However, some of the most common sparsity patterns studied in the literature satisfy a noteworthy property, which for regular sparsity can be stated as follows: If a vector is $s$-sparse with support $S$, then any vector with support $S^{\prime}\subseteq S$ is also $s$-sparse. More generally, we observe that most common sparsity patterns have the following property:
\begin{property}
\label{property:p} If $S$ belongs to a sparsity pattern, then $S^{\prime}\subseteq S$ also belongs to the same sparsity pattern.
\end{property}
Our claim in this paper is that when the $\ell_1$ regularizer is used, property (P) can serve as the basis for a unified treatment of {sparsity patterns} including those discussed above. 

Understanding the exact recovery capabilities of $\ell_1$-minimization has great importance for dictionaries arising in various applications. In this paper we focus on two examples, namely graph incidence matrices and the partial Discrete Fourier Transform (DFT). Our interest in graph incidence matrices stems from their use as a fundamental representation of graphs, and thus as a natural choice when analyzing network flows. In addition, the detection of sparse structural network changes via observations at the nodes can be modeled as an $\ell_1$-recovery problem, 
where the incidence matrix serves as the dictionary. For instance, this approach is used in \cite{SoYaZu:2015} to detect physical and cyber attacks on power grids. On the other hand, the DFT is one of the most important discrete transforms that influences  applications ranging from image processing to solving partial differential equations \cite{smith:97}.

In this paper we answer the following question: {Given a dictionary, what is the largest sparsity pattern that can be recovered by $\ell_1$-minimization?}
Our specific contributions can be summarized as follows:
\begin{enumerate}
    \item We generalize the well-known Nullspace Property to a family of sparsity patterns described by a mathematical object called an \emph{abstract simplicial complex (ASC)} (\S~\ref{sec:GNUP}). This leads to a characterization of all {sparsity patterns} recoverable via $\ell_1$-minimization (\S~\ref{sec:maxASC}), including the largest one, which we call \emph{maximum abstract simplicial complex (MASC)}.

    \item We provide two characterizations of the MASC associated with a dictionary. The first uses the extreme points of the convex set formed by the intersection of the nullspace of the dictionary and the $\ell_1$-ball (\S~\ref{sec:extp}). The second one is based on the fact that the extreme points and the vectors of minimal support coincide for sets defined as the intersection of a subspace and the $\ell_1$-ball (\S~\ref{sec:minsupp}).
    
    \item For graph incidence matrices, we show that the success of $\ell_1$-minimization is determined by the topology of the graph, {specifically by its simple cycles}. Moreover, we show that the decision of whether all $s$-sparse signals can be recovered via $\ell_1$-minimization can be made in polynomial time, although it is NP-hard in general (\S~\ref{ssec:incmatr}). 
    
    \item {When the dictionary is a partial DFT matrix and the unknown signal is real and its dimension is a prime number, we completely characterize the support sets for which $\ell_1$-recovery is always successful. Under stronger assumptions we show that a computationally more advantageous characterization is possible, and we provide a useful lower bound on the maximum sparsity level for which all signals can be recovered (\S~\ref{ssec:pDFT})}.

    \item We illustrate the importance of our results with experiments on incidence (\S~\ref{ssec:inc_matr_exp}) and  partial DFT matrices (\S~\ref{ssec:pDFT_exp}).    
\end{enumerate}

\section{Preliminaries}
\subsection{Notation}
Given a vector $x\in\mathbb{R}^n$, we denote its $\ell_p$-norm for $p \geq 1$ by $\|x\|_p:= (\sum_{k=0}^{n-1} |x_k|^p)^{1/p}$. 
We denote the unit $\ell_p$-sphere in $\mathbb{R}^n$ by $\mathbb{S}_p^{n-1} := \{x\in \mathbb{R}^n : \|x\|_p=1\}$.
Similarly, the unit $\ell_p$-ball in $\mathbb{R}^n$ is denoted by 
$\mathbb{B}_p^{n} := \{x\in \mathbb{R}^n : \|x\|_p\leq 1\}.$\footnote{In order to avoid confusion, we emphasize once again that throughout this paper, $\mathbb{S}_p^{n-1}$ and $\mathbb{B}_p^{n}$ are objects that consist only of real vectors.}

We denote the function that counts the number of nonzero entries in a vector $x\in \mathbb{R}^n$ by $\|x\|_0$.\footnote{Although $\|\cdot\|_0$ is not a norm, it is common jargon to call it the $\ell_0$-norm.} We say that a vector $x$ is $s$-sparse if $\|x\|_0 \leq s$. 
To emphasize that the vector has precisely $s$ nonzero entries, we say it is \emph{exactly $s$-sparse}.
 
The \emph{codimension} of a $d$-dimensional subspace $\mathcal{V}\subseteq\mathbb{R}^n$ is defined to be $n-d$. The \emph{nullspace} of a matrix $\Phi\in \mathbb{R}^{m\times n}$ will be denoted by $\nullsp(\Phi)$. That is, $\nullsp(\Phi) = \{x\in \mathbb{R}^n : \Phi x = 0\}$. We define $\mathcal{U}_n:= \{0,\dots, n-1\}$. When $S\subseteq \mathcal{U}_n$, we assume that $S$ has the natural ordering, and $S_k$ denotes the $k^{th}$ element of $S$.
For a vector $x\in \mathbb{R}^n$ and index set $S\subseteq \mathcal{U}_n$, we denote the part of $x$ supported on $S$ by $x_S$, so that $x_S \in \mathbb{R}^{|S|}$, where $|S|$ denotes the cardinality of a set $S$.
When we would like to keep the dimension unchanged, we use the projection map $\proj_S : \mathbb{R}^n \to \mathbb{R}^{n}$, which simply projects vectors onto the coordinates indexed by $S$. The complement of $S$ in $\mathcal{U}_n$ is denoted by $S^c$, and the collection of all subsets of $S$ (i.e. the \emph{power set} of $S$) is denoted by $2^{S}$. 

For a nonempty convex set $C\subseteq \mathbb{R}^n$, $\Ext(C)$ denotes the set of extreme points of $C$, which are precisely the points that cannot be written as a nontrivial convex combination of two distinct points in $C$. { The \emph{affine hull} of $C$, denoted by $\aff(C)$, is the smallest affine set in $\mathbb{R}^n$ that contains $C$. Alternatively, it can be characterized as the intersection of all affine sets containing $C$ \cite[p.6]{rock:70}. Note that the affine hull of a point in $\mathbb{R}^n$ consists only of the point itself.} The \emph{relative interior} of $C$ is denoted by $\rinte(C)$. {Formally, it is defined \cite[p.44]{rock:70} as 
$$\rinte(C)\! :=\! \left\{x\!\in\! \aff(C): \exists \varepsilon>0,(x+\varepsilon\mathbb{B}_2^n)\cap \aff(C)\subseteq C)\right\}.$$
Hence, the relative interior of a single point is itself. That is, $\rinte(\{x\})=\{x\}$ for all $x\in \mathbb{R}^n$.}
The closure of $C$ is denoted by $\clo(C)$. {It is formally defined \cite[p.44]{rock:70} as
$$ \clo(C):= \bigcap\left\{C+\varepsilon \mathbb{B}_2^n: \varepsilon>0\right\}.$$
The \emph{relative boundary} of $C$ 
is defined as $\partial C = \clo(C)\setminus\rinte(C)$ \cite[p.44]{rock:70}}. See \cite{rock:70} for additional details.

{Throughout paper we introduce several acronyms. In order to help the reader keep track of them, we include a list here.
\begin{center}
 \begin{tabular}{||c | c ||} 
 \hline
 Acronym & Explanation \\ [0.5ex] 
 \hline\hline
 DFT & Discrete Fourier Transform \\ 
 \hline
 ASC & Abstract Simplicial Complex \\
 \hline
 MASC & Maximum Abstract Simplicial Complex \\
 \hline
 MC  & Mutual Coherence \\
 \hline
 NUP & Nullspace Property \\
 \hline
 RIP & Restricted Isometry Property \\
 \hline
 GNUP & Generalized Nullspace Property \\
 \hline
 MRSL & Maximal  Recoverable  Sparsity  Level \\[1ex] 
 \hline
\end{tabular}
\end{center}
}

\subsection{Sparse Recovery Via $\ell_1$-Minimization}
A na\"ive approach to the recovery of $s$-sparse signals is to pose the $\ell_0$ optimization problem
\begin{equation}\label{eq:l0min}
    \min_{\Phi \bar{x} = \Phi x} \|x\|_0.
\end{equation}
However, it is well-known that \eqref{eq:l0min} is NP-hard to solve. Hence, the following $\ell_1$ convex relaxation is commonly studied
\begin{equation}\label{eq:l1relax}
    \min_{\Phi \bar{x} = \Phi x} \|x\|_1.
\end{equation}
The minimizer of \eqref{eq:l1relax} is unique and coincides with the original signal $\bar{x}$ if the dictionary $\Phi$ satisfies certain properties. Popular properties in the literature include the \emph{Restricted Isometry Property} (RIP) \cite{candes2005decoding,baraniuk2008simple}, \emph{Mutual Coherence} (MC) \cite{donoho2001uncertainty} and \emph{Nullspace Property} (NUP) \cite{cohen2009compressed}. Among these, the RIP and MC are sufficient conditions, whereas the NUP is a necessary and sufficient condition. In this paper we focus on the NUP.

\begin{defn}[Nullspace Property (NUP)]
A matrix $\Phi\in \mathbb{R}^{m\times n}$ satisfies the NUP
of order $s$ if and only if every $\eta\in \nullsp(\Phi)\setminus\{0\}$ and index set $S\subseteq \mathcal{U}_n$ with $|S|\leq s$ satisfy
$$\|\eta_S\|_1 < \|\eta_{S^{c}}\|_1.$$
\end{defn}

The connection between the NUP and exact sparse recovery is captured by the following result.

\begin{thm}\label{thm:NUPthm}{\cite[Thm.~4.5]{FoRa:13}}
Let $\Phi\in \mathbb{R}^{m\times n}$. Any $s$-sparse vector $\bar{x}\in \mathbb{R}^n$ is the unique solution to the optimization problem
\eqref{eq:l1relax}
if and only if the matrix $\Phi$ satisfies the NUP of order $s$. 
\end{thm}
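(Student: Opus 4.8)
The plan is to prove both implications directly, each by a short reduction to the reverse triangle inequality. The key observation I would exploit is that any feasible point of \eqref{eq:l1relax} differs from $\bar{x}$ by an element of $\nullsp(\Phi)$, so the entire analysis can be recast in terms of how the $\ell_1$-norm of a nullspace vector splits across a support set $S$ and its complement $S^c$. This is exactly the quantity the NUP controls, which is what makes the two conditions match up.

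For the ``if'' direction, suppose $\Phi$ satisfies the NUP of order $s$ and let $\bar{x}$ be $s$-sparse with support $S$, so that $|S|\leq s$. For any competing feasible point $x\neq\bar{x}$, I would set $\eta:=x-\bar{x}\in\nullsp(\Phi)\setminus\{0\}$ and split $\|x\|_1=\|\bar{x}_S+\eta_S\|_1+\|\eta_{S^c}\|_1$, using that $\bar{x}_{S^c}=0$. The reverse triangle inequality gives $\|\bar{x}_S+\eta_S\|_1\geq\|\bar{x}_S\|_1-\|\eta_S\|_1$, hence $\|x\|_1\geq\|\bar{x}\|_1+(\|\eta_{S^c}\|_1-\|\eta_S\|_1)$. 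The NUP makes the parenthesized quantity strictly positive, so $\|x\|_1>\|\bar{x}\|_1$, establishing that $\bar{x}$ is the unique minimizer.

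For the ``only if'' direction, I would build a witness from any fixed $\eta\in\nullsp(\Phi)\setminus\{0\}$ and index set $S$ with $|S|\leq s$. Define the $s$-sparse vector $\bar{x}:=\proj_S(\eta)$ and the competing point $x:=\bar{x}-\eta=-\proj_{S^c}(\eta)$. Since $\eta\in\nullsp(\Phi)$, we have $\Phi x=\Phi\bar{x}$, so $x$ is feasible and distinct from $\bar{x}$ because $\eta\neq0$. The hypothesis that $\bar{x}$ is the unique minimizer then forces $\|\bar{x}\|_1<\|x\|_1$, and since $\|\bar{x}\|_1=\|\eta_S\|_1$ and $\|x\|_1=\|\eta_{S^c}\|_1$, this is precisely the NUP inequality $\|\eta_S\|_1<\|\eta_{S^c}\|_1$.

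I expect no serious obstacle here, as the argument is elementary; the only place requiring genuine care is the converse, where the pair $(\bar{x},x)$ must be chosen so that $\bar{x}$ is $s$-sparse while $x-\bar{x}$ realizes the prescribed nullspace vector $\eta$. The decomposition $\eta=\proj_S(\eta)-\bigl(-\proj_{S^c}(\eta)\bigr)$ is exactly the device that converts the recovery hypothesis into the support/complement comparison. I would also verify that strictness propagates correctly in both directions: the strict inequality built into the NUP is what delivers \emph{uniqueness} in the forward direction, while conversely the assumed uniqueness of the minimizer is what yields the \emph{strict} NUP inequality rather than a weak one.
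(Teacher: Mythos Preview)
Your proposal is correct and matches the standard argument; the paper itself simply cites \cite[Thm.~4.5]{FoRa:13} for this result, but your two directions are essentially the same as the paper's own proof of the generalized version (Thm.~\ref{thm:genthm}), which it says follows from Thm.~\ref{thm:NUPthm} \emph{mutatis mutandis}. The only cosmetic differences are that the paper sets $\eta=\bar{x}-x$ rather than $x-\bar{x}$ and bounds $\|\bar{x}\|_1$ from above via the forward triangle inequality instead of bounding $\|x\|_1$ from below via the reverse triangle inequality---these are the same computation read in opposite directions.
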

\begin{proof}
{See the proof of \cite[Thm.~4.5]{FoRa:13}.}
\end{proof}

\section{Sparsity Patterns as Abstract Simplicial Complexes}
In this section we introduce the framework in which we study a generalization of the sparse recovery problem. We first introduce a generalization of the NUP, and then show how it naturally leads to the definition of the Maximum Abstract Simplicial Complex (MASC), which encapsulates the sparsity patterns that can be recovered via $\ell_1$-minimization.

\subsection{Generalized Nullspace Property}\label{sec:GNUP}
To generalize Thm.~\ref{thm:NUPthm}, we start by generalizing the NUP. For this purpose, we turn to property (P) and define an Abstract Simplicial Complex (ASC) as any collection of index sets that satisfies a generalization of property (P).

\begin{defn}[Abstract Simplicial Complex (ASC)]
Let $\Omega$ be a nonempty set.  A nonempty set $\mathcal{T}\subseteq 2^{\Omega}$ is called an ASC if and only if for any $S\in \mathcal{T}$ and $W\subseteq S$, we have $W\in \mathcal{T}$.
\end{defn}

It is easy to see that the collection of supports of all $s$-sparse, $s$-block-sparse and $s$-group-sparse signals each form an ASC. A less trivial sparsity pattern that can be associated with an ASC is found in \cite{PaSuEl:17}. There, the authors consider a \emph{convolutional sparse model} $y = Dx$, where the global dictionary $D\in \mathbb{R}^{N\times mN}$ is the concatenation of all shifted versions of a local dictionary $D_L\in \mathbb{R}^{m\times n}$. Due to the special structure of $D$, a predefined collection of patches (i.e. groups) $\{\Lambda_i\}$ on $y$ defines a collection of groups $\{G_i\}$ (with possible overlap) on $x$ through the relation $y=Dx$. Then, for a given signal $x\in \mathbb{R}^n$, they define the norm
$$\|x\|_{0,\infty} = \max_{i} \|x_{G_i}\|_0.$$
The authors are interested in the signals with $\|x\|_{0,\infty} \leq s$. It can be shown that the collection of the supports of such signals forms an ASC. To see this we note that
$$\|x\|_{0,\infty} \leq s \iff |\supprt(x)\cap G_i|\leq s \ \text{for all} \ i,$$
and that  
$\{S\subseteq \mathcal{U}_n: |S\cap G_i|\leq s \ \text{for all} \ i\}$
is an ASC.

We now use the ASC to state our generalization of the NUP.

\begin{defn}[Generalized Nullspace Property (GNUP)]\label{defn:GNUP}
Let $\mathcal{T}\subseteq 2^{\mathcal{U}_n}$. We say that a matrix $\Phi\in \mathbb{R}^{m\times n}$ satisfies the GNUP with respect to (w.r.t.) $\mathcal{T}$
if and only if every $\eta\in \nullsp(\Phi)\setminus\{0\}$ and set $S \in \mathcal{T}$ satisfy
$$\|\eta_S\|_1 < \|\eta_{S^{c}}\|_1.$$
\end{defn}

The GNUP in Defn.~\ref{defn:GNUP} is a \emph{generalization} of the NUP since the latter may be recovered by choosing
\begin{equation}\label{def:Ts}
\mathcal{T} = \mathcal{T}_s:= \{S \in 2^{\mathcal{U}_n}:  |S|\leq s\}
\end{equation}
so that $\mathcal{T}$ is an ASC and the GNUP is equivalent to the NUP.

We are now ready to generalize Thm.~\ref{thm:NUPthm} to the GNUP.

\begin{thm}\label{thm:genthm}
Let $\mathcal{T}\subseteq 2^{\mathcal{U}_n}$ be an ASC. Every $\bar{x}\in \mathbb{R}^n$ with $\supprt(\bar{x})\in \mathcal{T}$ is the unique solution to the optimization problem \eqref{eq:l1relax}
if and only if $\Phi$ satisfies the GNUP 
w.r.t. $\mathcal{T}$.
\end{thm}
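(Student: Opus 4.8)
The plan is to mirror the classical proof of Thm.~\ref{thm:NUPthm} (i.e.\ \cite[Thm.~4.5]{FoRa:13}), replacing the cardinality bound $|S|\le s$ by membership $S\in\mathcal{T}$ and invoking the defining downward-closure property of an ASC at the single place where it is needed. Both implications reduce to an elementary splitting of the $\ell_1$-norm over a support set $S$ and its complement $S^c$, combined with the GNUP inequality.

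For the sufficiency direction ($\Leftarrow$), I would fix $\bar{x}$ with $S:=\supprt(\bar{x})\in\mathcal{T}$, take any feasible $x\neq\bar{x}$ satisfying $\Phi x=\Phi\bar{x}$, and set $\eta:=x-\bar{x}\in\nullsp(\Phi)\setminus\{0\}$. Splitting the norm over $S$ and $S^c$, using $\bar{x}_{S^c}=0$, and applying the reverse triangle inequality on $S$ yields
$$\|x\|_1=\|(\bar{x}+\eta)_S\|_1+\|\eta_{S^c}\|_1\ge \|\bar{x}\|_1-\|\eta_S\|_1+\|\eta_{S^c}\|_1.$$
Since $S\in\mathcal{T}$ and $\eta\neq 0$, the GNUP gives $\|\eta_S\|_1<\|\eta_{S^c}\|_1$, hence $\|x\|_1>\|\bar{x}\|_1$, so $\bar{x}$ is the unique minimizer.

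For the necessity direction ($\Rightarrow$), I would fix an arbitrary $\eta\in\nullsp(\Phi)\setminus\{0\}$ and $S\in\mathcal{T}$ and construct the test signal $\bar{x}:=-\proj_S(\eta)$, so that $\bar{x}_S=-\eta_S$ and $\bar{x}_{S^c}=0$. The crucial point, and the only place the ASC hypothesis enters, is that $\supprt(\bar{x})\subseteq S\in\mathcal{T}$, so downward closure forces $\supprt(\bar{x})\in\mathcal{T}$ and the recovery hypothesis applies to $\bar{x}$. The perturbation $x:=\bar{x}+\eta$ is feasible (as $\Phi\eta=0$) and distinct from $\bar{x}$ (as $\eta\neq 0$), and a direct computation gives $x_S=0$ and $x_{S^c}=\eta_{S^c}$, whence $\|\bar{x}\|_1=\|\eta_S\|_1$ and $\|x\|_1=\|\eta_{S^c}\|_1$. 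Uniqueness of $\bar{x}$ then forces $\|\eta_S\|_1<\|\eta_{S^c}\|_1$, which is exactly the GNUP w.r.t.\ $\mathcal{T}$.

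The analytic content is routine: the reverse triangle inequality handles one direction and an explicit perturbation handles the other. The main obstacle is therefore conceptual rather than computational. In the classical setting the family $\{S:|S|\le s\}$ is automatically closed under taking subsets, a fact used silently when one passes from $S$ to $\supprt(-\proj_S(\eta))$. Here that automatic closure is unavailable, and the argument hinges on recognizing the ASC axiom as precisely the abstraction of property (P) needed to preserve this step. I would accordingly emphasize that the necessity direction would break down for a general $\mathcal{T}$ lacking downward closure, which is what motivates restricting attention to abstract simplicial complexes.
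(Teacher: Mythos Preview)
Your proposal is correct and follows essentially the same argument as the paper: the sufficiency direction splits the $\ell_1$-norm over $S$ and $S^c$ and applies the (reverse) triangle inequality plus GNUP, while the necessity direction constructs the test signal $\bar{x}=\pm\proj_S(\eta)$ and invokes the ASC downward-closure property to ensure $\supprt(\bar{x})\in\mathcal{T}$, exactly as the paper does. The only differences are cosmetic sign conventions (you take $\eta=x-\bar{x}$ where the paper takes $\eta=\bar{x}-x$, and you set $\bar{x}=-\proj_S(\eta)$ where the paper sets $\bar{x}=\proj_S(\eta)$), which do not affect the logic.
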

\begin{proof}
This theorem is a generalization of the classical Thm.~\ref{thm:NUPthm} whose proof follows mutatis mutandis.

[$\Rightarrow$] Let $\eta\in \nullsp(\Phi)$ be nonzero and $S\in \mathcal{T}$ so that 
$$0=\Phi \eta = \Phi(\proj_S(\eta)-(-\proj_{S^c}(\eta))),$$
which shows using $\bar{x} := \proj_S(\eta)$  that $\Phi \bar{x} = \Phi(-\proj_{S^c}(\eta))$.  Since $\supprt(\bar{x}) \subseteq S\in\mathcal{T}$ and $\mathcal{T}$ is an ASC, it follows that 
$\supprt(\bar{x})\in \mathcal{T}$.
Combining this with the hypothesis for the direction we are proving shows that $\bar{x}$ as the unique solution to \eqref{eq:l1relax}.
Hence, it follows that $\|\eta_S\|_1 = \|\proj_S(\eta)\|_1 = \|\bar{x}\|_1 <  \|-\proj_{S^c}(\eta)\|_1=\|\eta_{S^c}\|_1$, which completes the proof.

[$\Leftarrow$] Let $\bar{x}\in \mathbb{R}^n$ satisfy $S=\supprt(\bar{x}) \in \mathcal{T}$.  Then, let $x\in \mathbb{R}^n$ be any vector satisfying  $\Phi \bar{x} = \Phi x$  and $\bar{x}\neq x$. By setting $\eta = \bar{x}-x$ we see that  $\eta\in \nullsp(\Phi )\setminus\{0\}$ and that
\begin{alignat*}{3}
\|\bar{x}\|_1 &= \|\bar{x}-\proj_S (x) + \proj_S (x)\|_1 && ~\\
        &\leq \|\bar{x}-\proj_S (x)\|_1 + \|\proj_S (x)\|_1 && \quad\text{(triangle inequality)}\\
        &= \|\proj_S (\bar{x}-x)\|_1 + \|x_S\|_1 && \quad\text{(since $S=\supprt(\bar{x})$)}\\
        &= \|\eta_S\|_1 + \|x_S\|_1 && ~\\
        &< \|\eta_{S^c}\|_1 + \|x_S\|_1 && \quad\text{(use the GNUP)}\\
        &= \|-x_{S^c}\|_1 + \|x_S\|_1 && \quad\text{(since $S=\supprt(\bar{x})$)}\\
        &= \|x\|_1 &&\qquad
\end{alignat*}
which proves that $\bar{x}$ is the unique solution to~\eqref{eq:l1relax}.
\end{proof}

A useful alternative formulation of the GNUP is now stated.

\begin{lemma}\label{lem:GNUPreform}
Let $\mathcal{T}\subseteq 2^{\mathcal{U}_n}$. A matrix $\Phi\in \mathbb{R}^{m\times n}$ satisfies the GNUP w.r.t. $\mathcal{T}$ if and only if
\begin{equation}\label{eq:GNUPreform}
    \max_{S\in \mathcal{T}} \max_{\eta\in \nullsp(\Phi)\cap \mathbb{B}_1^n} \|\eta_S\|_1< \tfrac{1}{2}.
\end{equation}
\end{lemma}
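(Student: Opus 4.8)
The plan is to exploit the elementary decomposition $\|\eta\|_1 = \|\eta_S\|_1 + \|\eta_{S^c}\|_1$, which holds for every $\eta \in \mathbb{R}^n$ and every $S \subseteq \mathcal{U}_n$. With this identity in hand, the defining inequality of the GNUP, namely $\|\eta_S\|_1 < \|\eta_{S^c}\|_1$, is equivalent to $2\|\eta_S\|_1 < \|\eta\|_1$. This reformulation is exactly what ties the GNUP to the normalized quantity appearing in \eqref{eq:GNUPreform}, so the whole proof reduces to translating between a ``homogeneous'' statement over $\nullsp(\Phi)\setminus\{0\}$ and a ``normalized'' one over $\nullsp(\Phi)\cap \mathbb{B}_1^n$.

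For the forward implication, I would assume $\Phi$ satisfies the GNUP w.r.t. $\mathcal{T}$ and fix an arbitrary $S \in \mathcal{T}$ together with an arbitrary $\eta \in \nullsp(\Phi)\cap \mathbb{B}_1^n$. If $\eta = 0$ then $\|\eta_S\|_1 = 0 < \tfrac{1}{2}$ trivially; otherwise the GNUP gives $2\|\eta_S\|_1 < \|\eta\|_1 \leq 1$, whence $\|\eta_S\|_1 < \tfrac{1}{2}$. Thus every value $\|\eta_S\|_1$ entering \eqref{eq:GNUPreform} is strictly below $\tfrac{1}{2}$. To upgrade this to a strict bound on the \emph{maximum} itself, I would invoke attainment: the set $\nullsp(\Phi)\cap \mathbb{B}_1^n$ is compact (a closed subspace intersected with a compact ball), the map $\eta \mapsto \|\eta_S\|_1$ is continuous, and $\mathcal{T}\subseteq 2^{\mathcal{U}_n}$ is finite, so the double maximum in \eqref{eq:GNUPreform} is attained at some pair $(S^{*},\eta^{*})$, at which the value is already known to be $< \tfrac{1}{2}$.

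For the reverse implication, I would assume \eqref{eq:GNUPreform} and take any nonzero $\eta \in \nullsp(\Phi)$ and any $S \in \mathcal{T}$. Normalizing, the vector $\tilde{\eta} := \eta/\|\eta\|_1$ lies in $\nullsp(\Phi)\cap \mathbb{B}_1^n$ (in fact on $\mathbb{S}_1^{n-1}$), so \eqref{eq:GNUPreform} yields $\|\tilde{\eta}_S\|_1 < \tfrac{1}{2}$, i.e.\ $\|\eta_S\|_1 < \tfrac{1}{2}\|\eta\|_1$. Applying the decomposition identity once more rewrites this as $\|\eta_S\|_1 < \|\eta_{S^c}\|_1$, which is precisely the GNUP.

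The argument is essentially bookkeeping once the $\ell_1$ decomposition is available; the only genuinely delicate point is the forward direction, where one must not conflate a strict pointwise bound with a strict bound on a supremum. The compactness of $\nullsp(\Phi)\cap \mathbb{B}_1^n$ together with the finiteness of $\mathcal{T}$ guarantees the maximum is attained, and this is exactly what rules out the degenerate scenario in which the supremum equals $\tfrac{1}{2}$ without ever reaching it.
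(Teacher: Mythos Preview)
Your proof is correct and follows essentially the same approach as the paper: both use the decomposition $\|\eta\|_1 = \|\eta_S\|_1 + \|\eta_{S^c}\|_1$, normalize so that $\|\eta\|_1 = 1$, and observe that passing from the sphere $\mathbb{S}_1^{n-1}$ to the ball $\mathbb{B}_1^n$ does not change the maximum. Your treatment of the forward direction is in fact slightly more careful than the paper's, since you explicitly invoke compactness of $\nullsp(\Phi)\cap\mathbb{B}_1^n$ and finiteness of $\mathcal{T}$ to justify that the strict pointwise bound passes to a strict bound on the maximum.
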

\begin{proof}
By definition of the GNUP, the matrix $\Phi$ satisfies the GNUP w.r.t. $\mathcal{T}$ if and only if for all nonzero $\eta\in \nullsp(\Phi)$ and $S\in \mathcal{T}$ it holds that $\|\eta_S\|_1 < \|\eta_{S^c}\|_1$.
Without loss of generality, we can normalize $\eta$, and assume $\|\eta_S\|_1 + \|\eta_{S^c}\|_1 = \|\eta\|_1=1$. Then, $\Phi$ satisfies the GNUP w.r.t. $\mathcal{T}$ if and only if
$$\max_{S\in \mathcal{T}} \max_{\eta\in \nullsp(\Phi)\cap \mathbb{S}_1^{n-1}} \|\eta_S\|_1< \tfrac{1}{2}.$$
Now, since $\mathbb{S}^{n-1}_1$ is the boundary of $\mathbb{B}^n_1$ and including the interior of $\mathbb{B}^n_1$ in the feasible set does not change the solution, we obtain the desired result.
\end{proof}

\begin{rem}\label{rem:nscNP}
When the ASC is given by $\mathcal{T}=\mathcal{T}_s$ (see \eqref{def:Ts}),
i.e. the ASC is the collection of index sets of cardinality less than or equal to $s$, 
the result of the double maximization in \eqref{eq:GNUPreform} is called the nullspace constant, and is NP-hard to compute \cite{TiPf:14} in general. For this special case, we  denote the constant by 
\begin{equation}
\nsc(s, \Phi) := \max_{|S|\leq s} \max_{\eta\in \nullsp(\Phi)\cap \mathbb{B}_1^n} \|\eta_S\|_1.\label{eq:nsc}
\end{equation}
\end{rem}

\subsection{Maximum ASC Associated with a Matrix}\label{sec:maxASC}

If $\mathcal{T}_1$ and $\mathcal{T}_2$ are ASCs with $\mathcal{T}_1, \mathcal{T}_2 \subseteq 2^{\mathcal{U}_n}$ and $\Phi \in \mathbb{R}^{m\times n}$ satisfies the GNUP w.r.t. $\mathcal{T}_1$ and $\mathcal{T}_2$, then it follows from the definition of the GNUP that $\Phi$ also satisfies the GNUP w.r.t. $\mathcal{T}_1\cup \mathcal{T}_2$. This  observation implies that for any matrix $\Phi $, there is a maximum ASC for which $\Phi $ satisfies the GNUP. Since the cardinality of $\mathcal{U}_n$ is  finite, the maximum ASC can be defined as the union of all ASCs for which $\Phi $ satisfies the GNUP.

\begin{defn}[Maximum ASC (MASC)]\label{def:masc}
The union of every ASC in 
$2^{\mathcal{U}_n}$ for which $\Phi$ satisfies the GNUP is called the \emph{MASC associated with $\Phi$}, and is denoted by $\mathcal{T}_{\max}(\Phi)$.
\end{defn}

The next lemma is a consequence of Defn.~\ref{def:masc}.

\begin{lemma}\label{lem:GNUP-wrt-MASC}
Let $\Phi\in\mathbb{R}^{m\times n}$. The  following statements hold:
\begin{enumerate}
    \item[(i)] $\Phi$ satisfies the GNUP w.r.t.  $\mathcal{T}_{\max}(\Phi)$. 
    \item[(ii)] Every $\bar{x}\in \mathbb{R}^n$ with $\supprt(\bar{x})\in \mathcal{T}_{\max}(\Phi)$ is the unique solution to the optimization problem \eqref{eq:l1relax}.\label{lem:GNUP-wrt-MASC/statement:2}
\end{enumerate}
\end{lemma}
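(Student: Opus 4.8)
The plan is to assemble the lemma directly from the definition of $\mathcal{T}_{\max}(\Phi)$ together with the two facts already in hand: the union observation recorded just before Defn.~\ref{def:masc}, and Thm.~\ref{thm:genthm}. The two statements are proved in order, with part (ii) reducing to part (i) via Thm.~\ref{thm:genthm}.

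Before proving either part, I would establish a fact that is needed but only implicit in the name ``Maximum ASC'': the set $\mathcal{T}_{\max}(\Phi)$ is itself an ASC. Indeed, an arbitrary union of ASCs is again an ASC, because if $S$ lies in the union then $S$ belongs to at least one of the constituent ASCs $\mathcal{T}$, and then every $W\subseteq S$ also lies in $\mathcal{T}$ and hence in the union. This is precisely the property that lets Thm.~\ref{thm:genthm}, whose hypothesis requires an ASC, be applied to $\mathcal{T}_{\max}(\Phi)$ in the final step.

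For part (i) I would argue pointwise over index sets. Fix an arbitrary $S\in \mathcal{T}_{\max}(\Phi)$. By Defn.~\ref{def:masc}, $\mathcal{T}_{\max}(\Phi)$ is the union of all ASCs for which $\Phi$ satisfies the GNUP, so there is at least one such ASC $\mathcal{T}$ with $S\in\mathcal{T}$. Since $\Phi$ satisfies the GNUP w.r.t.\ this $\mathcal{T}$, the defining inequality $\|\eta_S\|_1<\|\eta_{S^c}\|_1$ holds for every nonzero $\eta\in\nullsp(\Phi)$. As $S$ was arbitrary, this inequality holds for every $S\in\mathcal{T}_{\max}(\Phi)$ and every nonzero $\eta\in\nullsp(\Phi)$, which is exactly the assertion that $\Phi$ satisfies the GNUP w.r.t.\ $\mathcal{T}_{\max}(\Phi)$. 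The finiteness of $\mathcal{U}_n$, which guarantees that the defining union is well-formed and that a genuine maximum exists, is used only implicitly.

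For part (ii), with part (i) and the ASC property of $\mathcal{T}_{\max}(\Phi)$ both established, I would simply invoke Thm.~\ref{thm:genthm} with $\mathcal{T}=\mathcal{T}_{\max}(\Phi)$: its ``if'' direction immediately yields that every $\bar{x}$ with $\supprt(\bar{x})\in\mathcal{T}_{\max}(\Phi)$ is the unique solution to \eqref{eq:l1relax}. There is no genuine obstacle in this lemma; the single point that deserves care is verifying that $\mathcal{T}_{\max}(\Phi)$ is an ASC, so that Thm.~\ref{thm:genthm} legitimately applies rather than being assumed tacitly from the terminology.
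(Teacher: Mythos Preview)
Your proposal is correct and follows essentially the same route as the paper: for part (i) you argue pointwise that each $S\in\mathcal{T}_{\max}(\Phi)$ lies in some constituent ASC for which the GNUP holds, and for part (ii) you invoke Thm.~\ref{thm:genthm} with $\mathcal{T}=\mathcal{T}_{\max}(\Phi)$. The only difference is that you explicitly verify $\mathcal{T}_{\max}(\Phi)$ is an ASC (needed for the hypothesis of Thm.~\ref{thm:genthm}), whereas the paper leaves this implicit in the terminology and the preceding discussion.
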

\begin{proof}
We first prove part (i).  Let $\eta\in \nullsp(\Phi)\setminus\{0\}$ and $S \in \mathcal{T}_{\max}(\Phi)$.  Then, it holds  from the definition of $\mathcal{T}_{\max}(\Phi)$ that there exists an ASC, say $\mathcal{T}$, that satisfies $S\in \mathcal{T} \subseteq \mathcal{T}_{\max}(\Phi)$ and that $\Phi$ satisfies the GNUP w.r.t. $\mathcal{T}$, which implies, since $S\in\mathcal{T}$, that
$\|\eta_S\|_1 < \|\eta_{S^{c}}\|_1$. This completes the proof.
    
Part (ii) follows by setting $\mathcal{T}=\mathcal{T}_{\max(\Phi)}$ in  Thm.~\ref{thm:genthm}, which is allowed because of  Lem.~\ref{lem:GNUP-wrt-MASC}(i).
\end{proof}

Since $\mathcal{T} = \{\emptyset\}$ is always an ASC, it follows that the MASC  always contains the empty set.  The next example shows that for some matrices, this may be the only set in the MASC.

\begin{exm}
Let $\Phi = \begin{bmatrix} 1 & -1 & 1\end{bmatrix}$ so that the $\nullsp(\Phi)$ is spanned by
$\eta_0:=\begin{bmatrix} 1 & 1 & 0\end{bmatrix}^T$ and $\eta_1:=\begin{bmatrix} 0 & 1 & 1\end{bmatrix}^T$. Hence, any vector $\eta\in \nullsp(\Phi)$ is of the form $\begin{bmatrix} \alpha & \alpha+\beta & \beta\end{bmatrix}$, where $\{\alpha,\beta\}\subset\mathbb{R}$. Then, it may be shown that for each nonempty set $S\subseteq \mathcal{U}_3$, there exists a nonzero $\eta\in \nullsp{(\Phi)}$ such that $\|\eta_S\|_1 \geq \|\eta_{S^c}\|_1 $.  This shows that if $\Phi$ satisfies the GNUP w.r.t. some ASC $\mathcal{T}$, then $\mathcal{T} = \{\emptyset\}$. It follows that $\mathcal{T}_{\max}(\Phi) = \{\emptyset\}$. 
\end{exm}

Lem.~\ref{lem:GNUP-wrt-MASC}(ii) says that any $\bar{x}\in \mathbb{R}^n$ with $\supprt(\bar{x})\in \mathcal{T}_{\max}(\Phi) $  is the unique solution to the optimization problem \eqref{eq:l1relax}. It is natural to ask whether the converse is true. That is, if $\bar{x}$ is the unique solution to the optimization problem \eqref{eq:l1relax}, then is it true that $\supprt(\bar{x}) \in \mathcal{T}_{\max}(\Phi)$? This converse statement is not necessarily true, and we direct the reader to {\cite[Cor.~4.29 \& Thm. 4.30]{FoRa:13}} for additional details on why it may fail.

Although some well-known sparsity patterns form ASCs, it is not true that every useful sparsity pattern is an ASC (e.g., see~\cite{HZM:11}). 
The next result shows that, for \emph{any} $\mathcal{T}\subseteq\mathcal{T}_{\max}(\Phi)$ ($\mathcal{T}$ need not be an ASC), each vector $\bar x$  whose support is in $\mathcal{T}$ is the unique solution to~\eqref{eq:l1relax}. This highlights the fact that the MASC is a special ASC.

\begin{thm}\label{thm:genthm2}
Let $\mathcal{T}\subseteq 2^{\mathcal{U}_n}$ and $\Phi\in\mathbb{R}^{m\times n}$.
Every $\bar{x}\in \mathbb{R}^n$ with $\supprt(\bar{x})\in \mathcal{T}$ is the unique solution to the optimization problem \eqref{eq:l1relax}
if and only if $\mathcal{T}\subseteq \mathcal{T}_{\max}(\Phi)$.
\end{thm}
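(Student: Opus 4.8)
The plan is to prove the two implications separately, with the direction [$\Leftarrow$] (assuming $\mathcal{T}\subseteq\mathcal{T}_{\max}(\Phi)$) being immediate and the converse [$\Rightarrow$] carrying all the work. For [$\Leftarrow$], if $\mathcal{T}\subseteq\mathcal{T}_{\max}(\Phi)$ and $\supprt(\bar{x})\in\mathcal{T}$, then $\supprt(\bar{x})\in\mathcal{T}_{\max}(\Phi)$, so Lem.~\ref{lem:GNUP-wrt-MASC}(ii) gives at once that $\bar{x}$ is the unique solution to \eqref{eq:l1relax}; nothing further is needed.

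For [$\Rightarrow$] I would first reduce the goal $\mathcal{T}\subseteq\mathcal{T}_{\max}(\Phi)$ to showing that $\Phi$ satisfies the GNUP w.r.t. $\mathcal{T}$ in the sense of Defn.~\ref{defn:GNUP} (which, crucially, does not require $\mathcal{T}$ to be an ASC). The reduction is this: fix any $S\in\mathcal{T}$ and consider the power set $2^S$, an ASC containing $S$. If the GNUP w.r.t. $\mathcal{T}$ holds, then for every $W\subseteq S$ and every nonzero $\eta\in\nullsp(\Phi)$, monotonicity of the $\ell_1$-norm under restriction yields $\|\eta_W\|_1\leq\|\eta_S\|_1<\|\eta_{S^c}\|_1\leq\|\eta_{W^c}\|_1$, so $\Phi$ satisfies the GNUP w.r.t. $2^S$. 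By Def.~\ref{def:masc} this gives $2^S\subseteq\mathcal{T}_{\max}(\Phi)$, hence $S\in\mathcal{T}_{\max}(\Phi)$. As $S\in\mathcal{T}$ was arbitrary, $\mathcal{T}\subseteq\mathcal{T}_{\max}(\Phi)$ follows.

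It remains to extract the GNUP w.r.t. $\mathcal{T}$ from the hypothesis, and here I expect the main obstacle. The forward argument of Thm.~\ref{thm:genthm} sets $\bar{x}:=\proj_S(\eta)$, whose support $T:=\supprt(\eta)\cap S$ may be a proper subset of $S$, and then needs $T\in\mathcal{T}$; that step exploited the ASC property of $\mathcal{T}$, which is now unavailable, so I cannot appeal to the hypothesis at the support $T$. To get around this I would perturb the test signal so that its support becomes exactly $S$. Concretely, fix nonzero $\eta\in\nullsp(\Phi)$ and $S\in\mathcal{T}$, set $T:=\supprt(\eta)\cap S$, choose any $\delta>0$, and define $\bar{x}$ by $\bar{x}_i=\eta_i$ for $i\in T$, $\bar{x}_i=\delta$ for $i\in S\setminus T$, and $\bar{x}_i=0$ for $i\in S^c$. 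Assigning the same nonzero value $\delta$ to every coordinate of $S\setminus T$ guarantees $\supprt(\bar{x})=S\in\mathcal{T}$, so by hypothesis $\bar{x}$ is the unique minimizer for the data $\Phi\bar{x}$.

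The competitor $x':=\bar{x}-\eta$ is feasible, since $\Phi x'=\Phi\bar{x}-\Phi\eta=\Phi\bar{x}$, and is distinct from $\bar{x}$ because $\eta\neq 0$; hence $\|\bar{x}\|_1<\|x'\|_1$. A direct computation shows $x'$ vanishes on $T$, equals $\delta$ on $S\setminus T$, and equals $-\eta$ on $S^c$, giving $\|\bar{x}\|_1=\|\eta_S\|_1+\delta\,|S\setminus T|$ and $\|x'\|_1=\delta\,|S\setminus T|+\|\eta_{S^c}\|_1$. The perturbation term $\delta\,|S\setminus T|$ cancels from both sides, leaving precisely $\|\eta_S\|_1<\|\eta_{S^c}\|_1$, i.e. the GNUP w.r.t. $\mathcal{T}$; note that the same computation covers the case $T=S$, where the perturbation is empty. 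The only delicate points I anticipate are verifying that $\supprt(\bar{x})$ is exactly $S$ (the reason the same $\delta$ is placed on all of $S\setminus T$) and observing that the cancellation makes the final strict inequality independent of $\delta$, so no limiting argument in $\delta$ is required.
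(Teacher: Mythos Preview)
Your proof is correct and follows essentially the same approach as the paper: both arguments reduce the question for $S\in\mathcal{T}$ to the ASC $2^S$ and establish the key inequality by perturbing $\proj_S(\eta)$ on $S\setminus\supprt(\eta)$ so that the resulting test vector has support exactly $S$. The only cosmetic difference is that the paper argues by contradiction (assuming some $S\notin\mathcal{T}_{\max}(\Phi)$ and producing a feasible competitor of no larger $\ell_1$-norm), whereas you give the contrapositive directly and obtain $\|\eta_S\|_1<\|\eta_{S^c}\|_1$ from the strict optimality of $\bar{x}$; your cancellation of the $\delta\,|S\setminus T|$ term makes this slightly cleaner than the paper's chain of inequalities.
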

\begin{proof}
We prove both directions of the implication in turn. 

[$\Leftarrow$]
Let $\mathcal{T}\subseteq \mathcal{T}_{\max}(\Phi)$ and $\bar x$ be so that $\supprt(\bar{x})\in\mathcal{T}\subseteq \mathcal{T}_{\max}(\Phi)$. The result now follows from Lem.~\ref{lem:GNUP-wrt-MASC}(ii).
    
[$\Rightarrow$] For a proof by contradiction,
let it hold that every $\bar{x}\in \mathbb{R}^n$ with $\supprt(\bar{x})\in \mathcal{T}$ is the unique solution to the optimization problem \eqref{eq:l1relax}, but yet
$\mathcal{T}\nsubseteq \mathcal{T}_{\max}(\Phi)$. 
Then, there exists $S\in \mathcal{T}$ such that $S\notin \mathcal{T}_{\max}(\Phi)$. The power set of $S$, namely $2^S$, is the smallest ASC that contains $S$. Since,  $S\notin \mathcal{T}_{\max}(\Phi)$, it holds that $2^S \nsubseteq \mathcal{T}_{\max}(\Phi)$. This means that there exists a nonzero $\eta\in \nullsp(\Phi)$ and $W\in 2^S$ satisfying $\|\eta_{W^c}\|_1 \leq \|\eta_{W}\|_1$, which implies that
\begin{equation}\label{eq:bad_ineq}
    \|\eta_{S^c}\|_1 \leq \|\eta_{S}\|_1.
\end{equation}
Using $0 = \Phi \eta = \Phi (\proj_S(\eta) + \proj_{S^c}(\eta))$, it follows that $\Phi \proj_S(\eta) = \Phi (-\proj_{S^c}(\eta))$. With $\bar \eta = \proj_S(\eta)$ and $\hat \eta = -\proj_{S^c}(\eta)$, it holds using~\eqref{eq:bad_ineq} that $\Phi\bar \eta = \Phi \hat \eta$ and $\|\hat \eta\|_1 \leq \|\bar \eta \|_1$. 
This  contradicts the above uniqueness of solutions if $\supprt(\bar \eta) = S$ since $S\in\mathcal{T}$.
Thus, for the remainder of the proof, we only consider the case $\supprt(\bar \eta) \subsetneqq S$.

Let $\tilde{\eta}\in \mathbb{R}^n$ be any vector satisfying
    (i) $\supprt(\tilde{\eta}) = S$,
    (ii) $\tilde{\eta}_k = \bar \eta_k$ if $k\in \supprt(\bar \eta)$, and
    (iii) $\tilde{\eta}_k$ is any arbitrary nonzero number for all $k\in S\setminus \supprt(\bar \eta)$.
Therefore, it holds that $\|\tilde{\eta} + \bar \eta\|_1 = \|\tilde{\eta} \|_1 + \| \bar \eta\|_1$ and $\supprt(\tilde{\eta} + \bar \eta) = S$. On the other hand, since $\supprt(\tilde{\eta})\cap \supprt(\hat\eta)= \emptyset$, we have $\|\tilde{\eta} + \hat \eta\|_1 = \|\tilde{\eta} \|_1 + \| \hat \eta\|_1$ as well. Note that $\bar{x}:=\tilde{\eta} + \bar \eta \neq \tilde{\eta} + \hat \eta$ because otherwise $\eta \equiv \bar \eta - \hat \eta =0$, which would be a contradiction. Combining these observations shows that
\begin{alignat*}{3}
&    \|\tilde{\eta} + \hat\eta\|_1  &&=  \|\tilde{\eta} \|_1 + \| \hat\eta\|_1 &&~\\
& ~                                  &&\leq \|\tilde{\eta} \|_1 + \| \bar\eta\|_1 && \quad\text{(using \eqref{eq:bad_ineq})}\\
&  ~                                 &&= \|\tilde{\eta} + \bar \eta\|_1.                 && ~
\end{alignat*}
It now follows that
$\Phi\bar{x} = \Phi(\tilde{\eta} +\bar\eta) = \Phi(\tilde{\eta} +\hat\eta)$ with $\supprt(\bar{x}) = S$, but yet $\|\tilde{\eta} +\hat \eta\|_1\leq \|\bar{x}\|_1$, which contradicts the uniqueness of the solutions since $S\in\mathcal{T}$.
\end{proof}

{As a consequence of Thm.~\ref{thm:genthm2}, we see that $\mathcal{T}_{\max}(\Phi)$ is the collection of all support sets $S$ for which $\ell_1$-minimization is always successful. In particular, it can be characterized as
$$\mathcal{T}_{\max}(\Phi) = \{S\subseteq \mathcal{U}_n: \|\eta_S\|_1 < \|\eta_{S^c}\|_1, \forall \eta\in \nullsp(\Phi)\setminus\{0\}\}.$$}
Finally, we note that the MASC  
can be used to obtain a lower bound for the probability of exact recovery.

\begin{prop}\label{prop:probbound}
Let $\mathcal{T}\subseteq 2^{\mathcal{U}_n}$ and  $\Phi\in\mathbb{R}^{m\times n}$.  Assume $\bar{x}\in\mathbb{R}^n$ is randomly chosen from some distribution. Let
$$
p
:= \mathbb{P}(\text{$\bar{x}$ is the unique solution to~\eqref{eq:l1relax}} \mid \supprt(\bar{x})\in \mathcal{T})
$$
and
$$
q:=\mathbb{P}(\supprt(\bar{x})\in \mathcal{T}_{\max}(\Phi) \mid \supprt(\bar{x})\in \mathcal{T}),
$$
Then, it follows that $p \geq q$.
\end{prop}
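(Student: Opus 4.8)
The plan is to reduce the inequality $p \ge q$ to a statement about the inclusion of two events and then invoke monotonicity of conditional probability. Introduce the events
$$
A := \{\bar{x}\text{ is the unique solution to~\eqref{eq:l1relax}}\},\quad B := \{\supprt(\bar{x})\in \mathcal{T}_{\max}(\Phi)\},\quad C := \{\supprt(\bar{x})\in \mathcal{T}\},
$$
so that by definition $p = \mathbb{P}(A\mid C)$ and $q = \mathbb{P}(B\mid C)$. Both quantities are conditioned on the \emph{same} event $C$, which is the structural feature that makes the comparison clean.

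The key step is to observe that $B \subseteq A$ as events. This is exactly the content of Lem.~\ref{lem:GNUP-wrt-MASC}(ii): whenever $\supprt(\bar{x})\in \mathcal{T}_{\max}(\Phi)$, the vector $\bar{x}$ is the unique solution to~\eqref{eq:l1relax}. Hence every realization in $B$ lies in $A$, and consequently $B\cap C \subseteq A\cap C$. Taking probabilities gives $\mathbb{P}(B\cap C)\le \mathbb{P}(A\cap C)$, and dividing through by $\mathbb{P}(C)$ yields $q = \mathbb{P}(B\mid C)\le \mathbb{P}(A\mid C) = p$, as desired.

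The only point requiring a word of care is that the conditional probabilities are well defined, i.e.\ that $\mathbb{P}(C)>0$; this is implicit in writing $p$ and $q$ at all, so I would simply note it as a standing assumption rather than treat it as a real difficulty. I do not anticipate a genuine obstacle here: the entire argument is a one-line event inclusion supplied by Lem.~\ref{lem:GNUP-wrt-MASC}(ii) followed by the monotonicity of the measure. The substantive work was done earlier in establishing that membership of the support in $\mathcal{T}_{\max}(\Phi)$ is \emph{sufficient} for exact recovery; this proposition merely packages that sufficiency into a probabilistic lower bound, with $\mathcal{T}_{\max}(\Phi)$ furnishing a certifiable event contained in the (possibly larger, and generally harder to characterize) success event $A$.
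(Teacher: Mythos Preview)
Your proof is correct and follows essentially the same approach as the paper: both reduce the inequality to the event inclusion $\{\supprt(\bar{x})\in\mathcal{T}_{\max}(\Phi)\}\subseteq\{\bar{x}\text{ is the unique solution to~\eqref{eq:l1relax}}\}$, justified by Lem.~\ref{lem:GNUP-wrt-MASC}(ii), and then invoke monotonicity of probability under the common conditioning event. Your version is simply a bit more explicit in naming the events and noting the well-definedness of the conditional probabilities.
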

\begin{proof}
Let $\supprt(\bar{x})\in \mathcal{T}$.  If  $\supprt(\bar{x})\in \mathcal{T}_{\max}(\Phi)$, then it follows from Lem.~\ref{lem:GNUP-wrt-MASC}(ii) 
that $\bar x$ is the unique solution to~\eqref{eq:l1relax}.  It follows from this fact that $p \geq q$, which proves the result.
\end{proof}

The significance of Prop.~\ref{prop:probbound} is that when  $\mathcal{T}$ and $\mathcal{T}_{\max}(\Phi)$ are known, it may be possible to efficiently compute 
$q$ even when $p$ cannot be efficiently calculated. We provide examples later when we consider special matrices in \S~\ref{ssec:inc_matr_exp} and \S~\ref{ssec:pDFT_exp}.

\section{Characterizations of the GNUP and the MASC}
\label{sec:char-gnup-and-masc} 
The definition of the GNUP involves a condition on \emph{all} nonzero vectors in the nullspace. Since this definition is complicated to use in practice, we seek alternative characterizations that provide new insights and computational advantages, which translate over to the MASC. In this section we provide two such characterizations: the first is based on extreme points and the second is based on vectors of minimal support. An advantage of the characterization based on extreme points is that it is geometrically intuitive, whereas the characterization based on the vectors of minimal support leverages existing results in the literature better. This latter point will become clear when we discuss special classes of matrices in \S~\ref{ssec:inc_matr_exp} and \S~\ref{ssec:pDFT_exp}.

\subsection{Characterization in Terms of Extreme Points}\label{sec:extp}
In this section we reformulate the GNUP and MASC in terms of the extreme points of $\nullsp(\Phi)\cap \mathbb{B}_1^n$.

\begin{lemma}\label{lem:GNUPreformExt}
Let $\mathcal{T}\subseteq 2^{\mathcal{U}_n}$. A matrix $\Phi\in \mathbb{R}^{m\times n}$ satisfies the GNUP w.r.t. $\mathcal{T}$ if and only if  $$\max_{S\in \mathcal{T}} \max_{z\in \Ext(\nullsp(\Phi)\cap \mathbb{B}_1^n)} \|z_S\|_1< \tfrac{1}{2}.$$
\end{lemma}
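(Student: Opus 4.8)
The goal is to reformulate Lemma~\ref{lem:GNUPreform}, which already characterizes the GNUP via $\max_{S\in\mathcal{T}}\max_{\eta\in\nullsp(\Phi)\cap\mathbb{B}_1^n}\|\eta_S\|_1<\tfrac12$, by replacing the inner maximization over the whole convex body $\nullsp(\Phi)\cap\mathbb{B}_1^n$ with a maximization over only its extreme points. My plan is to take Lemma~\ref{lem:GNUPreform} as the starting point, so that it suffices to prove the single equality
\begin{equation*}
\max_{\eta\in\nullsp(\Phi)\cap\mathbb{B}_1^n}\|\eta_S\|_1=\max_{z\in\Ext(\nullsp(\Phi)\cap\mathbb{B}_1^n)}\|z_S\|_1
\end{equation*}
for each fixed $S\in\mathcal{T}$; taking the outer maximum over $S\in\mathcal{T}$ on both sides then yields the claim.

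To establish this equality I would observe that the set $C:=\nullsp(\Phi)\cap\mathbb{B}_1^n$ is compact (a closed bounded subset of $\mathbb{R}^n$, being the intersection of a subspace with the unit $\ell_1$-ball) and convex. The key structural fact is that $C$ is a polytope: it is the intersection of the polyhedral set $\mathbb{B}_1^n$ with the subspace $\nullsp(\Phi)$, hence bounded and polyhedral, so by the Minkowski--Weyl theorem it equals the convex hull of its finitely many extreme points, $C=\mathrm{conv}(\Ext(C))$. (Compactness alone, via Minkowski's theorem, would also suffice here and avoids relying on polyhedrality.) The map $\eta\mapsto\|\eta_S\|_1$ is a convex function, being a composition of the coordinate projection $\proj_S$ with the convex $\ell_1$-norm. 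A convex function on a compact convex set attains its maximum at an extreme point: for any $\eta\in C$, write $\eta=\sum_i\lambda_i z_i$ as a convex combination of extreme points $z_i\in\Ext(C)$, and then by convexity $\|\eta_S\|_1\le\sum_i\lambda_i\|(z_i)_S\|_1\le\max_i\|(z_i)_S\|_1$. This shows the left-hand maximum is bounded above by the right-hand one. The reverse inequality is immediate since $\Ext(C)\subseteq C$.

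I do not expect a serious obstacle here, as the result is essentially an instance of the standard maximum principle for convex functions over polytopes. The only point requiring a little care is confirming that the inner maximum is actually attained and that $C$ indeed has finitely many (or at least that the supremum is achieved at an) extreme point; this is handled by noting that $C$ is nonempty (it contains $0$), compact, and convex, so both the maximization problem and the Krein--Milman/Minkowski representation are valid. A minor edge case worth a remark is when $\nullsp(\Phi)=\{0\}$, in which case $C=\{0\}$, its only extreme point is $0$, and both sides equal $0$; the equality then holds trivially and the GNUP condition is vacuously consistent with the statement.

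In summary, the proof reduces to two ingredients I would assemble in order: first, cite Lemma~\ref{lem:GNUPreform} to recast the GNUP as the strict inequality over $C$; second, invoke the convexity of $\eta\mapsto\|\eta_S\|_1$ together with the extreme-point representation of the compact convex (indeed polyhedral) set $C$ to move the maximization to $\Ext(C)$. The main work is the convex-combination estimate above, which is a one-line computation once the setup is in place, so I anticipate the argument being short and the principal care going into correctly justifying that $C$ is a compact convex set with a well-defined extreme-point structure.
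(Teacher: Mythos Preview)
Your proposal is correct and follows essentially the same approach as the paper: start from Lemma~\ref{lem:GNUPreform}, then use that $\eta\mapsto\|\eta_S\|_1$ is a convex function maximized over the nonempty compact convex set $\nullsp(\Phi)\cap\mathbb{B}_1^n$, so its maximum is attained at an extreme point. The paper's proof is terser (it simply cites this maximum principle), while you spell out the convex-combination estimate and the $\nullsp(\Phi)=\{0\}$ edge case, but the underlying argument is the same.
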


\begin{proof}
For each $S\in \mathcal{T}$, the problem
$\max_{z\in \nullsp(\Phi)\cap \mathbb{B}_1^n} \|z_s\|_1$ in~\eqref{eq:GNUPreform} 
is the maximization of a convex function
over the nonempty, compact, and convex set $\nullsp(\Phi)\cap \mathbb{B}_1^n$. Hence, the maximum is attained at an extreme point \cite{rock:70}. Thus, it is enough to maximize only over $\Ext(\nullsp(\Phi)\cap \mathbb{B}_1^n)$, which with Lem.~\ref{lem:GNUPreform} gives the result.
\end{proof}

Lem.~\ref{lem:GNUPreformExt} allows for a clearer description of 
the MASC associated with a matrix $\Phi$ in terms of $\Ext(\nullsp(\Phi) \cap \mathbb{B}_1^n$).

\begin{prop}\label{prop:maxASC}
For any matrix $\Phi\in \mathbb{R}^{m\times n}$, we have
\begin{align*}
\mathcal{T}_{\max}(\Phi) = \{S\subseteq \mathcal{U}_n : 
    &\|z_S\|_1< \tfrac{1}{2} \ 
    \forall z \in\Ext(\nullsp(\Phi) \cap \mathbb{B}_1^n)\}. 
\end{align*}
\end{prop}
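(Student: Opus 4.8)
The plan is to prove the two inclusions at once by identifying the right-hand side, call it
$\mathcal{T}^{\star}:=\{S\subseteq \mathcal{U}_n : \|z_S\|_1< \tfrac{1}{2}\ \forall z\in \Ext(\nullsp(\Phi)\cap \mathbb{B}_1^n)\}$,
as an ASC on which $\Phi$ satisfies the GNUP, and then to invoke the definition of the MASC (Defn.~\ref{def:masc}) as the union of all such ASCs. The first thing I would check is that $\mathcal{T}^{\star}$ is indeed an ASC. This is immediate from monotonicity of the restriction: if $S\in\mathcal{T}^{\star}$ and $W\subseteq S$, then $\|z_W\|_1\le \|z_S\|_1<\tfrac{1}{2}$ for every extreme point $z$, so $W\in\mathcal{T}^{\star}$.

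The key step is to establish, for an arbitrary ASC $\mathcal{T}$, the equivalence: $\Phi$ satisfies the GNUP w.r.t.\ $\mathcal{T}$ if and only if $\mathcal{T}\subseteq\mathcal{T}^{\star}$. By Lem.~\ref{lem:GNUPreformExt}, the GNUP w.r.t.\ $\mathcal{T}$ is equivalent to $\max_{S\in\mathcal{T}}\max_{z\in\Ext(\nullsp(\Phi)\cap\mathbb{B}_1^n)}\|z_S\|_1<\tfrac{1}{2}$. For the ``if'' direction, assuming $\mathcal{T}\subseteq\mathcal{T}^{\star}$, every pair $(S,z)$ with $S\in\mathcal{T}$ and $z$ an extreme point satisfies $\|z_S\|_1<\tfrac{1}{2}$; since there are only finitely many such pairs, the double maximum is attained and stays strictly below $\tfrac{1}{2}$, so the GNUP holds. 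For the ``only if'' direction, the GNUP forces $\|z_S\|_1<\tfrac{1}{2}$ for every $S\in\mathcal{T}$ and every extreme point $z$, so each $S\in\mathcal{T}$ belongs to $\mathcal{T}^{\star}$, i.e.\ $\mathcal{T}\subseteq\mathcal{T}^{\star}$.

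Finally I would assemble the result: by Defn.~\ref{def:masc}, $\mathcal{T}_{\max}(\Phi)$ is the union of all ASCs for which $\Phi$ satisfies the GNUP, which by the equivalence above is the union of all ASCs contained in $\mathcal{T}^{\star}$. Because $\mathcal{T}^{\star}$ is itself an ASC, this union equals $\mathcal{T}^{\star}$, yielding $\mathcal{T}_{\max}(\Phi)=\mathcal{T}^{\star}$, as claimed.

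The one step requiring genuine care, and the main obstacle, is the finiteness argument that upgrades the \emph{pointwise} strict inequalities defining $\mathcal{T}^{\star}$ into the \emph{uniform} strict bound demanded by Lem.~\ref{lem:GNUPreformExt}: in general a supremum of quantities each strictly below $\tfrac{1}{2}$ need not itself remain below $\tfrac{1}{2}$. The fix is to observe that $\nullsp(\Phi)\cap\mathbb{B}_1^n$ is a bounded polyhedron, hence a polytope with only finitely many extreme points, and that $\mathcal{U}_n$ (so $2^{\mathcal{U}_n}$, and any $\mathcal{T}\subseteq 2^{\mathcal{U}_n}$) is finite, so the relevant maximum ranges over a finite set and is attained.
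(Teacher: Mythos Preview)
Your proof is correct and follows essentially the same route as the paper: both hinge on Lem.~\ref{lem:GNUPreformExt}, on checking that the right-hand side is an ASC, and on the maximality in Defn.~\ref{def:masc}. The paper argues the two inclusions directly (using Lem.~\ref{lem:GNUP-wrt-MASC}(i) for $\subseteq$), whereas you package the same content as the equivalence ``GNUP w.r.t.\ $\mathcal{T}$ $\Leftrightarrow$ $\mathcal{T}\subseteq\mathcal{T}^{\star}$'' and then take the union; your explicit finiteness remark (polytope $\Rightarrow$ finitely many extreme points, $2^{\mathcal{U}_n}$ finite) is a point the paper leaves implicit.
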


\begin{proof}
We prove that each set is contained in the other.

[$\subseteq $] 
It follows from Lem.~\ref{lem:GNUP-wrt-MASC}(i) and Lem.~\ref{lem:GNUPreformExt} that
$\|z_S\|< 1/2$ for all $z\in \Ext(\nullsp(\Phi)\cap \mathbb{B}_1)$ and all $S\in \mathcal{T}_{\max}(\Phi)$, which establishes that this inclusion holds.

[$\supseteq$] It is easy to see that the right-hand side is an ASC, and then Lem.~\ref{lem:GNUPreformExt} shows that $\Phi$ satisfies the GNUP w.r.t. the right-hand side.
Therefore, by the definition of $\mathcal{T}_{\max}(\Phi)$, it must hold that the right-hand side is contained in $\mathcal{T}_{\max}(\Phi)$.
\end{proof}

The extreme points of $\nullsp(\Phi)\cap \mathbb{B}^n_1$ characterize the GNUP and MASC associated with a dictionary $\Phi$. Although this characterization has a clear geometric interpretation, the computation of extreme points is non-trivial.  Therefore, in the next section we provide an alternative mathematical characterization based on vectors of minimal support that can be more easily described, especially for certain classes of matrices.

\subsection{Characterization in Terms of Vectors of Minimal Support}\label{sec:minsupp}

We now delve deeper into the properties of the extreme points of $\nullsp(\Phi)\cap \mathbb{B}^n_1$.  In particular, our aim is to associate them with \emph{vectors of minimal support},\footnote{In \cite{rock:70}, vectors of minimal support are called \emph{elementary vectors}.} which we now define.

\begin{defn}[Vectors of minimal support]
Let $\mathcal{V}\subseteq \mathbb{R}^n$ be a subspace and $x\in\mathcal{V}$ be a nonzero vector. We say that $x$ has \emph{minimal support in $\mathcal{V}$} if and only if there does not exist a nonzero $\bar{x}\in \mathcal{V}$ whose support is contained in, but not equal to, the support of $x$, i.e., $\supprt(\bar{x})\subsetneqq \supprt(x)$. We denote the collection of all such vectors in $\mathcal{V}$ by $\MinSupp(\mathcal{V})$.
\end{defn}

To illustrate the concept of vectors of minimal support, let us consider the following example.

\begin{exm}
Let $\mathcal{V}\subset \mathbb{R}^3$ be the subspace defined by 
\begin{equation}\label{eq:subspace}
    x_0+x_1+x_2 =0.
\end{equation}
Let $x\in \MinSupp(\mathcal{V})$. By definition, $x$ is nonzero. Moreover, $\|x\|_0 >1$ since otherwise \eqref{eq:subspace} cannot be satisfied. On the other hand, $\|x\|_0\neq 3$ because there exist $2$-sparse vectors in $\mathcal{V}$ whose supports are obviously contained in $\{0,1,2\}$. 
Therefore, $\MinSupp(\mathcal{V})$ contains exactly $2$-sparse vectors, i.e., 
$$
\MinSupp(\mathcal{V}) = \{x\in \mathbb{R}^3: \|x\|_0=2 \text{ and }x_0+x_1+x_2 =0\}.
$$
\end{exm}

Our next goal is to establish that the sets $\Ext(\mathcal{V}\cap\mathbb{B}_1^n)$ and $\MinSupp(\mathcal{V}) \cap\mathbb{S}_1^{n-1}$ are equal. To prove this result, we must first motivate and prove an auxiliary result. Specifically, let
\begin{equation}
\Delta_{{e}} := \{x\in \mathbb{S}^{n-1}_1: \langle {e}, x\rangle =1\}
\end{equation}
be a (closed and convex) simplex defined by a nonzero \emph{sign vector} ${e}\in \{0,1,-1\}^n\subset \mathbb{R}^n$. Then, $\mathbb{S}_1^{n-1}$ can be written as a disjoint union of the relative interiors of such simplexes, i.e.,
\begin{equation}\label{eq:S1decomp}
\mathbb{S}^{n-1}_1 \equiv \bigcup^{\circ}_{{e}} \rinte(\Delta_{{e}}).
\end{equation}
We can use \eqref{eq:S1decomp} to give a characterization of $\Ext(\mathcal{V}\cap \mathbb{B}_1^n)$.

\begin{lemma}\label{lem:extp}
Let $\mathcal{V}\subseteq \mathbb{R}^n$ be a subspace of dimension $d \geq 1$, and define its codimension as $r := n -d$. The following hold. 
\begin{itemize}
    \item[(i)] $z\in \Ext(\mathcal{V}\cap \mathbb{B}_1^n)$ if and only if $\{z\} = \mathcal{V}\cap \rinte(\Delta_{{e}}) $ for some nonzero sign vector ${e}$.
    \item[(ii)] 
    If $z\in\Ext( \mathcal{V}\cap \mathbb{B}_1^n)$, then $z$ is  $(r+1)$-sparse.
\end{itemize}
\end{lemma}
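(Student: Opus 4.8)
The plan is to reduce both parts to a single geometric fact: every extreme point is nonzero, hence lies on $\mathbb{S}_1^{n-1}$, and by the decomposition~\eqref{eq:S1decomp} it lies in exactly one relatively open simplex $\rinte(\Delta_{e})$, namely the one with $e=\sign(z)$. Part~(i) then amounts to the statement that a boundary point $z\in \mathcal{V}\cap\mathbb{B}_1^n$ is extreme if and only if it is the \emph{only} point of $\mathcal{V}$ in its relatively open face $\rinte(\Delta_{\sign(z)})$. First I would dispose of the origin: since $d\geq 1$ there is a nonzero $v\in\mathcal{V}$ which, scaled to lie in $\mathbb{B}_1^n$, gives $0=\tfrac12 v+\tfrac12(-v)$, so $0$ is never extreme and every extreme point lies on the sphere with a well-defined nonzero sign vector $e=\sign(z)$, for which $z\in\rinte(\Delta_{e})$.

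For part~(i) I would prove the contrapositive in each direction using the same ``relative-interior wiggle.'' Suppose $z$ is not the unique point of $\mathcal{V}\cap\rinte(\Delta_{e})$, say $w\neq z$ also lies there; since $z\in\rinte(\Delta_{e})$ I can take a small step away from $w$, setting $z'=z+\varepsilon(z-w)\in\rinte(\Delta_{e})\subseteq\mathbb{B}_1^n$ for small $\varepsilon>0$, with $z'\in\mathcal{V}$; then $z=\tfrac{1}{1+\varepsilon}z'+\tfrac{\varepsilon}{1+\varepsilon}w$ exhibits $z$ as a nontrivial convex combination of distinct points of $\mathcal{V}\cap\mathbb{B}_1^n$, so $z$ is not extreme. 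Conversely, suppose $z=\tfrac12(a+b)$ with $a\neq b$ in $\mathcal{V}\cap\mathbb{B}_1^n$. Here the only subtlety is that the $\ell_1$-sphere is not strictly convex, so I first locate $a,b$ on the face $\Delta_{e}$: from $\langle e,a\rangle\leq\|a\|_1\leq 1$, $\langle e,b\rangle\leq\|b\|_1\leq 1$ and $\langle e,z\rangle=1$, averaging forces $\langle e,a\rangle=\langle e,b\rangle=1$ and $\|a\|_1=\|b\|_1=1$, i.e.\ $a,b\in\Delta_{e}$. Consequently $b-a$ is a direction in the linear space of $\aff(\Delta_{e})$, and since $z\in\rinte(\Delta_{e})$ the points $z+t(b-a)$ stay in $\mathcal{V}\cap\rinte(\Delta_{e})$ for all small $t$; as $a\neq b$ this contradicts uniqueness of $z$. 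Combining the two directions yields~(i).

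For part~(ii) I would use~(i) together with a dimension count. Writing $e=\sign(z)$ and $k=\|z\|_0=|\supprt(e)|$, the affine hull $\aff(\Delta_{e})=\{x:\supprt(x)\subseteq\supprt(e),\ \langle e,x\rangle=1\}$ has dimension $k-1$, with direction space $L$ of the same dimension. I would first upgrade $\mathcal{V}\cap\rinte(\Delta_{e})=\{z\}$ (from~(i)) to $\mathcal{V}\cap\aff(\Delta_{e})=\{z\}$: any second point $w\in\mathcal{V}\cap\aff(\Delta_{e})$ would, by the same wiggle, force $z+t(w-z)\in\mathcal{V}\cap\rinte(\Delta_{e})$ for small $t>0$, contradicting uniqueness. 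Since $z\in\mathcal{V}$, this single-point intersection is equivalent to $\mathcal{V}\cap L=\{0\}$, whence $\dim\mathcal{V}+\dim L=d+(k-1)\leq n$, giving $k\leq n-d+1=r+1$; that is, $z$ is $(r+1)$-sparse.

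The main obstacle is the converse half of~(i): because $\mathbb{S}_1^{n-1}$ is not strictly convex, the naive hope that a representation $z=\tfrac12(a+b)$ forces $a,b\in\rinte(\Delta_{e})$ fails, since an endpoint may sit on the relative boundary of $\Delta_{e}$ (its midpoint can still have all barycentric coordinates positive). The fix is to argue with the whole segment rather than its endpoints: the inner-product/sign computation pins $a,b$ to the closed face $\Delta_{e}$, and then the relative-interior wiggle along the direction $b-a$ produces an entire subsegment inside $\mathcal{V}\cap\rinte(\Delta_{e})$, contradicting the single-point hypothesis. Everything else is routine relative-interior calculus together with the dimension formula for intersecting affine subspaces.
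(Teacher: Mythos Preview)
Your proof is correct and follows essentially the same geometric route as the paper: locate each extreme point in a unique relatively open face $\rinte(\Delta_e)$ via~\eqref{eq:S1decomp}, show extremality is equivalent to the intersection $\mathcal{V}\cap\rinte(\Delta_e)$ being a single point, and then do a dimension count for part~(ii). The one notable difference is in the converse direction of~(i): where you run a direct contrapositive (pin $a,b$ to the closed face $\Delta_e$ and wiggle along $b-a$), the paper instead observes that $z$ is the \emph{unique} maximizer of the linear functional $\langle e,\cdot\rangle$ over the compact convex set $\mathcal{V}\cap\mathbb{B}_1^n$, which immediately forces $z$ to be extreme. Your argument is more self-contained and makes the face geometry explicit; the paper's shortcut is slicker but tacitly uses that any other maximizer $w\in\mathcal{V}\cap\Delta_e$ would yield $\tfrac12(z+w)\in\mathcal{V}\cap\rinte(\Delta_e)=\{z\}$. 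For part~(ii) the arguments are again the same in spirit: your explicit upgrade to $\mathcal{V}\cap\aff(\Delta_e)=\{z\}$ and the ensuing inequality $d+(k-1)\le n$ is exactly the dimension count the paper invokes more tersely.
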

\begin{proof}
We first prove part (i).  For the only if direction, let $z\in \Ext(\mathcal{V}\cap \mathbb{B}_1^n)$, which means that $z\in \mathcal{V}\cap \mathbb{S}_1^n$ because $d \geq 1$.
Since it follows from  \eqref{eq:S1decomp} that
\begin{equation}\label{eq:VrinteSimp}
    \mathcal{V}\cap \mathbb{S}^{n-1}_1 
    = \bigcup^{\circ}_{{e}} \left(\mathcal{V}\cap \rinte(\Delta_{{e}})\right),
\end{equation}
we know that there exists a unique nonzero sign vector ${e}$ such that $z\in \mathcal{V}\cap \rinte(\Delta_{{e}})$.
If $\dim(\mathcal{V}\cap \rinte(\Delta_{{e}})) > 0$, then any point in $\mathcal{V}\cap \rinte(\Delta_{{e}})$ can be written as a nontrivial convex combination of two distinct points in $\mathcal{V}\cap \rinte(\Delta_{{e}})$, which is a contradiction. 
Therefore, it must be that $\dim(\mathcal{V}\cap \rinte(\Delta_{{e}})) = 0$ and that $\{z\} = \mathcal{V}\cap \rinte(\Delta_{{e}})$, as claimed.

For the converse direction, let $z$ be such that $\{z\} = \mathcal{V}\cap \rinte(\Delta_{{e}})$ for some nonzero sign vector ${e}$. Since  $z$ is the unique maximizer of the linear functional $\langle{e}, \cdot\rangle$ on the compact polyhedral set $\mathcal{V}\cap \mathbb{B}_1^n$, we must conclude that $z\in \Ext(\mathcal{V}\cap \mathbb{B}_1^n)$.

We now prove part (ii).
Let $z\in \Ext(\mathcal{V}\cap\mathbb{B}_1^n)$ so that from part (i) we have $\{z\}= \mathcal{V}\cap \rinte(\Delta_{{e}})$ for some nonzero sign vector ${e}$.
Since $z\in \rinte(\Delta_{{e}})$, we necessarily have
\begin{equation}\label{eq:sparsityExtPt}
    \|z\|_0 = \|{e}\|_0 = \dim(\Delta_{{e}})+1.
\end{equation}
Now, to reach a contradiction, suppose that $\dim(\Delta_{{e}}) > r$. Combining this with the fact that 
$\mathcal{V}\cap \rinte(\Delta_{{e}}) \neq \emptyset$, shows necessarily that $\dim(\mathcal{V}\cap \rinte(\Delta_{{e}})) >0$, which is a contradiction.  We have shown that $\dim(\Delta_{{e}}) \leq r$, which combined with \eqref{eq:sparsityExtPt} gives the desired result.
\end{proof}

We now prove the equivalence between  $\Ext(\mathcal{V}\cap \mathbb{B}_1^n)$ and the intersection of the minimal support vectors of $\mathcal{V}$ and $\mathbb{S}^{n-1}_1$. Fig.~\ref{fig:l1_ball_inters_oblique_plane_3D} illustrates this equivalence for a subspace $\mathcal{V}$ of $\mathbb{R}^3$.

\begin{prop}\label{prop:minSuppExt}
Let $\mathcal{V}\subseteq \mathbb{R}^n$ be a nontrivial subspace. Then, $\MinSupp(\mathcal{V}) \cap \mathbb{S}^{n-1}_1= \Ext(\mathcal{V}\cap \mathbb{B}^{n}_1)$.
\end{prop}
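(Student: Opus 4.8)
The plan is to prove the set equality by establishing the two inclusions separately, using Lem.~\ref{lem:extp} as the main tool. Recall that that lemma characterizes $z\in\Ext(\mathcal{V}\cap\mathbb{B}_1^n)$ as exactly those points for which $\{z\}=\mathcal{V}\cap\rinte(\Delta_{e})$ for some nonzero sign vector $e$, and its proof already shows that every extreme point lies on $\mathbb{S}_1^{n-1}$ (since $\dim\mathcal{V}\geq 1$). Throughout, for a point $x$ on the sphere I write $e=\sign(x)$ for the sign vector recording the signs of the entries of $x$, so that $x\in\rinte(\Delta_{e})$ and $\supprt(e)=\supprt(x)$.

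First I would prove $\Ext(\mathcal{V}\cap\mathbb{B}_1^n)\subseteq\MinSupp(\mathcal{V})\cap\mathbb{S}_1^{n-1}$ by contradiction. Suppose $z\in\Ext(\mathcal{V}\cap\mathbb{B}_1^n)$ fails to have minimal support, so there is a nonzero $\bar{x}\in\mathcal{V}$ with $\supprt(\bar{x})\subsetneqq\supprt(z)$. With $e=\sign(z)$, the subtlety is that $\bar{x}$ need not lie in the hyperplane $\langle e,\cdot\rangle=1$ defining $\Delta_{e}$, so perturbing along $\pm\bar{x}$ may leave $\Delta_{e}$. I would remove this by replacing $\bar{x}$ with $w:=\bar{x}-\langle e,\bar{x}\rangle z$, which satisfies $w\in\mathcal{V}$, $\langle e,w\rangle=0$, $\supprt(w)\subseteq\supprt(z)$, and $w\neq 0$ (otherwise $\bar{x}$ would be a nonzero multiple of $z$, contradicting $\supprt(\bar{x})\subsetneqq\supprt(z)$). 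For $\varepsilon>0$ small enough, $z\pm\varepsilon w$ retains the sign pattern and support of $z$, hence lies in $\rinte(\Delta_{e})$, and these two points are distinct and contained in $\mathcal{V}$. This contradicts $\{z\}=\mathcal{V}\cap\rinte(\Delta_{e})$ from Lem.~\ref{lem:extp}(i).

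For the reverse inclusion $\MinSupp(\mathcal{V})\cap\mathbb{S}_1^{n-1}\subseteq\Ext(\mathcal{V}\cap\mathbb{B}_1^n)$, the engine is an auxiliary fact: if $z$ has minimal support in $\mathcal{V}$, then $\{v\in\mathcal{V}:\supprt(v)\subseteq\supprt(z)\}=\mathrm{span}\{z\}$. I would prove this by taking any such nonzero $v$, fixing an index $j\in\supprt(z)$, and observing that $v-(v_j/z_j)z\in\mathcal{V}$ is supported strictly inside $\supprt(z)$ (the index $j$ is annihilated), so minimality forces it to vanish, i.e. $v$ is a scalar multiple of $z$. Given $z\in\MinSupp(\mathcal{V})$ with $\|z\|_1=1$, I would then verify $\{z\}=\mathcal{V}\cap\rinte(\Delta_{e})$ for $e=\sign(z)$: any $y$ in that set has $\supprt(y)=\supprt(z)$ and $\langle e,y\rangle=1$, so $y-z$ is supported in $\supprt(z)$ and hence equals $\lambda z$ by the auxiliary fact; then $\|y\|_1=1$ forces $\lambda\in\{0,-2\}$, and $\lambda=-2$ is excluded because $y=-z$ has the opposite sign pattern and so cannot lie in $\rinte(\Delta_{e})$. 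Thus $y=z$, and Lem.~\ref{lem:extp}(i) yields $z\in\Ext(\mathcal{V}\cap\mathbb{B}_1^n)$.

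I expect the main obstacle to be the first inclusion: the naive perturbation $z\pm\varepsilon\bar{x}$ does not stay on the simplex $\Delta_{e}$, and the remedy requires projecting $\bar{x}$ along $z$ to land in the correct hyperplane while simultaneously preserving both the support containment $\supprt(w)\subseteq\supprt(z)$ and the nonvanishing of the perturbation direction. Once $w$ is correctly constructed, both inclusions reduce cleanly to the geometric description of extreme points already supplied by Lem.~\ref{lem:extp}.
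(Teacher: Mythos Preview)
Your proof is correct, but both inclusions are argued differently than in the paper, and the comparison is worth recording.

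For $\Ext(\mathcal{V}\cap\mathbb{B}_1^n)\subseteq\MinSupp(\mathcal{V})\cap\mathbb{S}_1^{n-1}$, the paper does not perturb. Instead it restricts to the coordinate subspace $\mathcal{W}=\{x:x_k=0\text{ for }k\notin\supprt(z)\}\cong\mathbb{R}^s$ with $s=\|z\|_0$, observes that $z$ remains an extreme point of $\bar{\mathcal{V}}\cap\mathbb{B}_1^s$ (where $\bar{\mathcal{V}}=\mathcal{V}\cap\mathcal{W}$), and then invokes the sparsity bound of Lem.~\ref{lem:extp}(ii): since $z$ has full support $s$ inside $\mathbb{R}^s$, one gets $s\leq s-\dim\bar{\mathcal{V}}+1$, forcing $\dim\bar{\mathcal{V}}=1$ and hence $\supprt(\bar z)=\supprt(z)$ for any competing $\bar z$. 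Your construction of $w=\bar x-\langle e,\bar x\rangle z$ is more hands-on and uses only part~(i) of the lemma; it avoids the restriction step and the dimension count at the cost of the small algebraic check that $w\neq 0$.

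For $\MinSupp(\mathcal{V})\cap\mathbb{S}_1^{n-1}\subseteq\Ext(\mathcal{V}\cap\mathbb{B}_1^n)$, the paper's argument is geometric rather than algebraic: assuming $\dim(\mathcal{V}\cap\rinte(\Delta_e))>0$, it takes the closure $C$ of this set inside $\mathcal{V}\cap\Delta_e$ and notes that any relative boundary point $\bar z\in\partial C$ must drop at least one coordinate, i.e.\ $\supprt(\bar z)\subsetneqq\supprt(z)$, contradicting minimality. Your route through the auxiliary fact $\{v\in\mathcal{V}:\supprt(v)\subseteq\supprt(z)\}=\mathrm{span}\{z\}$ is the standard ``elementary vector'' argument and is cleaner in that it never leaves linear algebra; the paper's version is shorter but implicitly relies on the convex-analytic fact that a positive-dimensional relatively open polytope has nonempty relative boundary inside the closed simplex.

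In short, both proofs are valid; yours uses only Lem.~\ref{lem:extp}(i) throughout, while the paper distributes the work between parts~(i) and~(ii) and a boundary argument.
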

\begin{proof}
We prove that each set is contained in the other.
[$\subseteq$]
Let $z\in \MinSupp(\mathcal{V})\cap \mathbb{S}^{n-1}_1$. 
We know from \eqref{eq:VrinteSimp} that  there exists a unique nonzero sign vector ${e}$ such that $z\in \mathcal{V}\cap \rinte(\Delta_{e})$. If $\dim(\mathcal{V}\cap \rinte(\Delta_{{e}})) = 0$, then it follows from Lem.~\ref{lem:extp}(i) that $z$ is an extreme point of $\mathcal{V}\cap \mathbb{B}^{n}_1$, and we are done. 
Therefore, in the remainder of the proof, we show that $\dim(\mathcal{V}\cap \rinte(\Delta_{{e}})) > 0$ cannot hold. 

For a proof by contradiction, suppose that
$\dim(\mathcal{V}\cap \rinte(\Delta_{{e}})) > 0$.  Define $C$ as the closure of $\mathcal{V}\cap \rinte(\Delta_{{e}})$ in $\mathcal{V}\cap\Delta_{{e}}$ so that the boundary satisfies $\partial C \subseteq\mathcal{V}\cap \mathbb{S}^{n-1}_1$. However, for any point $\bar{z}\in \partial C$ we have $\supprt(\bar{z})\subsetneqq \supprt(z)$, which contradicts the fact that $z$ has minimal support. 

[$\supseteq$] Let $z\in \Ext(\mathcal{V}\cap \mathbb{B}^n_1)$, which means that $z\in \mathcal{V}\cap \mathbb{S}_1^{n-1}$ because $\mathcal{V}$ is a nontrivial subspace by assumption. For a proof by contradiction, suppose that $z \notin \MinSupp(\mathcal{V})\cap\mathbb{S}^{n-1}$, which combined with $z\in\mathcal{\mathcal{V}}\cap\mathbb{S}_1^{n-1}$ implies that $z \notin \MinSupp(\mathcal{V})$, i.e., 
that there exists a nonzero $\bar{z}\in \mathcal{V}\cap\mathbb{S}_1^{n-1}$ 
such that $\supprt(\bar{z})\subsetneqq \supprt(z)$. Next, we define $s:=\|z\|_0$ and $\mathcal{W}\subseteq \mathbb{R}^n$ as the subspace
$$
\mathcal{W}:= \{x\in \mathbb{R}^n: x_k=0\text{ for all }k\notin \supprt(z)\},
$$
and note that $\{z,\bar{z}\}\subseteq \mathcal{W}\cap\mathcal{V}\cap\mathbb{S}^{n-1}_1$ and that $\mathcal{W}$ can trivially be associated with $\mathbb{R}^s$. Under this association, $\mathcal{W}\cap \mathbb{B}^n_1$ corresponds to $\mathbb{B}^{s}_1\subset \mathbb{R}^s$, and the subspace $\mathcal{W}\cap \mathcal{V}$  corresponds to a subspace $\bar{\mathcal{V}}\subset \mathbb{R}^s$. Under this correspondence,
since $z\in\Ext(\mathcal{V}\cap \mathbb{B}^n_1)$ it follows that
$z \in \Ext(\bar{\mathcal{V}}\cap\mathbb{B}_1^s)$.
Since $z$ is $s$-sparse, it follows from Lem.~\ref{lem:extp} that  $s\leq (s-\dim(\bar{\mathcal{V}}))+1$, which shows that $\dim(\bar{\mathcal{V}})\leq 1$. Moreover, since $\bar{\mathcal{V}}$ is nonempty, we necessarily have  $\dim(\bar{\mathcal{V}})=1$ so that  $\supprt(z)= \supprt(\bar{z})$, which is a contradiction. 
\end{proof}

\begin{figure}
\centering
\subfloat[$\mathbb{B}^{3}_1$ and $\mathcal{V}$ in $\mathbb{R}^3$.]{\includegraphics[width=0.21\textwidth]{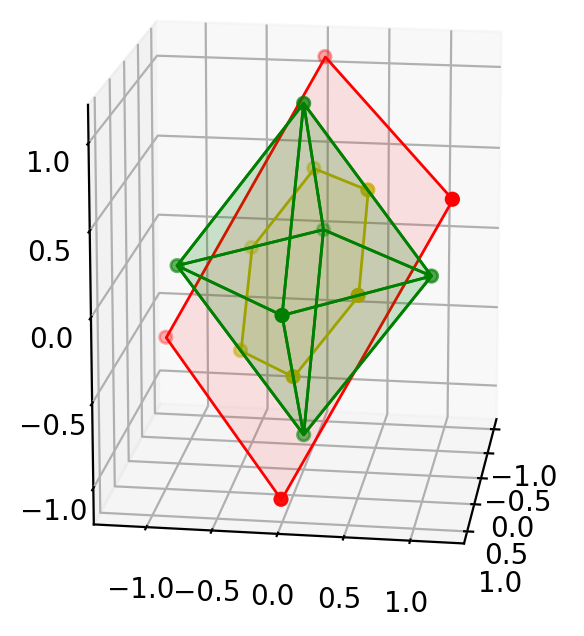}}
~
\subfloat[The intersection of $\mathbb{B}^{3}_1$ and $\mathcal{V}$.]{\includegraphics[width=0.28\textwidth]{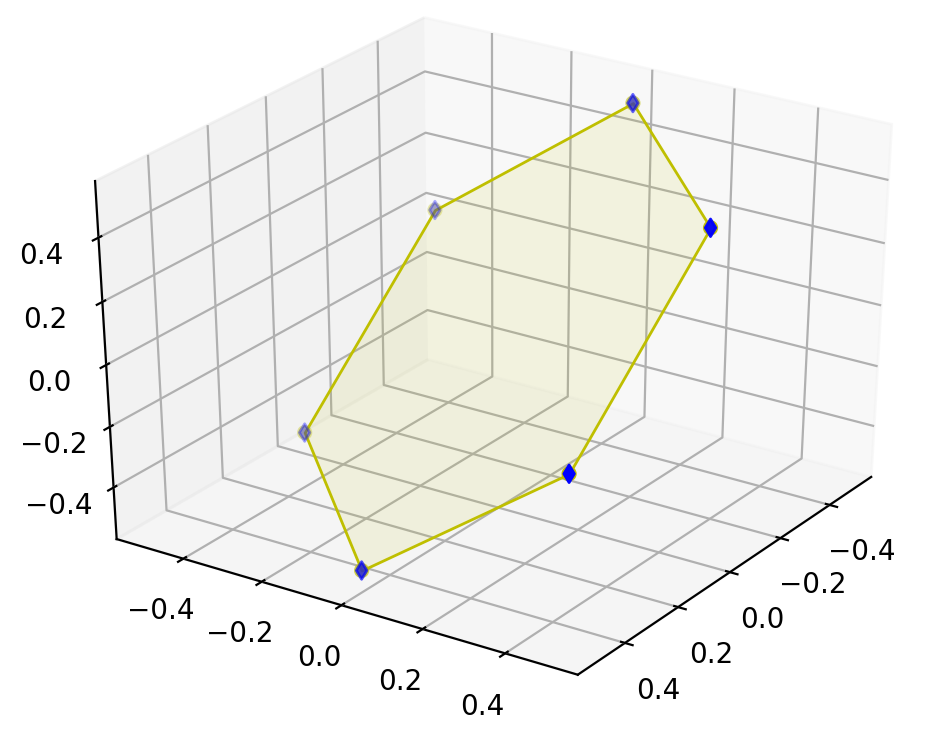}}%
\caption{When a subspace $\mathcal{V}\subset\mathbb{R}^n$ intersects the $\ell_1$-ball $\mathbb{B}_1^n$, the sets $\Ext(\mathcal{V}\cap \mathbb{B}^n_1)$ and $\MinSupp(\mathcal{V})\cap \mathbb{S}^{n-1}_1$ coincide.}
\label{fig:l1_ball_inters_oblique_plane_3D}
\end{figure}

Prop.~\ref{prop:minSuppExt} can be used to characterize the GNUP and MASC associated with a matrix in terms of vectors of minimal support, as stated in the next lemma.
\begin{lemma}\label{lem:GNUPreformminSupp}
Let $\mathcal{T}\subseteq 2^{\mathcal{U}_n}$ and $\Phi\in \mathbb{R}^{m\times n}$ have a nontrival nullspace. Then, $\Phi$ satisfies the GNUP w.r.t. $\mathcal{T}$ if and only if  $$\max_{S\in \mathcal{T}} \max_{z\in \MinSupp(\nullsp(\Phi)) \cap \mathbb{S}_1^{n-1}} \|z_S\|_1< \tfrac{1}{2}.$$
\end{lemma}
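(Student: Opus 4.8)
The plan is to obtain this lemma as an immediate corollary of the two characterizations already established: Lem.~\ref{lem:GNUPreformExt} (which reformulates the GNUP in terms of extreme points of $\nullsp(\Phi)\cap \mathbb{B}_1^n$) and Prop.~\ref{prop:minSuppExt} (which identifies those extreme points with the minimal-support vectors lying on the unit sphere). No new estimate or construction is needed; the work is purely in chaining the two results together under the standing hypothesis.

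First I would observe that the hypothesis ``$\Phi$ has a nontrivial nullspace'' is precisely what is required to invoke Prop.~\ref{prop:minSuppExt} with the choice of subspace $\mathcal{V} := \nullsp(\Phi)$. Applying that proposition gives the set equality
$$
\MinSupp(\nullsp(\Phi)) \cap \mathbb{S}_1^{n-1} = \Ext(\nullsp(\Phi)\cap \mathbb{B}_1^n).
$$
Since the two index sets of the inner maximization in the claimed criterion and in Lem.~\ref{lem:GNUPreformExt} are literally the same set, the objective value $\max_{z} \|z_S\|_1$ is identical whether the inner maximum is taken over $\MinSupp(\nullsp(\Phi)) \cap \mathbb{S}_1^{n-1}$ or over $\Ext(\nullsp(\Phi)\cap \mathbb{B}_1^n)$, and hence so is the double maximum over $S\in\mathcal{T}$.

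It then suffices to quote Lem.~\ref{lem:GNUPreformExt}, which states that $\Phi$ satisfies the GNUP w.r.t.\ $\mathcal{T}$ if and only if
$$
\max_{S\in \mathcal{T}} \max_{z\in \Ext(\nullsp(\Phi)\cap \mathbb{B}_1^n)} \|z_S\|_1 < \tfrac{1}{2},
$$
and substitute the set equality above to conclude that this condition is equivalent to
$$
\max_{S\in \mathcal{T}} \max_{z\in \MinSupp(\nullsp(\Phi)) \cap \mathbb{S}_1^{n-1}} \|z_S\|_1 < \tfrac{1}{2},
$$
which is exactly the asserted criterion. I do not anticipate any genuine obstacle here: all of the substantive content has been absorbed into Prop.~\ref{prop:minSuppExt}. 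The only point deserving a word of care is confirming that the nontriviality hypothesis on $\nullsp(\Phi)$ is the exact precondition of Prop.~\ref{prop:minSuppExt}, so that the set equality is legitimately available; once that is noted, the equivalence follows directly by substitution.
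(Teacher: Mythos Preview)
Your proposal is correct and matches the paper's own proof essentially verbatim: the paper also derives the lemma by combining Lem.~\ref{lem:GNUPreformExt} with Prop.~\ref{prop:minSuppExt} under the identification $\mathcal{V}=\nullsp(\Phi)$. Your observation that the nontriviality hypothesis is precisely what licenses the use of Prop.~\ref{prop:minSuppExt} is exactly the point to note.
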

\begin{proof}
The result follows from Lem.~\ref{lem:GNUPreformExt} and Prop.~\ref{prop:minSuppExt} using the identification  $\mathcal{V} = \nullsp(\Phi)$ in Prop.~\ref{prop:minSuppExt}.
\end{proof}

\begin{prop}\label{prop:maxASCminSupp}
If $\Phi\in \mathbb{R}^{m\times n}$ has a nontrivial nullspace, then 
\begin{align*}
&\mathcal{T}_{\max}(\Phi) \\
&= \{S\subseteq \mathcal{U}_n : 
    \|z_S\|_1< \tfrac{1}{2} \ 
    \forall z \in\MinSupp(\nullsp(\Phi)) \cap \mathbb{S}_1^{n-1}\}.
\end{align*}
\end{prop}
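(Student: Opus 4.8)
The plan is to deduce this proposition directly from the extreme-point characterization of the MASC in Prop.~\ref{prop:maxASC} together with the set equality established in Prop.~\ref{prop:minSuppExt}. The key observation is that the only difference between the claimed characterization and the one already proved in Prop.~\ref{prop:maxASC} is that the set $\Ext(\nullsp(\Phi)\cap\mathbb{B}_1^n)$ over which the condition $\|z_S\|_1<\tfrac{1}{2}$ is quantified has been replaced by $\MinSupp(\nullsp(\Phi))\cap\mathbb{S}_1^{n-1}$. Since Prop.~\ref{prop:minSuppExt} asserts that these two sets are equal, the two characterizations of $\mathcal{T}_{\max}(\Phi)$ must coincide verbatim.

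Concretely, first I would note that the hypothesis that $\Phi$ has a nontrivial nullspace means exactly that $\mathcal{V}:=\nullsp(\Phi)$ is a nontrivial subspace of $\mathbb{R}^n$, which is precisely the hypothesis required to invoke Prop.~\ref{prop:minSuppExt}. Applying that proposition with this identification yields
$$\Ext(\nullsp(\Phi)\cap\mathbb{B}_1^n)=\MinSupp(\nullsp(\Phi))\cap\mathbb{S}_1^{n-1}.$$
Substituting this equality into the right-hand side of the characterization in Prop.~\ref{prop:maxASC} replaces the quantifier over extreme points by the quantifier over minimal-support vectors on the unit $\ell_1$-sphere, while leaving the defining condition $\|z_S\|_1<\tfrac{1}{2}$ untouched. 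This produces exactly the claimed description of $\mathcal{T}_{\max}(\Phi)$.

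An equally short alternative, which mirrors the structure of the earlier proofs, is to retrace the argument for Prop.~\ref{prop:maxASC} but to appeal to Lem.~\ref{lem:GNUPreformminSupp} in place of Lem.~\ref{lem:GNUPreformExt}: the [$\subseteq$] inclusion follows by combining Lem.~\ref{lem:GNUP-wrt-MASC}(i) with Lem.~\ref{lem:GNUPreformminSupp}, while the [$\supseteq$] inclusion follows because the right-hand side is readily seen to be an ASC on which $\Phi$ satisfies the GNUP by Lem.~\ref{lem:GNUPreformminSupp}, whence it is contained in $\mathcal{T}_{\max}(\Phi)$ by the definition of the MASC. I expect essentially no obstacle here: all of the analytic content—the identification of extreme points with minimal-support vectors—has already been carried out in Prop.~\ref{prop:minSuppExt}, so the only point genuinely requiring care is confirming that the nontriviality hypothesis on $\nullsp(\Phi)$ is exactly what licenses the use of Prop.~\ref{prop:minSuppExt} (equivalently, of Lem.~\ref{lem:GNUPreformminSupp}).
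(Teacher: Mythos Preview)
Your proposal is correct and matches the paper's own proof essentially verbatim: the paper simply states that the result follows from Prop.~\ref{prop:maxASC} and Prop.~\ref{prop:minSuppExt} via the identification $\mathcal{V}=\nullsp(\Phi)$, which is precisely your primary argument. Your alternative route through Lem.~\ref{lem:GNUPreformminSupp} is also valid but is not what the paper does.
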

\begin{proof}
The result follows from Prop.~\ref{prop:maxASC} and Prop.~\ref{prop:minSuppExt} using the identification $\mathcal{V} = \nullsp(\Phi)$ in Prop.~\ref{prop:minSuppExt}.
\end{proof}

We will see in the next two sections that for certain popular dictionaries $\Phi$, the set $\MinSupp(\nullsp(\Phi))$ has a rather simple characterization. Hence, Prop.~\ref{prop:maxASCminSupp} will provide an efficient way to characterize $\mathcal{T}_{\max}(\Phi)$ for these dictionaries.

\section{Application to Graph Incidence Matrices}
The first application we consider is the $\ell_1$-minimization problem with a dictionary that takes the form of an incidence matrix $A$ associated with a simple connected graph $\mathcal{G}$, namely
\begin{equation}\label{eq:incl1min}
    \min_{A\bar{x} = Ax} \|x\|_1.
\end{equation}
We want to understand the MASC associated with $A$.

\subsection{Review of Graph Theory}
We begin with a review of some concepts from graph theory. (See \cite{GoRo:01} for additional details.)
A directed graph with vertex set $V=\{v_1, \dots, v_m\}$ and edge set $E=\{e_1,\dots, e_n\}\subseteq V\times V$ will be denoted by $\mathcal{G} = \mathcal{G}(V,E)$. Hence, each $e_k$ is represented by an ordered pair of vertices $(v_{k_1}, v_{k_2})$.  A graph $\mathcal{G}$ is called \emph{simple} if there is at most one edge connecting any two vertices and no edge starts and ends at the same vertex, i.e. $\mathcal{G}$ has no self loops. Henceforth, $\mathcal{G}$ will always denote a simple directed graph with a finite number of edges and vertices. Associated with a directed graph $\mathcal{G} = \mathcal{G}(V,E)$, we define the incidence matrix $A=A(\mathcal{G}) \in \mathbb{R}^{m \times n}$ as
\begin{equation}
a_{ij}=
\begin{cases}
-1, &\text{ if } v_i \text{ is the initial vertex of edge } e_j, \\
\phantom{-}1, &\text{ if } v_i \text{ is the terminal vertex of edge } e_j,\\
\phantom{-}0, &\text{ otherwise.}
\end{cases}
\end{equation}
The subspace $\nullsp(A)$ is called the \emph{flow space} of $\mathcal{G}$. 
A \emph{simple cycle} is a closed walk with no repetitions of vertices other than the starting and ending vertex, such as $\mathcal{C} = (u_1,\dots, u_r, u_{r+1}=u_1)$.
The length of the shortest simple cycle in $\mathcal{G}$ is called the \emph{girth} of $\mathcal{G}$.
A simple cycle $\mathcal{C} = (u_1,\dots, u_r, u_{r+1}=u_1)$ can be associated with a vector $w(\mathcal{C})\in \{-1,0,1\}^{|E|}$, where each coordinate 
is defined as
\begin{equation*}\label{def.w}
[w(\mathcal{C})]_j =
\begin{cases}
\phantom{-}1, &\text{ if } e_j = (u_i,u_{i+1})\text{ for some }i\in \{1,\dots, r\},\\
-1, &\text{ if } e_j= (u_{i+1}, u_{i})\text{ for some }i\in \{1,\dots, r\},\\
\phantom{-}0, &\text{ otherwise.}
\end{cases}
\end{equation*}
The vector $w(\mathcal{C})$ is called the \emph{signed characteristic vector} of the oriented cycle $\mathcal{C}$ \cite{GoRo:01}. These vectors play a central role.

\begin{rem}\label{rem:minSupport}
The nonzero elements of the flow
space with minimal support are nonzero scalar multiples of the signed characteristic
vectors of the simple cycles of $\mathcal{G}$ \cite[Thm.~14.2.2]{GoRo:01}.
\end{rem}

The signed characteristic vectors of the simple cycles of a simple graph $\mathcal{G}$ span its flow space \cite[Cor.~14.2.3]{GoRo:01}.
We denote the set of normalized signed characteristic vectors of the simple cycles of a simple directed graph $\mathcal{G}$ by 
\begin{equation}\label{eq:simplecycles}
W_1(\mathcal{G}) \! :=\! \left\{\frac{w(\mathcal{C})}{\| w(\mathcal{C})\|_1} :\;\mathcal{C} \text{ is a simple cycle of }\mathcal{G}\right\}.
\end{equation}
Rem.~\ref{rem:minSupport} states, using our notation, that
\begin{equation}\label{eq:W1MinSupp}
    W_1(\mathcal{G}) = \MinSupp(\nullsp(A))\cap \mathbb{S}^{n-1}_1,
\end{equation}
where $A\in\mathbb{R}^{m\times n}$ is the incidence matrix of the graph $\mathcal{G}$.

\subsection{Characterizing the MASC for Graph Incidence Matrices}\label{ssec:incmatr}

We apply the results of Section~\ref{sec:char-gnup-and-masc} to derive necessary and sufficient conditions for exact recovery.
The next corollary provides the link between $\Ext(\nullsp(A)\cap \mathbb{B}_1^n)$ and the simple cycles of the graph $\mathcal{G}${, thus allowing a clear characterization $\mathcal{T}_{\max}(A)$. This is the main result of this section}. 

\begin{cor}\label{cor:extremepts}
Let $\mathcal{G}$ be a simple graph with incidence matrix $A\in \mathbb{R}^{m \times n}$ whose nullspace is nontrivial. The following hold:
\begin{itemize}
    \item[(i)] $\Ext(\nullsp(A)\cap \mathbb{B}_1^{n}) = W_1(\mathcal{G})$.
    \item[(ii)] The MASC is given by
    \begin{align*}
\!\!\!\!\!\!\!\!\mathcal{T}_{\max}(A) 
= \left\{ S\subseteq \mathcal{U}_n :  \frac{|S\cap \supprt(z)|}{\|z\|_0} < \tfrac{1}{2} 
    \ \forall z \in W_1(\mathcal{G}) \right\}. 
\end{align*}
\end{itemize}
\end{cor}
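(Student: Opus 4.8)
The plan is to derive both parts as essentially immediate consequences of the general machinery developed in Section~\ref{sec:char-gnup-and-masc}, specialized to the incidence matrix $\Phi = A$ via the graph-theoretic identity \eqref{eq:W1MinSupp}.

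For part (i), I would apply Prop.~\ref{prop:minSuppExt} with $\mathcal{V} = \nullsp(A)$, which is legitimate since $\nullsp(A)$ is nontrivial by hypothesis. This immediately gives $\Ext(\nullsp(A)\cap \mathbb{B}_1^n) = \MinSupp(\nullsp(A)) \cap \mathbb{S}_1^{n-1}$. Combining this with \eqref{eq:W1MinSupp} (which restates Rem.~\ref{rem:minSupport}) yields $\Ext(\nullsp(A)\cap \mathbb{B}_1^n) = W_1(\mathcal{G})$, completing this part.

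For part (ii), I would start from Prop.~\ref{prop:maxASCminSupp} applied to $A$, which expresses $\mathcal{T}_{\max}(A)$ as the collection of all $S$ satisfying $\|z_S\|_1 < \tfrac{1}{2}$ for every $z \in \MinSupp(\nullsp(A)) \cap \mathbb{S}_1^{n-1}$. Using \eqref{eq:W1MinSupp} once more, the index $z$ ranges exactly over $W_1(\mathcal{G})$, so it only remains to rewrite $\|z_S\|_1$ in the claimed ratio form. The key computation is the observation that each $z \in W_1(\mathcal{G})$ has the form $w(\mathcal{C})/\|w(\mathcal{C})\|_1$ with $w(\mathcal{C}) \in \{-1,0,1\}^n$. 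Since every nonzero entry of $w(\mathcal{C})$ has modulus one, we have $\|w(\mathcal{C})\|_1 = \|w(\mathcal{C})\|_0 = \|z\|_0$, so each nonzero entry of $z$ has magnitude $1/\|z\|_0$ and $\supprt(z) = \supprt(w(\mathcal{C}))$. Consequently $\|z_S\|_1 = \sum_{k \in S \cap \supprt(z)} |z_k| = |S \cap \supprt(z)|/\|z\|_0$, and substituting this into the condition from Prop.~\ref{prop:maxASCminSupp} yields the stated description of $\mathcal{T}_{\max}(A)$.

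Since both parts reduce to invoking earlier propositions together with one elementary normalization identity, I do not anticipate a genuine obstacle here. The only point requiring a little care is the bookkeeping that $\ell_1$-normalization of a $\{-1,0,1\}$-valued vector coincides with division by its number of nonzeros, so that $\|z_S\|_1$ collapses to the clean ratio $|S\cap \supprt(z)|/\|z\|_0$; everything else is a direct appeal to Prop.~\ref{prop:minSuppExt}, Prop.~\ref{prop:maxASCminSupp}, and the identity \eqref{eq:W1MinSupp}.
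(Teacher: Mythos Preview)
Your proposal is correct and matches the paper's proof essentially line for line: part (i) via Prop.~\ref{prop:minSuppExt} and \eqref{eq:W1MinSupp}, and part (ii) via Prop.~\ref{prop:maxASCminSupp}, \eqref{eq:W1MinSupp}, and the identity $\|z_S\|_1 = |S\cap\supprt(z)|/\|z\|_0$ for $z\in W_1(\mathcal{G})$.
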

\begin{proof}
Part (i) follows from Prop.~\ref{prop:minSuppExt} and \eqref{eq:W1MinSupp}. To prove part (ii), observe  
for any index set $S \subseteq \mathcal{U}_n$ and $z\in W_1(\mathcal{G})$ that
\begin{equation}\label{eq:zs}
\|z_S \|_1 = \frac{|S\cap \supprt(z)|}{\|z\|_0}.
\end{equation}
Combining this with Prop.~\ref{prop:maxASCminSupp} and \eqref{eq:W1MinSupp} gives the result.
\end{proof}

Combining Cor.~\ref{cor:extremepts}(ii) with Thm.~\ref{thm:genthm2} tells us that the following holds when the dictionary is the incidence matrix of a simple graph: every signal whose support is contained in an index set $S\subseteq \mathcal{U}_n$ can be recovered via $\ell_1$-minimization if and only if for each simple cycle, the size of the intersection of $S$ with the indices of the edges appearing in the simple cycle is strictly less than half of the length of the simple cycle. {Next, as a concrete example, we calculate MASC for the graph in Fig.~\ref{fig:inc_matr_exm}}

\begin{figure}
\subfloat[Graph $\widehat{\mathcal{G}}$.]{\!\!
\begin{minipage}{0.2\textwidth}
\[
\xymatrix@C=2em{
\xy*{1}*\cir<6pt>{}\endxy \ar[r]^{a} \ar[dr]_c &\xy*{2}*\cir<6pt>{}\endxy \ar[d]^b & \ar[l]_g\xy*{6}*\cir<6pt>{}\endxy \\
    &\xy*{3}*\cir<6pt>{}\endxy  \ar[d]_d \\
    &\xy*{4}*\cir<6pt>{}\endxy \ar[r]_{e} &\xy*{5}*\cir<6pt>{}\endxy \ar[uu]_{f}
}
\]
\end{minipage}
}
\setlength{\arraycolsep}{3pt}\!\!
\subfloat[Incidence matrix $A(\widehat{\mathcal{G}})$ of $\widehat{\mathcal{G}}$.]{
\begin{minipage}{0.3\textwidth}
\[
\begin{bmatrix}
-1 & 0 & -1 & 0 &  0 &  0 &  0 \\[1.5ex]
 1 & -1 & 0 & 0 &  0 &  0 &  1 \\[1.5ex]
0 & 1 & 1 & -1 &  0  &  0 &  0 \\[1.5ex]
0 & 0 & 0 & 1 &  -1  &  0 &  0 \\[1.5ex]
0 & 0 & 0 & 0 &  1  &  -1 &  0 \\[1.5ex]
0 & 0 & 0 & 0 &  0  &  1 &  -1 \\
\end{bmatrix}
\]
\end{minipage}
}%
\caption{{The graph $\widehat{\mathcal{G}}$ of Exm.~\ref{exm:inc_matr_exm} and its incidence matrix $A(\widehat{\mathcal{G}})$.}}
\label{fig:inc_matr_exm}
\end{figure}
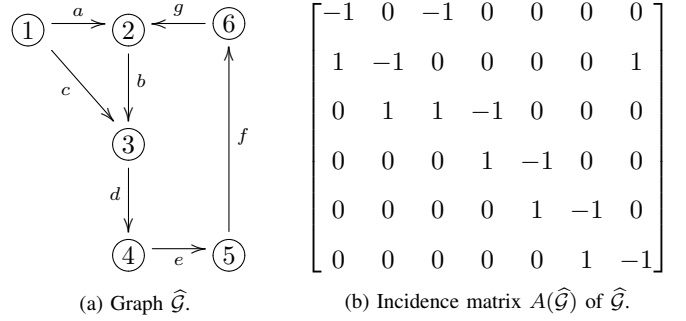
{
\begin{exm}\label{exm:inc_matr_exm}
Consider the simple directed graph $\widehat{\mathcal{G}}$ and its incidence matrix, $A(\widehat{\mathcal{G}})$, as shown in Fig.~\ref{fig:inc_matr_exm}. We denote the edges $E(\widehat{\mathcal{G}})$ of $\widehat{\mathcal{G}}$ with small letters, whereas the vertices of $\widehat{\mathcal{G}}$ are denoted by numbers. The columns of the incidence matrix $A(\widehat{\mathcal{G}})$ are indexed by the edges of $\widehat{\mathcal{G}}$ in alphabetical order. That is, the first column corresponds to the edge ``a", the second column corresponds to edge ``b", etc. The rows of $A(\widehat{\mathcal{G}})$ are indexed by the vertices of $\widehat{\mathcal{G}}$ in their natural ordering. Note that $\widehat{\mathcal{G}}$ has 3 simple cycles, and---up to a sign change---they correspond to the following vectors:
\begin{align*}
    v_1 &:= \begin{bmatrix}1 & 1 & -1 & 0 & 0 & 0 & 0\end{bmatrix}\\
    v_2 &:= \begin{bmatrix}0 & 1 & 0 & 1 & 1 & 1 & 1\end{bmatrix}\\
    v_3 &:= \begin{bmatrix}-1 & 0 & 1 & 1 & 1 & 1 & 1\end{bmatrix}.
\end{align*}
As a consequence of Cor.~\ref{cor:extremepts}(ii), we know that any signal $\bar{x}$ with $\supprt(\bar{x})= \{a,e\}$ can be recovered via \eqref{eq:incl1min}, because $\{a,e\}$ contains strictly less than half of the edges in any simple cycle of $\widehat{\mathcal{G}}$. However, if we have $\supprt(\bar{x})= \{a,b\}$, then \eqref{eq:incl1min} may not be able to recover $\bar{x}$, because the simple cycle that corresponds to $v_1$ consists of the edges $\{a,b,c\}$ and this $\supprt(\bar{x})$ covers $2/3$ of this simple cycle. In fact, Cor.~\ref{cor:extremepts}(ii) tells us that there are signals $\bar{x}$ with $\supprt(\bar{x})= \{a,b\}$, which cannot be recovered via \eqref{eq:incl1min}. Therefore, when the measurement matrix is $A(\widehat{\mathcal{G}})$, the recovery performance of \eqref{eq:incl1min} is not uniform on $2$-sparse signals. Nonetheless, any $1$-sparse signal can always be recovered. A slightly more detailed analysis reveals that there is always some $s$-sparse signal that cannot be recovered, whenever $s\geq 3$. Hence, MASC of $A(\widehat{\mathcal{G}})$ is
\begin{align*}
\mathcal{T}_{\max}(A(\widehat{\mathcal{G}}))\! =\! \{S\in 2^{E(\widehat{\mathcal{G}})}: \, &|S|\leq 1\text{ or }\\
                                                                    &(|S|\!=\!2\text{ and }|S\cap \{a,b,c\}|\leq 1) \}.    
\end{align*}
\end{exm}
}

\begin{rem}
Cor.~\ref{cor:extremepts}(ii) is especially relevant when studying shortest paths in unweighted graphs. Suppose that we are given a simple, undirected, unweighted graph $\mathcal{G}$, for which the shortest path between any pair of vertices is unique. Such graphs are called \emph{geodetic graphs} in the literature \cite{Ore:62}. A tree is an example of a geodetic graph. Moreover, a geodetic graph cannot contain a simple cycle of even length. The crucial observation from our view point is that in a geodetic graph, shortest paths can only contain strictly less than half of the edges in any simple cycle of the graph. Hence, if we orient our graph $\mathcal{G}$ by picking directions for the edges, and denote the oriented graph by $\tilde{\mathcal{G}}$, then by Cor.~\ref{cor:extremepts}(ii), the support of the shortest path between any two vertices of $\tilde{\mathcal{G}}$ is in $\mathcal{T}_{\max}(A(\tilde{\mathcal{G}}))$. Thus, it can be recovered via the $\ell_1$-minimization problem.
\end{rem}

Cor.~\ref{cor:extremepts}(ii) can be specialized to derive necessary and sufficient conditions for the recovery of $s$-sparse signals.

\begin{thm}\label{thm:girthsparsity}
If $A \in \mathbb{R}^{m\times n}$ is the incidence matrix of a simple graph $\mathcal{G}$ with girth $g$, then the nullspace constant \eqref{eq:nsc} satisfies 
$$\nsc(s,A) = \min\left\{1,s/g\right\}.$$
Hence, every $s$-sparse vector $\bar{x}\in \mathbb{R}^n$  is the unique solution to the optimization problem \eqref{eq:incl1min}
if and only if $s < g/2$.
\end{thm}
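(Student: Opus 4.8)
The plan is to compute the nullspace constant $\nsc(s,A)$ in closed form using the extreme-point machinery already developed, and then combine this formula with the equivalence between the NUP and exact recovery.

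\emph{Step 1 (reduce the inner maximum to simple cycles).} In the definition \eqref{eq:nsc} of $\nsc(s,A)$, the inner problem $\max_{\eta\in\nullsp(A)\cap\mathbb{B}_1^n}\|\eta_S\|_1$ maximizes a convex function over a nonempty, compact, convex set, so (exactly as in the proof of Lem.~\ref{lem:GNUPreformExt}) the maximum is attained at an extreme point of $\nullsp(A)\cap\mathbb{B}_1^n$. By Cor.~\ref{cor:extremepts}(i) these extreme points are precisely the normalized signed characteristic vectors $W_1(\mathcal{G})$, and for $z\in W_1(\mathcal{G})$ equation \eqref{eq:zs} gives $\|z_S\|_1 = |S\cap\supprt(z)|/\|z\|_0$. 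Hence
\[
\nsc(s,A) = \max_{|S|\le s}\;\max_{z\in W_1(\mathcal{G})}\;\frac{|S\cap\supprt(z)|}{\|z\|_0}.
\]

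\emph{Step 2 (combinatorial optimization).} For a fixed cycle vector $z$ of length $\ell=\|z\|_0$, the best choice of $S$ achieves $|S\cap\supprt(z)|=\min\{s,\ell\}$ (simply take $S$ to consist of $\min\{s,\ell\}$ of the edges of that cycle). The inner value is then $\min\{s,\ell\}/\ell=\min\{s/\ell,\,1\}$, which is nonincreasing in the cycle length $\ell$. Maximizing over all simple cycles therefore selects a shortest cycle, whose length is the girth $g$, so
\[
\nsc(s,A)=\min\{s/g,\,1\}=\min\{1,\,s/g\}.
\]

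\emph{Step 3 (translate to recovery).} By Thm.~\ref{thm:NUPthm}, every $s$-sparse vector is the unique solution of \eqref{eq:incl1min} if and only if $A$ satisfies the NUP of order $s$. Taking $\mathcal{T}=\mathcal{T}_s$ in Lem.~\ref{lem:GNUPreform} (equivalently, invoking Rem.~\ref{rem:nscNP}), the NUP of order $s$ is equivalent to $\nsc(s,A)<\tfrac12$. Substituting the formula from Step~2, the inequality $\min\{1,\,s/g\}<\tfrac12$ holds exactly when $s/g<\tfrac12$, i.e.\ when $s<g/2$, which gives the stated characterization.

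\emph{Main obstacle.} The argument is essentially bookkeeping once the reduction in Step~1 is justified, and that reduction is the only delicate point: one must rigorously pass from the supremum over the whole $\ell_1$-ball to the finite set of simple-cycle vectors. This rests on the convexity/extreme-point argument together with the identification $\Ext(\nullsp(A)\cap\mathbb{B}_1^n)=W_1(\mathcal{G})$ from Cor.~\ref{cor:extremepts}(i), which itself depends on the minimal-support description of the flow space (Rem.~\ref{rem:minSupport}). After that, the combinatorial maximization and the final logical equivalence are routine; the only case worth noting separately is a graph with no cycles, where $g=\infty$, $\nsc(s,A)=0$, and recovery holds for every $s$, consistent with the convention $\min\{1,s/g\}=0$.
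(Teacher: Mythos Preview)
Your proof is correct and follows essentially the same approach as the paper: reduce the inner maximization to the simple-cycle vectors via Cor.~\ref{cor:extremepts} and \eqref{eq:zs}, optimize the resulting combinatorial expression by noting that $\min\{1,s/\ell\}$ is nonincreasing in the cycle length $\ell$, and then invoke the equivalence $\nsc(s,A)<\tfrac12$ from Lem.~\ref{lem:GNUPreform}/Rem.~\ref{rem:nscNP}. The only differences are cosmetic (you swap the order of the two maximizations when optimizing, and you add a remark on the acyclic case), and your presentation is arguably a bit cleaner.
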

\begin{proof}
It follows from  the definition of the nullspace constant in~\eqref{eq:nsc}, Cor.~\ref{cor:extremepts}, and \eqref{eq:zs} that
\begin{equation} \label{nsc-A}
\nsc(s,A) = \max_{|S|\leq s} \max_{z\in W_1(\mathcal{G})} \frac{|S\cap \supprt(z)|}{\|z\|_0}.
\end{equation}
Now, consider any $z\in W_1(\mathcal{G})$. It follows that there exists a simple cycle $\mathcal{C}$ of $\mathcal{G}$ with length $g + \ell(z)$ for some $\ell(z) \geq 0$ such that  $z = w(\mathcal{C})/\|w(\mathcal{C})\|_1$, where we have used the fact that the girth of $\mathcal{G}$ is $g$; note that $\|z\|_0 = g+\ell(z)$. For this given $z$, the right-hand side of~\eqref{nsc-A} is maximized by any $S(z)$ satisfying that $|S(z)|\leq s$ and it contains as many indices from the support of $z$ as possible. This choice and~\eqref{nsc-A} show that
\begin{align}
\nsc(s,A)
&= \max_{z\in W_1(\mathcal{G})} \frac{|S(z)\cap \supprt(z)|}{\|z\|_0} \nonumber\\
&= \max_{z\in W_1(\mathcal{G})} \frac{\min\{g+\ell(z),s\}}{g+\ell(z)}  \nonumber\\
&= \max_{z\in W_1(\mathcal{G})} \min\left\{ 1,\frac{s}{g+\ell(z)} \right\}, \label{nsc-A-3}
\end{align}
from which it is now clear that the maximizer is obtained by any $z$ satisfying $l(z) = 0$, i.e., by $z = w(\mathcal{C})/\|w(\mathcal{C})\|_1$ for any simple cycle $\mathcal{C}$ whose length is equal to $g$. Combining this with~\eqref{nsc-A-3} proves that $\nsc(s,A) = \min\{1,s/g\}$, as claimed.  

To prove the second part, we first observe using the first part that $\nsc(s,A) = \min\{1,s/g\} < 1/2$ if and only  if $s< g/2$. The result follows by combining this fact with Lem.~\ref{lem:GNUPreform}, and recalling that $\mathcal{T} = \mathcal{T}_s$ in the current setting (see~\eqref{def:Ts}).
\end{proof}

Thm.~\ref{thm:girthsparsity} reveals that when the dictionary is a graph incidence matrix, the maximum sparsity level that can always be recovered via $\ell_1$-minimization is determined by the girth of the graph. Therefore, for graphs with small girth, we expect recovery experiments to fail for certain signals of relatively small sparsity. On the other hand, although one can not recover all $s$-sparse signals for a given $s$, due to Cor.~\ref{cor:extremepts}, the successful recovery rate can still be relatively high. Because, even though the girth is small, if the graph consists mainly of large simple cycles, $\mathcal{T}_{\max}(A)$ can still contain a large number of index sets of cardinality $s$, for which $\ell_1$-recovery will always be successful. We will illustrate this intuition with experiments in \S~\ref{ssec:inc_matr_exp}.

\begin{rem}\label{rem:girth_poly_t}
Thm.~\ref{thm:girthsparsity} has an important consequence regarding the computational complexity of the calculation of nullspace constant for incidence matrices. We already mentioned in Rem.~\ref{rem:nscNP} that it is NP-hard to calculate the nullspace constant in general. However, there are algorithms that can calculate the girth of a graph exactly in $\mathcal{O}(mn)$. Therefore, Thm.~\ref{thm:girthsparsity} reveals that the nullspace constant can be calculated---hence the NUP can be verified---for graph incidence matrices in polynomial time.
\end{rem}

\subsection{Experiments on Graph Incidence Matrices}\label{ssec:inc_matr_exp}
All experiments are implemented in Python 3.6. In these experiments, we often run an $\ell_1$-recovery experiment that requires sampling a random vector $\bar{x}$ of a given sparsity level $s$. For this purpose we use the \texttt{random} function from Python's \texttt{scipy.sparse} class. The support of the sparse vector $\bar{x}$ is chosen uniformly at random, and its nonzero values are chosen from the standard normal distribution. The resulting vector is subsequently normalized to have unit $\ell_2$ norm, thus forming the final $\bar{x}$. For this $\bar{x}$, a vector $\hat{x}$ is then obtained as a solution to the optimization problem~\eqref{eq:l1relax}.
The attempted $\ell_1$-recovery is deemed successful if 
$\|\hat{x}-\bar{x}\|_2 \leq 10^{-6}$. 
It is well-known that  problem~\eqref{eq:l1relax} can
be cast as a linear program, which we solve using the \texttt{linprog} function in the \texttt{scipy.optimize} class.

In this section, we consider numerical experiments that are based on two different families of graph incidence matrices.

\subsubsection{Graphs from Fig.~\ref{fig:Exp_Inc_Matr_1_graphs}}\label{sssec:Exp_Inc_Matr_1}

Consider the family of graphs $\{\mathcal{G}_l\}$ depicted in Fig.~\ref{fig:Exp_Inc_Matr_1_graphs}. Graph $\mathcal{G}_l$ consists of a simple cycle of length three and a simple cycle of length $l+1$ for  $l\in \{3,5,7,9,11\}$. For each of these graphs, we tested the $\ell_1$-sparsity recovery capabilities by solving the problem
\begin{equation}\label{eq:Exp_Inc_Matr_1}
    \min_{x} \|x\|_1\text{ subject to }A(\mathcal{G}_l)\bar{x} = A(\mathcal{G}_l)x,
\end{equation}
where $x\in\mathbb{R}^{l+2}$ and $A(\mathcal{G}_l)$ denotes the incidence matrix of the graph $\mathcal{G}_l$.

For each pair of sparsity level $s$ and incidence matrix $A(\mathcal{G}_l)$, this experiment was repeated 
$2000$ times for different choices of $\bar{x}$,
and the ratio of the successful trials is reported in Fig.~\ref{fig:Exp_Inc_Matr_1_simple}.

Since the girth is 3 for each  $\mathcal{G}_l$ considered in this experiment, it follows from  Thm.~\ref{thm:girthsparsity} that only 1-sparse vectors can be recovered exactly, which is observed in Fig.~\ref{fig:Exp_Inc_Matr_1_simple}. Moreover, due to Cor.~\ref{cor:extremepts}(ii), we expect to have better exact recovery performance for graphs with larger simple cycles, which in our case means larger values of $l$. This increase in recovery probability can be observed in Fig.~\ref{fig:Exp_Inc_Matr_1_simple} for sparsity levels $s > 1$.

\begin{figure}
\centering
\subfloat[Graph $\mathcal{G}_l$ for $l \geq 3$.]{\includegraphics[width=0.2\textwidth]{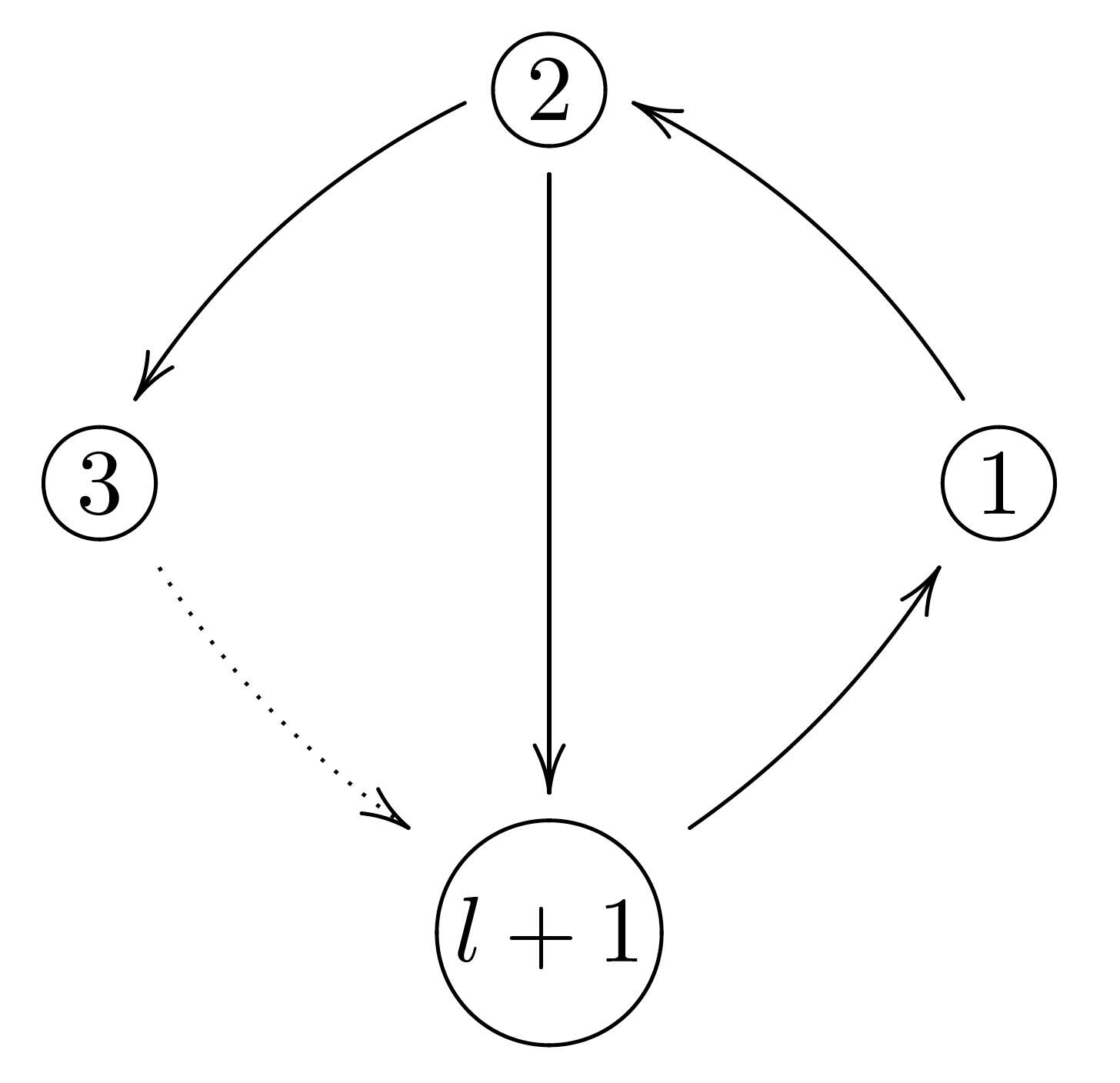}\label{fig:Exp_Inc_Matr_1_graphs}
}
~
\subfloat[Sparse recovery results for  $\mathcal{G}_l$.]{\includegraphics[width=0.27\textwidth]{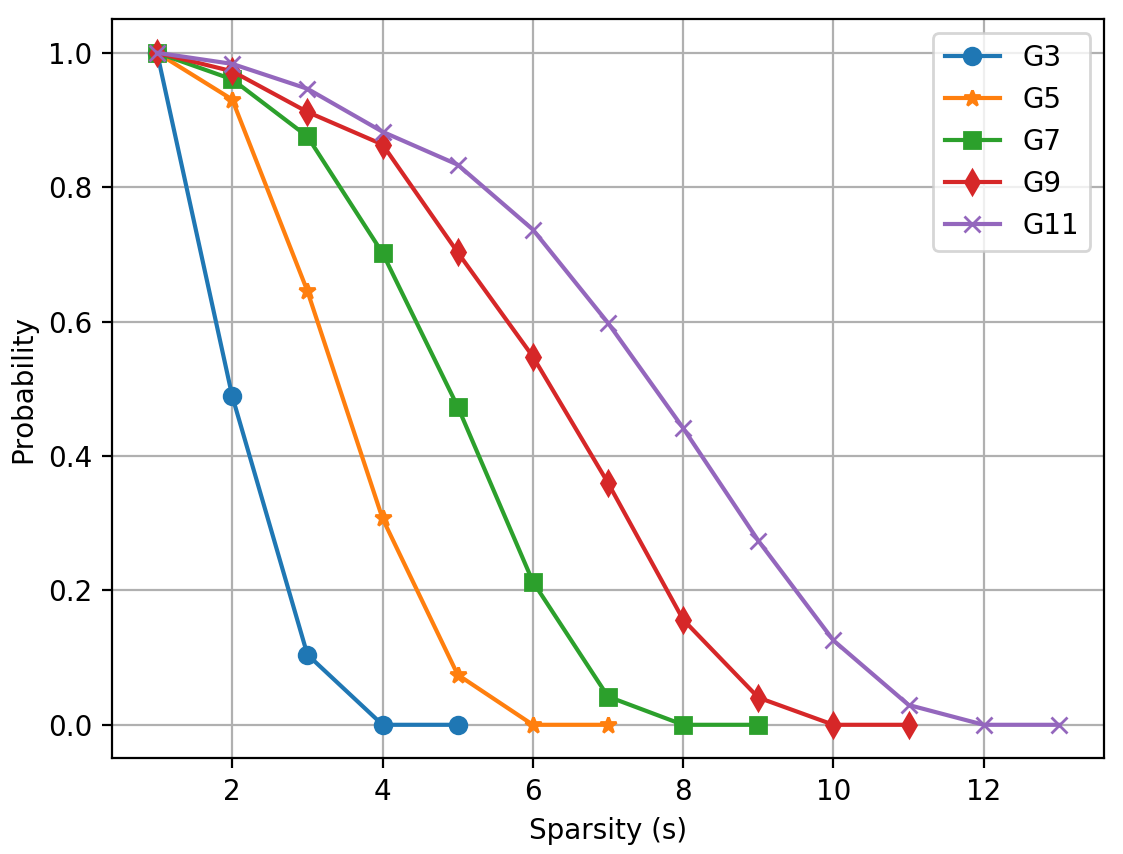}\label{fig:Exp_Inc_Matr_1_simple}}%
\caption{The graphs $\{\mathcal{G}_l\}$ used in \S\ref{sssec:Exp_Inc_Matr_1} and the sparse recovery results for $\mathcal{G}_l$ with $\ell\in\{3,5,7,9,11\}$ using \eqref{eq:Exp_Inc_Matr_1}.}
\end{figure}

\subsubsection{Erd\"{o}s-R\'{e}nyi Graphs}\label{sec:ER}
The previous experiment showed that the success of $\ell_1$-recovery for incidence matrices of deterministic graphs relies on the topology of the graph. The simple cycles of the graph completely determine the  recovery capability for signals with a given support. Now, 
we investigate the exact recovery probability for incidence matrices arising from randomly generated Erd\"{o}s-R\'{e}nyi graphs. 

Consider Erd\"{o}s-R\'{e}nyi graphs on 100 vertices.  
Any two vertices have an edge linking them with probability $\mathfrak{p}\in (0,1)$. For a fixed $\mathfrak{p}$, we sample 20 such graphs $\{\mathcal{G}_l\}_{0\leq l\leq 19}$, calculate their incidence matrices $\{A(\mathcal{G}_l)\}_{0\leq l\leq 19}$ 
and consider the $\ell_1$-recovery problem \eqref{eq:Exp_Inc_Matr_1}. 

For a given sparsity level $s$ and incidence matrix $A(\mathcal{G}_l)$, we sample 100 vectors $\bar{x}$ and determine whether $\ell_1$ sparse recovery using~\eqref{eq:Exp_Inc_Matr_1} is successful for each $\bar{x}$, as described in the first paragraph of this section. We record how many times  successful recovery occurs.

We repeat this experiment for different values of $\mathfrak{p}$. A critical value for $\mathfrak{p}$ is $\mathfrak{p}_{\text{crit}}:=\frac{\ln(100)}{100}\simeq 0.046$. When $\mathfrak{p} < \mathfrak{p}_{\text{crit}}$ the random graph is almost surely disconnected, and when  $\mathfrak{p}> \mathfrak{p}_{\text{crit}}$ the random graph is almost surely connected. To highlight the influence of $\mathfrak{p}_{\text{crit}}$, we select $\mathfrak{p} \in \{\mathfrak{p}_{\text{crit}}^{k/9}: 1\leq k\leq 10\}$.

The result of this experiment is shown in Fig.~\ref{fig:Exp_Inc_Matr_2}. For a given $\mathfrak{p}$, the expected number of edges in an Erd\"{o}s-R\'{e}nyi graph with 100 vertices is $\binom{100}{2}\mathfrak{p} = 4950\mathfrak{p}$, and the expected number of cycles of length three is $\binom{100}{3}\mathfrak{p}^3 = 161700\mathfrak{p}^3$. For instance, for $\mathfrak{p}=\mathfrak{p}_{\text{crit}}$, the expected number of edges is approximately $228$ and the expected number of triangles is approximately 15, which explains the high successful recovery rate observed in Fig.~\ref{fig:Exp_Inc_Matr_2} for $\mathfrak{p}=\mathfrak{p}_{\text{crit}}$. On the other hand, when $\mathfrak{p}=\mathfrak{p}_{\text{crit}}^{1/9}\simeq 0.71$, the expected number of triangles is approximately 57960, which is much larger than the expected number of edges (approximately~3516). This explains the extremely poor recovery rates observed in Fig.~\ref{fig:Exp_Inc_Matr_2}.

\begin{figure}
\centering
\includegraphics[width=0.45\textwidth]{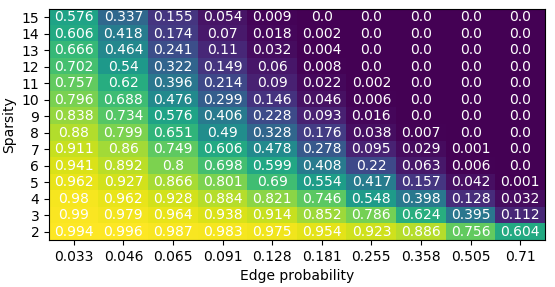}
\caption{Exact recovery probability of $\ell_1$-minimization via~\eqref{eq:Exp_Inc_Matr_1} when the dictionary is the incidence matrix of an Erd\"{o}s-R\'{e}nyi graph with $100$ vertices. The horizontal axis shows the edge probability $\mathfrak{p}$ and the vertical axis is the sparsity of the vector $\bar x$ that is attempted to be recovered. The critical edge probability value as described in Section~\ref{sec:ER} is  $\mathfrak{p}_{\text{crit}}\simeq 0.046$.}
\label{fig:Exp_Inc_Matr_2}
\end{figure}

\section{Application to Partial DFT Matrices}
The second application we present comes from the Discrete Fourier Transform (DFT) of dimension $n$, which we denote by $\mathcal{F}_n$. We consider the partial DFT matrices, each denoted by $\mathcal{F}_\Omega$ for some $\Omega\subseteq \mathcal{U}_n$, which are the submatrices of the DFT matrix consisting of the rows indexed by $\Omega$ with $m:=|\Omega|\geq 1$. 
Our goal is to understand the exact recovery capabilities of
\begin{equation}\label{eq:l1pDFT}
    \min \|x\|_1 \text{ subject to }\mathcal{F}_\Omega \bar{x} = \mathcal{F}_\Omega x \text{ and }x,\bar{x}\in\mathbb{R}^n.
\end{equation}

\subsection{Notation and Preliminaries for DFT matrices}
Before we present the second application in detail, we would like to make a few remarks about the notation. Up until this section we only dealt with real matrices and real vector spaces. Here we generalize some of the previous concepts to the complex domain. For a complex vector $c\in \mathbb{C}^n$, $\|c\|_0$ denotes the number of nonzero entries of the vector $c$. If $\Phi\in \mathbb{C}^{m\times n}$, then $\nullsp(\Phi)$ denotes the complex nullspace. That is, $\nullsp(\Phi) = \{z\in \mathbb{C}^n: \Phi z = 0\}$. In this subsection, when we talk about the dimension of a subspace $\mathcal{V}\subseteq\mathbb{C}^n$, we mean its dimension as a complex subspace. For any $\Omega, \Gamma \subseteq \mathcal{U}_n$ and any matrix $\Phi\in\mathbb{C}^{m\times n}$, we use $\Phi_{\Omega,\Gamma}$ to denote the submatrix of $\Phi$ consisting of the rows and columns from $\Phi$ whose indices are in $\Omega$ and $\Gamma$, respectively. For a complex number $z\in \mathbb{C}$, we let $|z|$ denote the norm of $z$.

For a positive integer $n$ and $\xi := e^{-\frac{2\pi}{n}i}$, the 
DFT matrix $\mathcal{F}_n\in \mathbb{C}^{n\times n}$ is defined entrywise as 
\begin{equation}
    [\mathcal{F}_n]_{k,l} := \tfrac{1}{\sqrt{n}}(\xi^{kl})
    \ \ \text{for} \ \ 0\leq k,l \leq n-1.
\end{equation}
The following remark gives useful properties of the DFT matrix in the case that $n$ is a prime number. 

\begin{rem}\label{rem:tao_k+1}
Tao~\cite{tao:03} shows that when $n$ is a prime number,  any $k\times k$ minor of  $\mathcal{F}_n$ is nonzero \cite[Lem.~1.3]{tao:03} for all $1 \leq k \leq n$. Also, as a consequence of \cite[Thm.~1.1]{tao:03}, any sparse polynomial
$\sum_{j=0}^{k}c_j z^{n_j}$
with $k + 1$ nonzero coefficients and $0 \leq n_0 < \cdots < n_k < n$, can have at most $k$ of the $n^{th}$ roots of unity as zeros, 
i.e., such a polynomial cannot vanish at more than $k$ of the $n^{th}$ roots of unity. {For an analogous result where $n$ need not be a prime number, we refer to \cite{meshulam:06}.}
\end{rem}

This remark has the following immediate consequences. 
\begin{lemma}\label{lem:tao-conseq}
Let $n$ be a prime number and $\Omega, \Gamma \subseteq \mathcal{U}_n$ be nonempty index sets satisfying $|\Gamma| = |\Omega|+1$. 
\begin{itemize}
    \item[(i)]
    All nonzero $c\in \nullsp(\mathcal{F}_{\Omega, \Gamma})$ satisfy $\|c\|_0 = |\Gamma|$.
    \item[(ii)]
    It holds that $\dim (\nullsp(\mathcal{F}_{\Omega,  \Gamma})) =1$.  Moreover, the nonzero vector $\nu\in \mathbb{C}^{|\Gamma|}$ defined entrywise as
\begin{equation*}
    \nu_t := (-1)^t\det(\mathcal{F}_{\Omega, \Gamma\setminus\{\Gamma_{t}\}}) 
    \ \ \text{for all} \  t\in \mathcal{U}_{|\Gamma|}
\end{equation*}
spans $\nullsp(\mathcal{F}_{\Omega, \Gamma})$.
\end{itemize}
\end{lemma}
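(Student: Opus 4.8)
The plan is to extract everything from the single structural fact supplied by Rem.~\ref{rem:tao_k+1}: when $n$ is prime, every $k\times k$ minor of $\mathcal{F}_n$ is nonzero. Write $k := |\Omega|$, so that $\mathcal{F}_{\Omega,\Gamma}$ is a $k\times(k+1)$ matrix, with $1\le k\le n-1$ by the nonemptiness and cardinality hypotheses. For each $t\in\mathcal{U}_{|\Gamma|}$ the square submatrix $\mathcal{F}_{\Omega,\Gamma\setminus\{\Gamma_t\}}$ obtained by deleting one column is a $k\times k$ submatrix of $\mathcal{F}_n$, hence has nonzero determinant and is invertible. Since $\mathcal{F}_{\Omega,\Gamma}$ possesses an invertible $k\times k$ submatrix, it has rank exactly $k$, and therefore its nullspace has dimension $(k+1)-k=1$. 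This already settles the first assertion of part (ii).

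For part (i), I would argue by contradiction. Suppose some nonzero $c\in\nullsp(\mathcal{F}_{\Omega,\Gamma})$ has a vanishing coordinate, say $c_t=0$. Then $c$ is supported on $\Gamma\setminus\{\Gamma_t\}$, so deleting that coordinate produces a nonzero vector $c'\in\mathbb{C}^{k}$ with $\mathcal{F}_{\Omega,\Gamma\setminus\{\Gamma_t\}}\,c'=0$. But $\mathcal{F}_{\Omega,\Gamma\setminus\{\Gamma_t\}}$ is invertible by the previous paragraph, forcing $c'=0$, a contradiction. Hence every nonzero null vector has all $k+1$ coordinates nonzero, i.e. $\|c\|_0=k+1=|\Gamma|$, as claimed.

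It remains to verify the explicit spanning vector in part (ii). Since the nullspace is one-dimensional and $\nu$ is nonzero---each of its components is $\pm$ a nonzero maximal minor---it suffices to check that $\nu\in\nullsp(\mathcal{F}_{\Omega,\Gamma})$. This is the classical vector-of-signed-maximal-minors identity: for any row index $i\in\Omega$, the $i$-th component of $\mathcal{F}_{\Omega,\Gamma}\nu$ equals $\sum_{t}(-1)^t[\mathcal{F}_{\Omega,\Gamma}]_{i,t}\det(\mathcal{F}_{\Omega,\Gamma\setminus\{\Gamma_t\}})$, which is precisely the Laplace (cofactor) expansion, along a prepended duplicate row, of the determinant of the $(k+1)\times(k+1)$ matrix formed by bordering $\mathcal{F}_{\Omega,\Gamma}$ with a copy of its own $i$-th row. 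A matrix with two equal rows has determinant zero, so this sum vanishes for every $i$, giving $\mathcal{F}_{\Omega,\Gamma}\nu=0$. Thus $\nu$ is a nonzero element of the $1$-dimensional nullspace and spans it.

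The conceptual content is entirely carried by Rem.~\ref{rem:tao_k+1}; the only place demanding care is the spanning claim, where I must keep the sign convention $(-1)^t$ and the $0$-based ordering of $\Gamma$ consistent with the Laplace expansion so that the bordered-determinant argument matches the stated formula exactly. No genuine obstacle arises beyond this bookkeeping.
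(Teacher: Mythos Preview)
Your argument is correct. The spanning-vector verification via the bordered determinant with a repeated row is exactly what the paper does, so there is no difference there.

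Where you diverge is in the order and the tool used for (i) and the dimension claim in (ii). The paper proves (i) first, invoking the \emph{polynomial} form of Tao's result from Rem.~\ref{rem:tao_k+1}: a nonzero $c\in\nullsp(\mathcal{F}_{\Omega,\Gamma})$ with $\|c\|_0<|\Gamma|$ would yield a polynomial with at most $|\Omega|$ nonzero coefficients vanishing at $|\Omega|$ roots of unity, which is impossible. It then deduces $\dim=1$ from (i) by a linear-combination trick: two independent null vectors, each of full support by (i), can be combined to produce a nonzero null vector with a zero entry, contradicting (i). You instead use only the \emph{minor} form of Rem.~\ref{rem:tao_k+1}: a single invertible $k\times k$ submatrix gives $\mathrm{rank}=k$ and hence $\dim=1$ by rank--nullity, and the same invertibility of each $\mathcal{F}_{\Omega,\Gamma\setminus\{\Gamma_t\}}$ forces every nonzero null vector to have full support. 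Your route is a bit more direct and uses only one of the two facts packaged in Rem.~\ref{rem:tao_k+1}; the paper's route illustrates how (i) alone already controls the nullspace dimension. Either is fine.
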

\begin{proof}
To prove part (i) let $c\in \nullsp(\mathcal{F}_{\Omega, \Gamma})\setminus\{0\}$. If $\|c\|_0< |\Gamma|$, then $\sum_{k\in \Gamma} c_k z^k$ is a polynomial with $\|c\|_0\leq |\Omega|$ nonzero coefficients that vanishes at $|\Omega|$ roots of unity. This contradicts Tao's result in Rem.~\ref{rem:tao_k+1}, so that $\|c\|_0=|\Gamma|$ holds.

To prove the first part of (ii), we use an argument by contradiction.  Suppose that $\dim(\nullsp(\mathcal{F}_{\Omega, \Gamma})) > 1$, in which case there exists two linearly independent vectors $\{\bar{c}, \hat{c}\}\subset \nullsp(\mathcal{F}_{\Omega, \Gamma})$, which from part (i) must satisfy $\|\bar{c}\|_0 = \|\hat{c}\|_0 = |\Gamma|$. Next, define $\tilde{c} := (1/\bar{c}_1) \bar{c} - (1/\hat{c}_1)\hat{c}$ so that $\tilde{c} \in \nullsp(\mathcal{F}_{\Omega, \Gamma})$, $\tilde{c}_1 = 0$, and $\tilde{c} \neq 0$, which contradicts part (i).

To prove the second part of (ii), we note that for any $l\in \Omega$, the determinant of the $|\Gamma|\times |\Gamma|$ matrix 
$
\begin{bmatrix}
\mathcal{F}_{\{l\}, \Gamma}^T &
\mathcal{F}_{\Omega, \Gamma}^T
\end{bmatrix}^T
$
is zero because it has a repeated row. However, if we expand the determinant w.r.t. the first row we see that 
$$0=\sum_{t=0}^{|\Gamma|-1}(-1)^t \mathcal{F}_{\{l\}, \{\Gamma_t\}} \det(\mathcal{F}_{\Omega, \Gamma\setminus\{\Gamma_{t}\}}).$$
Since this holds for all $l\in \Omega$, we get $\nu \in \nullsp(\mathcal{F}_{\Omega, \Gamma})$. Moreover, it follows from the definition of $\nu$ that every entry in $\nu$ is nonzero because all square minors of $\mathcal{F}_n$ are nonzero (see Rem.~\ref{rem:tao_k+1}). This completes the proof.
\end{proof}

\subsection{Characterization of the MASC for Partial DFT Matrices}\label{ssec:pDFT}
We will characterize the exact recovery capabilities of \eqref{eq:l1pDFT} when $n$ is a prime number. This is achieved by characterizing
$\mathcal{T}_{\max}(\mathcal{F}_\Omega)$. 
By definition of the MASC, $\mathcal{T}_{\max}(\mathcal{F}_\Omega)$ is still a collection of supports of certain real signals, not complex ones. In fact, an index set $S\in \mathcal{T}_{\max}(\mathcal{F}_{\Omega})$ if and only if $\|z_S\|_1 < \frac{1}{2}$ for all $z\in \MinSupp(\nullsp(\mathcal{F}_\Omega)\cap\mathbb{R}^n)\cap\mathbb{S}^{n-1}_1$.
Therefore, our intermediate goal is to provide a description of the vectors of minimal support in $\nullsp(\mathcal{F}_\Omega)$, i.e. $\MinSupp(\nullsp(\mathcal{F}_\Omega)\cap\mathbb{R}^n)$. First note that if we define $\bar{\Omega} = \Omega \cup \{n-k: k\in \Omega\setminus\{0\}\}$, then $\nullsp(\mathcal{F}_\Omega)\cap \mathbb{B}_1^n = \nullsp(\mathcal{F}_{\bar{\Omega}})\cap \mathbb{B}_1^n$, which holds because if a polynomial with real coefficients vanishes at an $n^{th}$ root of unity $\xi^{k}$, then it also vanishes at its complex conjugate $\xi^{n-k}$.
Thus, {for the purposes of the $\ell_1$-minimization problem \eqref{eq:l1pDFT}}, without loss of generality, we can assume that
\begin{equation}\label{eq:omegaAssump}
    k\in \Omega \implies n-k\in \Omega\text{ for }k=1,\dots,n-1
\end{equation}
since the  unknown signal is assumed to be real. 

\begin{thm}\label{thm:minSuppFomega}
Let $n$ be a prime number and $\Omega\subseteq \mathcal{U}_n$ be nonempty index set satisfying \eqref{eq:omegaAssump}.  Then the following hold:
\begin{itemize}
    \item[(i)] If $\Gamma \subseteq\mathcal{U}_n$
satisfies $|\Gamma| = |\Omega|+1$, then $\nullsp(\mathcal{F}_{\Omega, \Gamma})$ is spanned by a real vector.  Moreover, it must be the case that  $\nullsp(\mathcal{F}_\Omega)
\cap \nullsp(\proj_{\Gamma^c})
\cap \mathbb{S}^{n-1}_1
\neq \emptyset$.
    \item[(ii)] The $\ell_1$ unit norm minimum support vectors satisfy \begin{align*}
    \MinSupp(&\nullsp(\mathcal{F}_\Omega)\cap\mathbb{R}^n) \cap \mathbb{S}_1^{n-1}  \\
    &= \!\!\!\!\!\! \bigcup_{\substack{\Gamma\subseteq \mathcal{U}_n\\ |\Gamma| = |\Omega|+1}} \!\!\!\!\!\!
    \big(\nullsp(\mathcal{F}_\Omega)
    \cap \nullsp(\proj_{\Gamma^c})
    \cap \mathbb{S}^{n-1}_1\big).
\end{align*}
\end{itemize}    
\end{thm}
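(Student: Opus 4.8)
The plan is to establish part (i) first and then bootstrap it to identify the minimal-support vectors and prove the set equality in part (ii). The crux of the whole argument is a conjugation-symmetry observation that forces the relevant nullspace to be spanned by a real vector. Since $\xi = e^{-2\pi i/n}$, the row of $\mathcal{F}_{\Omega,\Gamma}$ indexed by $j$ has entries $\tfrac{1}{\sqrt n}\xi^{jk}$, whose complex conjugate $\tfrac{1}{\sqrt n}\xi^{(n-j)k}$ is precisely the row indexed by $n-j \bmod n$. Because $\Omega$ satisfies \eqref{eq:omegaAssump} (and row $0$ is real), the map $j\mapsto n-j \bmod n$ permutes $\Omega$, so $\overline{\mathcal{F}_{\Omega,\Gamma}}$ is a row permutation of $\mathcal{F}_{\Omega,\Gamma}$ and therefore has the same nullspace. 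Hence $c\in\nullsp(\mathcal{F}_{\Omega,\Gamma})$ implies $\bar c\in\nullsp(\mathcal{F}_{\Omega,\Gamma})$. By Lem.~\ref{lem:tao-conseq}(ii) this nullspace is one-dimensional, spanned by the vector $\nu$; thus $\bar\nu=\lambda\nu$ for some $\lambda\in\mathbb{C}$, and conjugating once more forces $|\lambda|=1$. Writing $\lambda=e^{i\theta}$, the rescaled vector $e^{i\theta/2}\nu$ is fixed by conjugation, i.e.\ real, and still spans the nullspace. This is the step I expect to be the main obstacle, as it is the only place the hypothesis \eqref{eq:omegaAssump} is used and it is what produces reality.

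For the second claim of (i), let $\mu$ denote the real spanning vector just produced. The real points of the complex line $\mathbb{C}\mu$ are exactly $\mathbb{R}\mu$, a nonzero one-dimensional real space. Lifting $\mu$ to a vector $x\in\mathbb{R}^n$ that agrees with $\mu$ on $\Gamma$ and vanishes off $\Gamma$ yields a nonzero real vector with $\supprt(x)\subseteq\Gamma$ and $\mathcal{F}_\Omega x=\mathcal{F}_{\Omega,\Gamma}\mu=0$; after normalizing by $\|x\|_1$ it lies in $\nullsp(\mathcal{F}_\Omega)\cap\nullsp(\proj_{\Gamma^c})\cap\mathbb{S}^{n-1}_1$, which is therefore nonempty.

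For part (ii), I would first characterize minimal support. Using Rem.~\ref{rem:tao_k+1} directly: if $x\in\nullsp(\mathcal{F}_\Omega)\setminus\{0\}$ is real with support $S$, then $p(z)=\sum_{k\in S}x_k z^k$ has $|S|$ nonzero coefficients and vanishes at the $|\Omega|$ distinct roots of unity $\{\xi^j:j\in\Omega\}$, which forces $|\Omega|\le|S|-1$, i.e.\ $\|x\|_0\ge|\Omega|+1$. I claim $x$ has minimal support in $\nullsp(\mathcal{F}_\Omega)\cap\mathbb{R}^n$ if and only if $\|x\|_0=|\Omega|+1$. The ``if'' direction is immediate from this lower bound, since no nonzero vector with strictly smaller support can exist. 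For ``only if'', if $\|x\|_0\ge|\Omega|+2$ I would pick any $\Gamma\subseteq\supprt(x)$ with $|\Gamma|=|\Omega|+1$; part (i) then supplies a nonzero real nullvector supported inside $\Gamma\subsetneq\supprt(x)$, contradicting minimality.

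Finally I would prove the asserted equality by double inclusion. If $z$ belongs to the left-hand set, it is a real unit-$\ell_1$ minimal-support vector, so $\|z\|_0=|\Omega|+1$; choosing $\Gamma=\supprt(z)$ places $z$ in the corresponding term of the union. Conversely, any $z$ in the union satisfies $\supprt(z)\subseteq\Gamma$ with $|\Gamma|=|\Omega|+1$ and is a nonzero real nullvector on $\mathbb{S}^{n-1}_1$, so the lower bound forces $\supprt(z)=\Gamma$ and $\|z\|_0=|\Omega|+1$; hence $z$ is a minimal-support vector on $\mathbb{S}^{n-1}_1$ and lies in the left-hand set. The remaining steps after the conjugation argument are essentially bookkeeping with Lem.~\ref{lem:tao-conseq} and support counting.
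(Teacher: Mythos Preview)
Your proposal is correct and follows essentially the same approach as the paper: both use the conjugation symmetry of $\Omega$ (from \eqref{eq:omegaAssump}) together with the one-dimensionality from Lem.~\ref{lem:tao-conseq}(ii) to produce a real spanning vector, and then use Tao's bound (Rem.~\ref{rem:tao_k+1}/Lem.~\ref{lem:tao-conseq}(i)) to pin the minimal-support size at $|\Omega|+1$ and finish by double inclusion. The only cosmetic difference is in how the real generator is extracted in part (i): the paper takes $\nu+\bar\nu$ (or $\nu/i$ in the purely imaginary case), whereas you rotate by $e^{i\theta/2}$; both are standard and equivalent.
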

\begin{proof}
To prove part (i), we start by letting $\Gamma\subseteq\mathcal{U}_n$ satisfy $|\Gamma| = |\Omega|+1$.
It follows from Lem.~\ref{lem:tao-conseq}(ii) that
$\dim (\nullsp(\mathcal{F}_{\Omega, \Gamma})) = 1$, so we can find a nonzero $\nu\in\mathbb{C}^{|\Gamma|}$ that spans $\nullsp(\mathcal{F}_{\Omega, \Gamma})$. Since $\Omega$ satisfies \eqref{eq:omegaAssump}, the complex conjugate  of $\nu$ is also in the nullspace, i.e., $\bar{\nu}\in \nullsp(\mathcal{F}_{\Omega, \Gamma})$. If all entries of $\nu$ are purely imaginary, then we can divide all entries of $\nu$ by  $i$ to obtain a nonzero real vector $\nu_{\text{real}}$ that spans $\nullsp(\mathcal{F}_{\Omega, \Gamma})$. Otherwise, the vector  $\nu_{\text{real}} := \nu + \bar{\nu}$ is nonzero, real, and spans $\nullsp(\mathcal{F}_{\Omega, \Gamma})$. Using $\nu_{\text{real}}\in\mathbb{R}^{|\Gamma|}$, we then define
$\bar{\nu}_{\text{real}}\in \mathbb{R}^{n}$ via $[\bar{\nu}_{\text{real}}]_{\Gamma} 
= \nu_{\text{real}}/\|\nu_{\text{real}}\|_1$
and 
$[\bar{\nu}_{\text{real}}]_{\Gamma^c} 
= 0$,
so that $\bar{\nu}_{\text{real}}\in \nullsp(\mathcal{F}_\Omega)
\cap \nullsp(\proj_{\Gamma^c})
\cap \mathbb{S}^{n-1}_1$, completing the proof.

To prove part (ii), we show both inclusions in turn.

[$\supseteq$]
Let $z\in \nullsp(\mathcal{F}_\Omega)
    \cap \nullsp(\proj_{\Gamma^c})
    \cap \mathbb{S}_1^{n-1}$
for some $\Gamma\subseteq\mathcal{U}_n$
satisfying $|\Gamma| = |\Omega|+1$. In particular, this means that $z \neq 0$,  $z\in\nullsp(\mathcal{F}_\Omega)\cap\mathbb{R}^n$, and $\supprt(z)\subseteq\Gamma$ .  We also claim that $z\in\MinSupp(\nullsp(\mathcal{F}_\Omega)\cap\mathbb{R}^n)$, which can be seen as follows.  Suppose, in order to reach a contradiction, that $z$ is not in $\MinSupp(\nullsp(\mathcal{F}_\Omega)\cap\mathbb{R}^n)$, i.e., that there exists a nonzero vector $\bar{z}$ satisfying $\supprt(\bar{z})\subsetneq \supprt(z)\subseteq\Gamma$ and $\bar{z}\in\nullsp(\mathcal{F}_\Omega)\cap\mathbb{R}^n$. It follows that $c:= [\bar{z}]_\Gamma$ satisfies $c\in\nullsp(\mathcal{F}_{\Omega,\Gamma})\setminus\{0\}$ and $\|c\|_0 < \Gamma$, which contradicts Lem.~\ref{lem:tao-conseq}(i). Thus, $z\in \MinSupp(\nullsp(\mathcal{F}_\Omega)\cap\mathbb{R}^n) \cap \mathbb{S}_1^{n-1}$, as claimed.    

[$\subseteq$] 
Let $z\in\MinSupp(\nullsp(\mathcal{F}_\Omega)\cap\mathbb{R}^n)\cap \mathbb{S}_1^{n-1}$ and define its support as $\Gamma = \supprt(z)$.
We claim that  $|\Gamma| \leq |\Omega|+1$. 
For a proof by contradiction, suppose that $|\Gamma| > |\Omega|+1$.  Then choose any $\bar\Gamma\subset \Gamma$ satisfying $|\bar{\Gamma}| = |\Omega| + 1$, and then note that $\nullsp(\mathcal{F}_{\Omega,\bar{\Gamma}})$ is nontrivial so that there exists a vector $\bar{\eta}\in\mathbb{R}^{|\bar{\Gamma}|}$ satisfying $\bar{\eta} \in\nullsp(\mathcal{F}_{\Omega,\bar{\Gamma}})\cap\mathbb{S}_1^{|\bar{\Gamma}|-1}$. Defining $\bar{z}\in\mathbb{R}^n$ as $[\bar{z}]_{\bar{\Gamma}} = \bar{\eta}$ and $[\bar{z}]_{\bar{\Gamma}^c} = 0$, it follows that $\bar{z}\in\mathbb{S}^{n-1} \cap \nullsp(\mathcal{F}_\Omega)$ and   $\supprt(\bar{z})\subseteq\bar{\Gamma} \subsetneq\Gamma = \supprt(z)$, which contradicts $z\in\MinSupp(\nullsp(\mathcal{F}_\Omega)\cap\mathbb{R}^n)$. Thus, we know that $|\Gamma| \leq |\Omega|+1$ holds. 
Finally, choose any $\hat{\Gamma}\subseteq\mathcal{U}_n$ satisfying $\Gamma \subseteq \hat{\Gamma}$ and $|\hat{\Gamma}| = |\Omega| + 1$. It follows that $z \in
\nullsp(\mathcal{F}_\Omega)
    \cap \nullsp(\proj_{\hat{\Gamma}^c})
    \cap \mathbb{S}^{n-1}_1$
with
$\hat{\Gamma}\subseteq\mathcal{U}_n$
and $|\hat{\Gamma}| = |\Omega| + 1$,
thus completing the proof.
\end{proof}

Thm.~\ref{thm:minSuppFomega} provides an interesting  correspondence between  $\MinSupp(\nullsp(\mathcal{F}_\Omega)\cap\mathbb{R}^n) \cap \mathbb{S}_1^{n-1}$ and the index sets $\Gamma\subseteq \mathcal{U}_n$ with $|\Gamma| = |\Omega|+1$. A consequence of Thm.~\ref{thm:minSuppFomega} and Prop.~\ref{prop:minSuppExt} is 
\begin{equation*}
    \Ext(\nullsp(\mathcal{F}_\Omega)\cap \mathbb{B}_1^{n}) = \!\!\!\!\!\! \bigcup_{\substack{\Gamma\subseteq \mathcal{U}_n\\ |\Gamma| = |\Omega|+1}} \!\!\!\!\!\!  \big( \nullsp(\mathcal{F}_\Omega)
    \cap \nullsp(\proj_{\Gamma^c})
    \cap \mathbb{S}^{n-1}_1\big)
\end{equation*}
anytime $\nullsp(\mathcal{F}_\Omega)$ is nontrivial. 
{We are now ready to state the main result of this section.

\begin{thm}\label{thm:pDFTasc_0}
Let $n$ be a prime number, $\Omega\subseteq \mathcal{U}_n$ satisfy \eqref{eq:omegaAssump}, and $S\subseteq \mathcal{U}_n$.  Then, the following statements are equivalent
\begin{enumerate}
    \item $S\in \mathcal{T}_{\max}(\mathcal{F}_\Omega)$.\label{item:pDFTasc_0_S}
    \item For all $\Gamma\subseteq \mathcal{U}_n$ with $|\Gamma| = |\Omega|+1$, we have
$$ \sum_{k\in S\cap \Gamma} |\det\left(\mathcal{F}_{\Omega, \Gamma\setminus\{k\}}\right)| <  \sum_{k\in S^{c}\cap \Gamma} |\det\left(\mathcal{F}_{\Omega, \Gamma\setminus\{k\}}\right)|.$$\label{item:pDFTasc_0_Gamma}
    \item Every $\bar{x}\in \mathbb{R}^n$ with $\supprt(\bar{x})\subseteq S$ is the unique solution to \eqref{eq:l1pDFT}.\label{item:pDFTasc_0_l1}
\end{enumerate}
\end{thm}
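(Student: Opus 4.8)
The plan is to establish the two equivalences $(\ref{item:pDFTasc_0_S})\Leftrightarrow(\ref{item:pDFTasc_0_l1})$ and $(\ref{item:pDFTasc_0_S})\Leftrightarrow(\ref{item:pDFTasc_0_Gamma})$ separately, both anchored at statement (\ref{item:pDFTasc_0_S}).

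For $(\ref{item:pDFTasc_0_S})\Leftrightarrow(\ref{item:pDFTasc_0_l1})$, I would apply Thm.~\ref{thm:genthm2} with the collection $\mathcal{T}=2^{S}$. Since $\supprt(\bar x)\subseteq S$ is the same as $\supprt(\bar x)\in 2^{S}$, statement (\ref{item:pDFTasc_0_l1}) says precisely that every $\bar x$ with support in $2^{S}$ is the unique solution of \eqref{eq:l1pDFT}, which by Thm.~\ref{thm:genthm2} holds if and only if $2^{S}\subseteq\mathcal{T}_{\max}(\mathcal{F}_\Omega)$. Because $\mathcal{T}_{\max}(\mathcal{F}_\Omega)$ is an ASC, hence downward closed, the inclusion $2^{S}\subseteq\mathcal{T}_{\max}(\mathcal{F}_\Omega)$ is equivalent to $S\in\mathcal{T}_{\max}(\mathcal{F}_\Omega)$, which closes this equivalence. (The restriction of \eqref{eq:l1pDFT} to real signals is consistent with Thm.~\ref{thm:genthm2}, which is stated for real vectors.)

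For $(\ref{item:pDFTasc_0_S})\Leftrightarrow(\ref{item:pDFTasc_0_Gamma})$, I would start from the characterization that $S\in\mathcal{T}_{\max}(\mathcal{F}_\Omega)$ if and only if $\|z_S\|_1<\tfrac12$ for every $z\in\MinSupp(\nullsp(\mathcal{F}_\Omega)\cap\mathbb{R}^n)\cap\mathbb{S}^{n-1}_1$ (Prop.~\ref{prop:maxASCminSupp} applied with $\mathcal{V}=\nullsp(\mathcal{F}_\Omega)\cap\mathbb{R}^n$, as noted in the paragraph preceding Thm.~\ref{thm:minSuppFomega}). I would then substitute the explicit description of these minimal-support vectors from Thm.~\ref{thm:minSuppFomega}(ii), which writes the set as a union, over all $\Gamma\subseteq\mathcal{U}_n$ with $|\Gamma|=|\Omega|+1$, of $\nullsp(\mathcal{F}_\Omega)\cap\nullsp(\proj_{\Gamma^c})\cap\mathbb{S}^{n-1}_1$. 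In this way the quantifier over minimal-support vectors becomes a quantifier over such $\Gamma$ together with the real unit-$\ell_1$ vectors in the nullspace supported on $\Gamma$. The crux is to evaluate $\|z_S\|_1$ for a fixed $\Gamma$: by Lem.~\ref{lem:tao-conseq}(ii) the complex space $\nullsp(\mathcal{F}_{\Omega,\Gamma})$ is one-dimensional and spanned by $\nu$ with $\nu_t=(-1)^t\det(\mathcal{F}_{\Omega,\Gamma\setminus\{\Gamma_t\}})$, while by Thm.~\ref{thm:minSuppFomega}(i) it is also spanned by a real vector $\nu_{\mathrm{real}}$; writing $\nu=\alpha\,\nu_{\mathrm{real}}$ for some $\alpha\in\mathbb{C}\setminus\{0\}$ shows that the entry of $\nu_{\mathrm{real}}$ at coordinate $k=\Gamma_t$ has absolute value $|\det(\mathcal{F}_{\Omega,\Gamma\setminus\{k\}})|/|\alpha|$. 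The two real unit-$\ell_1$ vectors on this line are $\pm\nu_{\mathrm{real}}/\|\nu_{\mathrm{real}}\|_1$, and for either of them the common factor $1/|\alpha|$ cancels, yielding
$$\|z_S\|_1=\frac{\sum_{k\in S\cap\Gamma}|\det(\mathcal{F}_{\Omega,\Gamma\setminus\{k\}})|}{\sum_{k\in\Gamma}|\det(\mathcal{F}_{\Omega,\Gamma\setminus\{k\}})|}.$$
Since $\Gamma$ is the disjoint union of $S\cap\Gamma$ and $S^{c}\cap\Gamma$, the inequality $\|z_S\|_1<\tfrac12$ is equivalent to $\sum_{k\in S\cap\Gamma}|\det(\mathcal{F}_{\Omega,\Gamma\setminus\{k\}})|<\sum_{k\in S^{c}\cap\Gamma}|\det(\mathcal{F}_{\Omega,\Gamma\setminus\{k\}})|$, and requiring this for every admissible $\Gamma$ is exactly statement (\ref{item:pDFTasc_0_Gamma}).

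I expect the main obstacle to be the bookkeeping in this last step: correctly matching the index $t\in\mathcal{U}_{|\Gamma|}$ of $\nu$ to the coordinate $\Gamma_t\in\mathcal{U}_n$ of the lifted vector, and verifying that passing from the complex spanning vector of Lem.~\ref{lem:tao-conseq} to a real one preserves the entrywise absolute values up to a single positive scalar that cancels in the ratio. I would also treat the degenerate case $\nullsp(\mathcal{F}_\Omega)=\{0\}$ (for instance $\Omega=\mathcal{U}_n$) separately: then there are no $\Gamma$ with $|\Gamma|=|\Omega|+1$, so (\ref{item:pDFTasc_0_Gamma}) holds vacuously, the set of minimal-support vectors is empty so (\ref{item:pDFTasc_0_S}) holds for every $S$, and uniqueness in (\ref{item:pDFTasc_0_l1}) is automatic, keeping all three statements simultaneously true.
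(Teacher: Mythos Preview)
Your proposal is correct and follows essentially the same approach as the paper: the paper's proof simply cites Thm.~\ref{thm:genthm2} for $(\ref{item:pDFTasc_0_S})\Leftrightarrow(\ref{item:pDFTasc_0_l1})$ and the combination of Prop.~\ref{prop:maxASCminSupp}, Thm.~\ref{thm:minSuppFomega}(ii), and Lem.~\ref{lem:tao-conseq}(ii) for $(\ref{item:pDFTasc_0_S})\Leftrightarrow(\ref{item:pDFTasc_0_Gamma})$, and you have spelled out precisely how these pieces fit together. Your explicit bookkeeping on passing from the complex spanning vector $\nu$ to a real generator and the resulting cancellation of the scalar $|\alpha|$, as well as your treatment of the degenerate case $\nullsp(\mathcal{F}_\Omega)=\{0\}$, are details the paper leaves implicit but which are handled correctly in your write-up.
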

\begin{proof}
The equivalence of \eqref{item:pDFTasc_0_S} and \eqref{item:pDFTasc_0_Gamma} follows from  Prop.~\ref{prop:maxASCminSupp},
Thm.~\ref{thm:minSuppFomega}(ii), and
Lem.~\ref{lem:tao-conseq}(ii). The equivalence of \eqref{item:pDFTasc_0_S} and \eqref{item:pDFTasc_0_l1} follows from Thm.~\ref{thm:genthm2}.
\end{proof}}

{
We discuss the implications of Thm.~\ref{thm:pDFTasc_0} with an example.
\begin{exm}\label{exm:pDFT_exm}
Let $n=11$ and consider a $5\times 11$ partial DFT matrix, $\mathcal{F}_{\Omega}$, where $\Omega = \{0,2,4,7,9\}$. From Thm.~\ref{thm:minSuppFomega}, a vector $x\in \nullsp(\mathcal{F}_\Omega)\cap \mathbb{R}^n$ has minimal support if and only if $\|x\|_0=5+1=6$. Furthermore, by Thm.~\ref{thm:pDFTasc_0}, we have $S\in \mathcal{T}_{\max}(\mathcal{F}_\Omega)$ if and only if for all $\Gamma\subseteq \mathcal{U}_{11}$ with $|\Gamma|=6$, we have
$$\sum_{k\in S\cap \Gamma} |\det\left(\mathcal{F}_{\Omega, \Gamma\setminus\{k\}}\right)| <  \sum_{k\in S^{c}\cap \Gamma} |\det\left(\mathcal{F}_{\Omega, \Gamma\setminus\{k\}}\right)|.$$
Then, a simple computer program can be coded to check this condition, which reveals that MASC is given by
\begin{align*}
\mathcal{T}_{\max}(\mathcal{F}_\Omega) = &\left\{S\in 2^{\mathcal{U}_{11}}: |S|\leq 1\right\}\\
                                        &\cup \left\{\{\alpha,\beta\}\in 2^{\mathcal{U}_{11}}: \left||\alpha-\beta|-\frac{11}{2}\right|\geq \frac{3}{2}\right\},
\end{align*}
which has $56$ elements (including the empty set). The  recovery over $1$-sparse signals is uniform. However, $2$-sparse signals are not recovered uniformly. For instance, $\{0,5\}\notin \mathcal{T}_{\max}(\mathcal{F}_\Omega)$. Hence, there are signals $\bar{x}\in \mathbb{R}^{11}$ with $\supprt(\bar{x})=\{0,5\}$ which cannot be recovered via \eqref{eq:l1pDFT}. Furthermore, $\mathcal{T}_{\max}(\mathcal{F}_\Omega)$ does not contain any index set of cardinality $3$ or higher.
\end{exm}
}

The determinants appearing in
Thm.~\ref{thm:pDFTasc_0} are a consequence of the computation of a vector spanning the one dimensional subspace  $\nullsp(\mathcal{F}_{\Omega, \Gamma})$ when $|\Gamma| = |\Omega|+1$, as indicated by Lem.~\ref{lem:tao-conseq}(ii).
Unfortunately, this characterization of the  $\nullsp(\mathcal{F}_{\Omega, \Gamma})$ is not useful for computational purposes since it requires the computation of a possibly large number of determinants of large matrices. 
One approach for significantly reducing the computation is to make the following assumption.
\begin{ass}\label{ass:Omega}
Let $n$ be a prime number that defines the size of the DFT matrix $\mathcal{F}_n$. 
The index set $\Omega$ takes the form
\begin{equation}\label{eq:assump_first_mbar}
    \Omega = \{0, 1,\dots, \bar{m}, n-\bar{m}, \dots, n-1\}
\end{equation}
for some $\bar{m}\in\{1,2,\dots, (n-3)/2 \}$
so that $|\Omega| = 2\bar{m}+1 \leq n-2$ since $p$ is a prime number. 
This implies that any $\Gamma\subseteq\mathcal{U}_n$ satisfying $|\Gamma| = |\Omega|+1$ must also satisfy $|\Gamma| \leq n-1$. 
\end{ass}

Given $\Omega$ satisfying  Assumption~\ref{ass:Omega}, we define, for any $\Gamma\subseteq\mathcal{U}_n$ satisfying $|\Gamma| = |\Omega| + 1$, the polynomial 
\begin{equation}\label{def:fGamma}
    f_\Gamma(z) 
    := \prod_{k\in \Gamma} (z-\xi^k)
    \equiv \sum_{i=0}^{n-1} c_i(\Gamma) z^i
\end{equation}
for some vector $c(\Gamma)\in\mathbb{R}^n$ of the form
\begin{equation}\label{def:cGamma}
c(\Gamma) 
= [ \
c_0(\Gamma) \
c_1(\Gamma) \ 
\dots \
c_{|\Gamma|}(\Gamma) \
0 \
0 \ 
\dots \
0 \
]^T.
\end{equation}
The action of the DFT on $c(\Gamma)$ is given componentwise by
\begin{equation}\label{DFT-on-c}
[\mathcal{F}_n c(\Gamma)]_k
= \sum_{i=0}^{n-1} c_i(\Gamma) \xi^{ik}
= \sum_{i=0}^{n-1} c_i(\Gamma) (\xi^{k})^i
= f_\Gamma(\xi^k), 
\end{equation}
which is a useful property. It follows from \eqref{def:fGamma} and \eqref{DFT-on-c} that 
\begin{equation} \label{F-is-zero}
[\mathcal{F}_n c(\Gamma)]_\Gamma = 0.
\end{equation}
The definition in \eqref{def:fGamma} also allows us to rewrite the determinants needed in Thm.~\ref{thm:pDFTasc_0} in a convenient form, as we now show. 

\begin{prop}\label{prop:extr_pt_fprime_char}
Let $n$ be a prime number and $\Omega\subseteq \mathcal{U}_n$ satisfy \eqref{eq:assump_first_mbar}.  If $\Gamma\subseteq \mathcal{U}_n$ satisfies $|\Gamma| = |\Omega|+1$, then 
$$
\det\left(\mathcal{F}_{\Omega, \Gamma\setminus\{k\}}\right) = \pm\beta(\Gamma)\frac{1}{\xi^{k(n-\bar{m})} f^\prime_\Gamma(\xi^{k})} \ \ \text{for all $k\in\Gamma$},
$$
where $f^\prime_\Gamma(z)$ is the derivative of $f_\Gamma(z)$ and 
\begin{equation}\label{def:betaGamma}
\beta(\Gamma) :=
\xi^{(n-\bar{m})\sum_{l\in \Gamma} l}
\prod_{l\in\Gamma,r\in \Gamma, l<r}\!\!\!\!(\xi^l-\xi^r) 
\end{equation}
is a complex constant independent of $k$.
\end{prop}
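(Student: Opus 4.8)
The plan is to compute $\det(\mathcal{F}_{\Omega,\Gamma\setminus\{k\}})$ directly, exploiting that, after two trivial rescalings, it is a Vandermonde determinant. Observe that $\mathcal{F}_{\Omega,\Gamma\setminus\{k\}}$ is the square $|\Omega|\times|\Omega|$ matrix with $(p,q)$-entry $\frac{1}{\sqrt n}\xi^{pq}$ for $p\in\Omega$, $q\in\Gamma\setminus\{k\}$. First I would pull the factor $\frac{1}{\sqrt n}$ out of each of the $|\Omega|$ rows, leaving the matrix $M$ with entries $(\xi^q)^p$. Because $\Omega$ has the symmetric form \eqref{eq:assump_first_mbar} and $\xi^n=1$, the row exponents $p\in\Omega$ are, modulo $n$, exactly $\{-\bar m,\dots,\bar m\}$. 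Factoring $(\xi^q)^{-\bar m}$ out of each column $q$ then turns $M$ into a genuine Vandermonde matrix in the nodes $\{\xi^q: q\in\Gamma\setminus\{k\}\}$, whose row exponents become $0,1,\dots,2\bar m$ after one $k$-independent reordering of the rows (contributing only a sign that depends on $\bar m$).

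Next I would apply the Vandermonde determinant formula, obtaining $\det M = \pm\big(\prod_{q\in\Gamma\setminus\{k\}}\xi^{-\bar m q}\big)\prod_{\substack{q,q'\in\Gamma\setminus\{k\}\\ q<q'}}(\xi^{q'}-\xi^q)$. The decisive observation is that this difference product is the full Vandermonde product $\prod_{l<r,\,l,r\in\Gamma}(\xi^l-\xi^r)$ with precisely the factors involving the index $k$ deleted, and those deleted factors equal $\pm\prod_{l\in\Gamma,l\ne k}(\xi^k-\xi^l)=\pm f'_\Gamma(\xi^k)$, using $f_\Gamma(z)=\prod_{l\in\Gamma}(z-\xi^l)$. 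Hence $\prod_{q<q'}(\xi^{q'}-\xi^q)=\pm\prod_{l<r}(\xi^l-\xi^r)\big/f'_\Gamma(\xi^k)$, which is how the derivative $f'_\Gamma(\xi^k)$ enters.

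Then I would tidy the column phase: since $\sum_{q\in\Gamma\setminus\{k\}}q=\sum_{l\in\Gamma}l-k$, and since $\xi^{-\bar m}=\xi^{n-\bar m}$ and $\xi^{\bar m k}=\xi^{-k(n-\bar m)}$, one gets $\prod_{q}\xi^{-\bar m q}=\xi^{(n-\bar m)\sum_{l\in\Gamma}l}\big/\xi^{k(n-\bar m)}$. Multiplying the two displayed pieces, every $k$-independent factor — the scalar $(1/\sqrt n)^{|\Omega|}$, the phase $\xi^{(n-\bar m)\sum_l l}$, the product $\prod_{l<r}(\xi^l-\xi^r)$, and the reordering sign — assembles into $\pm\beta(\Gamma)$ (up to the overall normalization $(1/\sqrt n)^{|\Omega|}$), while the dependence on $k$ is carried entirely by $\frac{1}{\xi^{k(n-\bar m)}f'_\Gamma(\xi^k)}$, which gives the claim.

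The routine but error-prone part — the main obstacle — is the sign and phase bookkeeping in the middle two steps: tracking the $k$-independent reordering sign, the $(-1)^{\#\{l\in\Gamma:l<k\}}$ sign produced when peeling $f'_\Gamma(\xi^k)$ out of the full difference product, and checking that $\xi^{-\bar m}$ really converts into the stated $\xi^{n-\bar m}$ phases; all of these are absorbed into the $\pm$. As a cross-check and slicker alternative, one can instead invoke Lem.~\ref{lem:tao-conseq}(ii): since $\nu_t=(-1)^t\det(\mathcal{F}_{\Omega,\Gamma\setminus\{\Gamma_t\}})$ spans the one-dimensional $\nullsp(\mathcal{F}_{\Omega,\Gamma})$, and the partial-fraction identity $\sum_{l\in\Gamma}(\xi^l)^{p}/f'_\Gamma(\xi^l)=0$ for $0\le p\le|\Gamma|-2$ shows that $u_l:=\xi^{\bar m l}/f'_\Gamma(\xi^l)$ also lies in that nullspace, the two spanning vectors must be proportional. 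This immediately yields the form $\det(\mathcal{F}_{\Omega,\Gamma\setminus\{k\}})=\pm\lambda/(\xi^{k(n-\bar m)}f'_\Gamma(\xi^k))$ with $\lambda$ independent of $k$, and one then pins down $\lambda=\beta(\Gamma)$ by the Vandermonde computation above.
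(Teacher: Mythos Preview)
Your main argument is essentially the paper's proof: reduce $\mathcal{F}_{\Omega,\Gamma\setminus\{k\}}$ to a Vandermonde determinant by scaling each column by $\xi^{-\bar m q}=\xi^{(n-\bar m)q}$, apply the Vandermonde formula, and then factor the resulting difference product as the full product over $\Gamma$ divided by $\prod_{l\ne k}(\xi^k-\xi^l)=f'_\Gamma(\xi^k)$. Your remark about the stray $(1/\sqrt n)^{|\Omega|}$ normalization is correct (the paper silently omits it, which is harmless since only the $k$-dependence and the moduli are used downstream), and your alternative partial-fraction/nullspace cross-check via Lem.~\ref{lem:tao-conseq}(ii) is a pleasant addition not present in the paper.
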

\begin{proof}
Let $k\in\Gamma$ and define $\bar{\Gamma} := \Gamma\setminus \{k\}$. 
If, for each $l\in\bar{\Gamma}$, we divide the corresponding column of $\mathcal{F}_{\Omega, \bar{\Gamma}}$ 
by the constant $\xi^{l(n-\bar{m})}$, then we end up with a Vandermonde matrix, $V(\bar{\Gamma})\in \mathbb{C}^{(2\bar{m}+1)\times(2\bar{m}+1)}$, whose entries are of the form
$[V(\bar{\Gamma})]_{i,j} = (\xi^{\bar{\Gamma}_j})^i$.
Using properties of determinants and the formula for the determinant of the Vandermonde matrix, we get
\begin{align*}
    \det(\mathcal{F}_{\Omega, \bar{\Gamma}})
        &=  \xi^{(n-\bar{m})\sum_{l\in \bar{\Gamma}} l}
        \prod_{l\in\bar{\Gamma},r\in \bar{\Gamma}, l<r}\!\!\!\!(\xi^l-\xi^r).
\end{align*}
Note that, up to a sign change, we have
\begin{equation*}
    \prod_{l\in\Gamma,r\in \Gamma, l<r}\!\!\!\!\!\!(\xi^l-\xi^r) = \pm\!\!\left(\prod_{l\in\bar{\Gamma},r\in \bar{\Gamma}, l<r}\!\!\!\!\!\!(\xi^l-\xi^r)\right)\!\!\left(\prod_{l\in \Gamma, l\neq k}\!\!\!\!(\xi^k-\xi^l)\right)\!.
\end{equation*}
But since
\begin{equation*}
    \prod_{l\in \Gamma, l\neq k}(\xi^k-\xi^l) = f^\prime_\Gamma(\xi^{k}),
\end{equation*}
up to a sign change, we have 
\begin{equation*}
    \det(\mathcal{F}_{\Omega, \bar{\Gamma}})
        = \pm\frac{\xi^{(n-\bar{m})\sum_{l\in \Gamma} l} \prod_{l\in\Gamma,r\in \Gamma, l< r}(\xi^l-\xi^r)}{\xi^{k(n-\bar{m})} f^\prime_\Gamma(\xi^{k})}.
\end{equation*}
The result now follows from the definition of $\beta(\Gamma)$ in~\eqref{def:betaGamma}. 
\end{proof}

We can further develop Prop.~\ref{prop:extr_pt_fprime_char} by observing that
\begin{align*}
nz^{n-1}&= \frac{d}{dz}(z^n-1) = \frac{d}{dz}(f_\Gamma(z)f_{\Gamma^c}(z)) \\
        &= f^\prime_\Gamma(z)f_{\Gamma^c}(z) + f_\Gamma(z)f_{\Gamma^c}^\prime(z),
\end{align*}
which combined with $f_\Gamma(\xi^k)=0$ for all $k\in\Gamma$, gives
\begin{equation*}
f'_\Gamma(\xi^k)
= \frac{n \xi^{k(n-1)}}{f_{\Gamma^c}(\xi^k)}\text{ for all }k\in\Gamma.
\end{equation*}
Combining this identity with 
Prop.~\ref{prop:extr_pt_fprime_char} gives the following result.

\begin{prop}\label{prop:extr_pt_f_char}
Let $n$ be a prime number and $\Omega\subseteq \mathcal{U}_n$ satisfy \eqref{eq:assump_first_mbar}.  
If $\Gamma\subseteq \mathcal{U}_n$ satisfies  $|\Gamma| = |\Omega|+1$, then
$$
\det\left(\mathcal{F}_{\Omega, \Gamma\setminus\{k\}}\right) = \pm\beta(\Gamma)\frac{f_{\Gamma^c}(\xi^k)}{n\xi^{(n-\bar{m}-1)k}}
\ \ \text{for all $k\in \Gamma$},
$$
where $\beta(\Gamma)\in \mathbb{C}$ is the constant defined in~\eqref{def:betaGamma}.
\end{prop}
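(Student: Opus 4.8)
The plan is to substitute the logarithmic-derivative identity for $f'_\Gamma(\xi^k)$ --- already derived in the paragraph immediately preceding the statement --- into the formula of Prop.~\ref{prop:extr_pt_fprime_char}, and then to reduce the resulting exponent of $\xi$ modulo $n$. The entire argument is a one-line substitution together with an elementary exponent simplification.

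Concretely, Prop.~\ref{prop:extr_pt_fprime_char} supplies, for each $k\in\Gamma$,
$$
\det\left(\mathcal{F}_{\Omega, \Gamma\setminus\{k\}}\right) = \pm\beta(\Gamma)\frac{1}{\xi^{k(n-\bar{m})} f^\prime_\Gamma(\xi^{k})},
$$
while differentiating $z^n-1 = f_\Gamma(z)f_{\Gamma^c}(z)$ and evaluating at $z=\xi^k$ with $k\in\Gamma$ (using $f_\Gamma(\xi^k)=0$) gives $f'_\Gamma(\xi^k)=n\xi^{k(n-1)}/f_{\Gamma^c}(\xi^k)$. Before inserting this I would note that the division is legitimate: since $n$ is prime the $n$-th roots of unity are all distinct, so $\xi^k$ is a simple root of $z^n-1$ and is a root of $f_\Gamma$ but not of $f_{\Gamma^c}$, whence $f_{\Gamma^c}(\xi^k)\neq 0$ for every $k\in\Gamma$.

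Substituting this expression into the displayed formula yields
$$
\det\left(\mathcal{F}_{\Omega, \Gamma\setminus\{k\}}\right) = \pm\beta(\Gamma)\frac{f_{\Gamma^c}(\xi^k)}{n\,\xi^{k(n-\bar{m})}\,\xi^{k(n-1)}}.
$$
The last step I would carry out is to collapse the exponent: because $\xi = e^{-\frac{2\pi}{n}i}$ satisfies $\xi^n=1$, we have $\xi^{kn}=1$, and therefore $\xi^{k(n-\bar{m})}\xi^{k(n-1)} = \xi^{k(2n-\bar{m}-1)} = \xi^{k(n-\bar{m}-1)}$. This produces exactly the denominator $n\xi^{(n-\bar{m}-1)k}$ asserted in the statement, completing the proof.

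I do not expect a genuine obstacle here, as the heavy lifting (the Vandermonde determinant evaluation and the product-rule identity) has already been done in Prop.~\ref{prop:extr_pt_fprime_char} and the preceding paragraph. The only points requiring a little care are the bookkeeping of the undetermined sign $\pm$, which is harmless since the identity is claimed only up to sign and $\beta(\Gamma)$ fixes the convention, and the reduction of the exponent of $\xi$ modulo $n$ via $\xi^n=1$.
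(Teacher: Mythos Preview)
Your proposal is correct and follows exactly the paper's approach: substitute the identity $f'_\Gamma(\xi^k)=n\xi^{k(n-1)}/f_{\Gamma^c}(\xi^k)$ (derived in the paragraph preceding the proposition) into the formula of Prop.~\ref{prop:extr_pt_fprime_char} and simplify. Your write-up is in fact slightly more detailed than the paper's, as you explicitly justify $f_{\Gamma^c}(\xi^k)\neq 0$ and spell out the exponent reduction $\xi^{k(2n-\bar m-1)}=\xi^{k(n-\bar m-1)}$ via $\xi^n=1$.
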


With the help of these propositions, we can provide another characterization of $\mathcal{T}_{\max}(\mathcal{F}_\Omega)$ when $\Omega$ satisfies \eqref{eq:assump_first_mbar} and the dimension of the unknown real signal is a prime number.

\begin{cor}\label{cor:pDFTasc}
Let $n$ be a prime number and the set $\Omega\subseteq \mathcal{U}_n$ satisfy \eqref{eq:assump_first_mbar}. It follows that $S\subseteq \mathcal{U}_n$ is in 
$\mathcal{T}_{\max}(\mathcal{F}_\Omega)$ if and only if for all $\Gamma\subseteq \mathcal{U}_n$ satisfying $|\Gamma| = |\Omega|+1$, one of the following equivalent conditions holds:
\begin{enumerate}[label=(C\arabic*), ref=C\arabic*, leftmargin=3\parindent]
    \item $\sum_{k\in S\cap \Gamma} |f^\prime_\Gamma(\xi^k)|^{-1} < \sum_{k\in S^c\cap \Gamma} |f^\prime_\Gamma(\xi^k)|^{-1}$. \label{eq:fPrimeGamma}
    \item $\sum_{k\in S\cap\Gamma}|f_{\Gamma^c}(\xi^k)| < \sum_{k\in S^c\cap\Gamma}|f_{\Gamma^c}(\xi^k)|$. \label{eq:fGammaC}
\end{enumerate}
\end{cor}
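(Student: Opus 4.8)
The plan is to derive both characterizations directly from Theorem~\ref{thm:pDFTasc_0}, substituting the closed-form expressions for the determinants supplied by Propositions~\ref{prop:extr_pt_fprime_char} and~\ref{prop:extr_pt_f_char}. By Theorem~\ref{thm:pDFTasc_0}, the set $S$ lies in $\mathcal{T}_{\max}(\mathcal{F}_\Omega)$ if and only if, for every $\Gamma\subseteq\mathcal{U}_n$ with $|\Gamma|=|\Omega|+1$,
$$\sum_{k\in S\cap\Gamma}\bigl|\det(\mathcal{F}_{\Omega,\Gamma\setminus\{k\}})\bigr| < \sum_{k\in S^c\cap\Gamma}\bigl|\det(\mathcal{F}_{\Omega,\Gamma\setminus\{k\}})\bigr|.$$
So it suffices to rewrite $|\det(\mathcal{F}_{\Omega,\Gamma\setminus\{k\}})|$ in each of the two desired forms and then cancel a common positive factor that does not depend on $k$.

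First I would fix $\Gamma$ with $|\Gamma|=|\Omega|+1$ and apply Proposition~\ref{prop:extr_pt_fprime_char}. Taking absolute values of the stated identity and using $|\xi|=1$ (so that $|\xi^{k(n-\bar m)}|=1$), I obtain $|\det(\mathcal{F}_{\Omega,\Gamma\setminus\{k\}})| = |\beta(\Gamma)|\,|f'_\Gamma(\xi^k)|^{-1}$ for every $k\in\Gamma$. Substituting into the determinant inequality above and dividing through by the $k$-independent quantity $|\beta(\Gamma)|$ yields (C1). An entirely analogous computation with Proposition~\ref{prop:extr_pt_f_char}, this time using $|\xi^{(n-\bar m-1)k}|=1$, gives $|\det(\mathcal{F}_{\Omega,\Gamma\setminus\{k\}})| = (|\beta(\Gamma)|/n)\,|f_{\Gamma^c}(\xi^k)|$; dividing the inequality by $|\beta(\Gamma)|/n$ produces (C2). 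Since (C1) and (C2) are each shown equivalent to the same determinant inequality for every admissible $\Gamma$, they are equivalent to one another and to membership in $\mathcal{T}_{\max}(\mathcal{F}_\Omega)$.

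The only point requiring care — and the main, though mild, obstacle — is the legitimacy of these cancellations, i.e.\ that the common factor is strictly positive. I would verify that $\beta(\Gamma)\neq 0$: from its definition in~\eqref{def:betaGamma} it is a power of $\xi$ times a product of differences $\xi^l-\xi^r$ with $l\neq r$, and since $n$ is prime and $\xi$ is a primitive $n$th root of unity, $\xi^l=\xi^r$ forces $l\equiv r\pmod n$, which is impossible for distinct $l,r\in\mathcal{U}_n$; hence every factor, and therefore $\beta(\Gamma)$, is nonzero. One should also record that the quantities appearing in (C1) and (C2) are themselves well defined: for $k\in\Gamma$ one has $f'_\Gamma(\xi^k)=\prod_{l\in\Gamma,\,l\neq k}(\xi^k-\xi^l)\neq 0$ and $f_{\Gamma^c}(\xi^k)=\prod_{l\in\Gamma^c}(\xi^k-\xi^l)\neq 0$, each being a product of nonzero terms because $k\notin\Gamma^c$. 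Once this nonvanishing is in place, the corollary is an immediate algebraic consequence of Theorem~\ref{thm:pDFTasc_0} together with the two propositions.

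As an alternative route to the equivalence of (C1) and (C2) in isolation, I could instead invoke the identity $f'_\Gamma(\xi^k)=n\xi^{k(n-1)}/f_{\Gamma^c}(\xi^k)$ established just before Proposition~\ref{prop:extr_pt_f_char}, which gives $|f'_\Gamma(\xi^k)|^{-1}=|f_{\Gamma^c}(\xi^k)|/n$ and hence makes the two sums proportional term by term with the same positive constant $1/n$. This makes the passage between the two conditions transparent without re-deriving each from the determinant inequality, but I expect the uniform approach through Theorem~\ref{thm:pDFTasc_0} to be the cleanest to present.
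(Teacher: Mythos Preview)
Your proposal is correct and follows exactly the route the paper takes: the paper's proof is the single sentence ``These results follow by combining Thm.~\ref{thm:pDFTasc_0} with Prop.~\ref{prop:extr_pt_fprime_char} (for (C1)) and Prop.~\ref{prop:extr_pt_f_char} (for (C2)),'' and you have simply unpacked this by taking absolute values in the determinant formulas and cancelling the $k$-independent factor $|\beta(\Gamma)|$ (resp.\ $|\beta(\Gamma)|/n$). Your added verification that $\beta(\Gamma)\neq 0$ and that $f'_\Gamma(\xi^k)$, $f_{\Gamma^c}(\xi^k)$ are nonzero for $k\in\Gamma$ is a welcome bit of rigor that the paper leaves implicit.
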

\begin{proof}
These results follow by combining  Thm.~\ref{thm:pDFTasc_0} with
Prop.~\ref{prop:extr_pt_fprime_char} (for (C1)) and 
Prop.~\ref{prop:extr_pt_f_char} (for (C2)).
\end{proof}
The identity in \eqref{DFT-on-c} shows that \eqref{eq:fGammaC} can be interpreted as a condition on the DFT acting on the vectors $c(\Gamma^c)$ defined in~\eqref{def:cGamma}. Hence, existing results can be used to bound the level of sparsity for which exact recovery is possible. To take advantage of this, we require the following definition.

\begin{defn}[coherence]\label{def:coherence}
Let $\Phi\in \mathbb{C}^{p\times q}$ have $\ell_2$-normalized columns. The coherence of $\Phi$ is defined as
$$
\mu(\Phi) := \max_{k\neq l} |\Phi_k^* \Phi_l|,
$$
where $\Phi_k$ denotes the $k$th column of $\Phi$.
\end{defn}

The following known result uses the coherence to show that a nonzero vector cannot be arbitrarily well concentrated with respect to two different orthonormal bases.

\begin{lemma}[\cite{RiBo:18}]\label{lem:RiBoLemma}
Let $A,B \in \mathbb{C}^{n\times n}$ be unitary and $S,T \subseteq \mathcal{U}_n$ be nonempty. If there exists a nonzero  $\bar{z} \in \mathbb{C}^n$ satisfying $\|\bar{z} - \proj_S(\bar{z})\|_1\leq \varepsilon \|\bar{z}\|_1$ for some $\varepsilon \in [0,1]$, and a nonzero $\bar{y} \in \mathbb{C}^n$ with $\supprt(\bar{y}) = T$ such that $A\bar{z} = B\bar{y}$, then
\begin{equation}\label{ST-bound}
    |S||T|
    \geq \frac{1 - \varepsilon}{\mu^2(\begin{bmatrix} A & B\end{bmatrix})},
\end{equation}
where the \emph{coherence} $\mu(\cdot)$ is defined in Defn.~\ref{def:coherence}.
\end{lemma}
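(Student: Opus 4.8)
The plan is to recognize this as a robust (approximate-support) version of the classical Donoho--Stark / Elad--Bruckstein uncertainty principle, and to prove it through a short chain of $\ell_1$--$\ell_\infty$ inequalities that are all controlled by the coherence. First I would set $w := A\bar z = B\bar y$ and exploit the unitarity of $A$ and $B$ to write $\bar z = A^* w$ and $\bar y = B^* w$, hence $\bar z = A^* B\, \bar y$ and $\bar y = B^* A\, \bar z$. Writing $A_k, B_l$ for the $k$-th and $l$-th columns, the key structural remark is that since $A$ and $B$ are each unitary, inner products between two distinct columns from the \emph{same} block vanish; therefore, in the matrix $[A\ B]$ every off-diagonal inner product that is not automatically zero is a cross term $A_k^* B_l$, and each such term satisfies $|A_k^* B_l| \le \mu([A\ B])$ for \emph{all} $k,l$, because a column of $A$ and a column of $B$ always lie in different blocks and are thus distinct columns of $[A\ B]$. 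I abbreviate $\mu := \mu([A\ B])$.

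Next I would expand in coordinates. From $\bar y = B^* A\,\bar z$ we obtain $\bar y_t = \sum_k (B_t^* A_k)\,\bar z_k$, so $|\bar y_t| \le \mu \sum_k |\bar z_k| = \mu\|\bar z\|_1$; since $\supprt(\bar y)=T$ has exactly $|T|$ nonzero entries, this yields $\|\bar y\|_1 \le |T|\,\mu\,\|\bar z\|_1$. Symmetrically, from $\bar z = A^* B\,\bar y$ we get $|\bar z_k| \le \mu\|\bar y\|_1$, i.e. $\|\bar z\|_\infty \le \mu\,\|\bar y\|_1$. These two quantitative consequences of incoherence do all the work.

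Finally I would convert the approximate-support hypothesis into a lower bound: since $\bar z - \proj_S(\bar z) = \proj_{S^c}(\bar z)$, the assumption gives $\|\proj_{S^c}(\bar z)\|_1 \le \varepsilon\|\bar z\|_1$, and hence $\|\proj_S(\bar z)\|_1 = \|\bar z\|_1 - \|\proj_{S^c}(\bar z)\|_1 \ge (1-\varepsilon)\|\bar z\|_1$. Chaining the inequalities,
$$
(1-\varepsilon)\|\bar z\|_1 \le \|\proj_S(\bar z)\|_1 \le |S|\,\|\bar z\|_\infty \le |S|\,\mu\,\|\bar y\|_1 \le |S|\,|T|\,\mu^2\,\|\bar z\|_1,
$$
and dividing by $\|\bar z\|_1 > 0$ (legitimate since $\bar z \neq 0$) gives $(1-\varepsilon) \le |S|\,|T|\,\mu^2$, which is exactly~\eqref{ST-bound}.

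The computation is routine; the only points demanding care are bookkeeping ones. The main thing to get right is the claim that the coherence bounds \emph{every} cross inner product $|A_k^* B_l|$, including the case $k=l$, which rests on columns of $A$ and columns of $B$ always being distinct columns of the concatenation $[A\ B]$. The second delicate point is the correct handling of the $\varepsilon$-relaxation: one must use the partition identity $\|\bar z\|_1 = \|\proj_S(\bar z)\|_1 + \|\proj_{S^c}(\bar z)\|_1$ to turn the approximate-support bound into the lower bound on the mass of $\bar z$ carried by $S$. Once these two observations are in place, the pair of incoherence inequalities and one division complete the argument.
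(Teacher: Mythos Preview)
Your proof is correct. The paper does not actually prove this lemma; it is stated with a citation to \cite{RiBo:18} and used as a black box. Your argument is the standard Donoho--Stark/Elad--Bruckstein route via $\ell_1$--$\ell_\infty$ bounds controlled by the coherence, and all the steps check out: the observation that every cross term $|A_k^* B_l|$ is bounded by $\mu([A\ B])$ (since $A_k$ and $B_l$ are always distinct columns of the concatenation, regardless of whether $k=l$), the two incoherence inequalities $\|\bar y\|_1 \le |T|\mu\|\bar z\|_1$ and $\|\bar z\|_\infty \le \mu\|\bar y\|_1$, and the conversion of the approximate-support hypothesis into $\|\proj_S(\bar z)\|_1 \ge (1-\varepsilon)\|\bar z\|_1$ via the partition identity. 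Chaining and dividing by $\|\bar z\|_1 > 0$ gives the claim.
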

Using this lemma we can get a bound on the sparsity of signals that can be recovered exactly via \eqref{eq:l1pDFT}.
\begin{cor}\label{cor:pDFT_rec_s_sparse}
Let $n$ be a prime number and $\Omega\subseteq \mathcal{U}_n$ satisfy \eqref{eq:assump_first_mbar}. Every $s$-sparse signal $\bar{x}\in \mathbb{R}^n$ is the unique solution to the optimization problem in~\eqref{eq:l1pDFT} 
if 
\begin{equation} \label{s-bound}
s < \tfrac{n}{2(n-|\Omega|)}.
\end{equation}
\end{cor}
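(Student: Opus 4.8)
The plan is to verify that $\mathcal{F}_\Omega$ satisfies the order-$s$ nullspace property on its \emph{real} nullspace, i.e.\ that $\|\eta_S\|_1 < \|\eta_{S^c}\|_1$ for every nonzero real $\eta\in\nullsp(\mathcal{F}_\Omega)$ and every $S$ with $|S|\le s$. Once this is in hand, the equivalence of membership in $\mathcal{T}_{\max}(\mathcal{F}_\Omega)$ and exact recovery in Thm.~\ref{thm:pDFTasc_0} immediately yields that every $s$-sparse signal is the unique solution to~\eqref{eq:l1pDFT}; equivalently, one may run the triangle-inequality argument of Thm.~\ref{thm:genthm} verbatim. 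The key observation that makes the bound~\eqref{s-bound} appear is that a real $\eta\in\nullsp(\mathcal{F}_\Omega)$ is exactly a real vector whose DFT $\mathcal{F}_n\eta$ vanishes on $\Omega$, and is therefore supported on $\Omega^c$, a set of size $n-|\Omega|$.

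This dovetails with the concentration inequality of Lem.~\ref{lem:RiBoLemma}. I would apply it with the two unitaries $A=\mathcal{F}_n$ and $B=I$, the vector $\bar z=\eta$, and $\bar y := \mathcal{F}_n\eta$, so that $A\bar z = B\bar y$ and $T:=\supprt(\bar y)\subseteq\Omega^c$ gives $|T|\le n-|\Omega|$. The required coherence is $\mu\big(\begin{bmatrix}\mathcal{F}_n & I\end{bmatrix}\big)$; since both blocks have $\ell_2$-normalized, mutually orthonormal columns, the only nonzero cross inner products are between a column of $\mathcal{F}_n$ and a column of $I$, each of modulus $1/\sqrt n$, so that $\mu = 1/\sqrt n$ and $\mu^2 = 1/n$.

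To produce the bound I would argue by contradiction. Suppose the nullspace property failed, so that there are a nonzero real $\eta\in\nullsp(\mathcal{F}_\Omega)$ and a (necessarily nonempty) $S$ with $|S|\le s$ and $\|\eta_S\|_1\ge\|\eta_{S^c}\|_1$. Then $\|\eta-\proj_S(\eta)\|_1 = \|\eta_{S^c}\|_1 \le \tfrac12\|\eta\|_1$, so the hypothesis of Lem.~\ref{lem:RiBoLemma} holds with $\varepsilon=\tfrac12$. The lemma then gives $|S|\,|T|\ge (1-\tfrac12)/\mu^2 = n/2$. Combining this with $|S|\le s$ and $|T|\le n-|\Omega|$ yields $s(n-|\Omega|)\ge n/2$, that is $s\ge \tfrac{n}{2(n-|\Omega|)}$, which contradicts~\eqref{s-bound}. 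Hence the order-$s$ nullspace property holds and the corollary follows.

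The main obstacle, and really the only nontrivial step, is the correct packaging of the problem into Lem.~\ref{lem:RiBoLemma}: choosing $A=\mathcal{F}_n$ and $B=I$ (rather than the reverse) so that the support constraint lands on $T\subseteq\Omega^c$, recognizing that the failure $\|\eta_S\|_1\ge\|\eta_{S^c}\|_1$ is precisely $\ell_1$-concentration of $\eta$ on $S$ with parameter $\varepsilon=\tfrac12$, and evaluating $\mu(\begin{bmatrix}\mathcal{F}_n & I\end{bmatrix})=1/\sqrt n$. After that the inequality $s(n-|\Omega|)\ge n/2$ is immediate. I would also note that primality of $n$ plays no direct role in this coherence estimate; it enters only through the ambient results (Thm.~\ref{thm:pDFTasc_0}, which relies on~\eqref{eq:omegaAssump}) invoked for the reduction to the nullspace property.
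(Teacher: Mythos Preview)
Your proof is correct and in fact more direct than the paper's. Both arguments hinge on Lem.~\ref{lem:RiBoLemma} with $\varepsilon=\tfrac12$ and $\mu(\begin{bmatrix}\mathcal{F}_n & I\end{bmatrix})=1/\sqrt n$, but they apply it to different objects. The paper fixes the set $T=\mathcal{U}_{n-|\Omega|}$, uses~\eqref{s-bound} to see that the conclusion~\eqref{ST-bound} fails, and hence deduces that the concentration hypothesis must fail for every $\bar y$ supported on $T$; it then specializes $\bar y=c(\Gamma^c)$ for each $\Gamma$ with $|\Gamma|=|\Omega|+1$ to verify condition~\eqref{eq:fGammaC} of Cor.~\ref{cor:pDFTasc}, and concludes via that corollary and Thm.~\ref{thm:genthm2}. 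You instead apply the lemma directly to an arbitrary nonzero real $\eta\in\nullsp(\mathcal{F}_\Omega)$, setting $\bar y=\mathcal{F}_n\eta$ (whose support lies in $\Omega^c$), and derive a contradiction to~\eqref{s-bound} if the NUP were violated on $S$; this yields the order-$s$ NUP on the real nullspace and hence the recovery statement by Thm.~\ref{thm:genthm}. Your route bypasses Cor.~\ref{cor:pDFTasc} and the polynomials $f_{\Gamma^c}$ entirely; as you correctly observe, it uses neither the primality of $n$ nor the structural assumption~\eqref{eq:assump_first_mbar} on $\Omega$, so it actually proves a more general statement. The paper's route, by contrast, illustrates how the MASC machinery and the explicit description of extreme points feed into the bound.
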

\begin{proof}
Define $A := I_{n}\in\mathbb{C}^{n\times n}$ (the $n\times n$ identity matrix) and $B:=\mathcal{F}_n\in\mathbb{C}^{n\times n}$, and then 
observe using Defn.~\ref{def:coherence} that 
\begin{equation}\label{coherence-equals}
\mu(\begin{bmatrix} I_n & \mathcal{F}_n\end{bmatrix}) = 1/ \sqrt{n}.
\end{equation}
Also note that since $\Omega$ is assumed to satisfy~\eqref{eq:assump_first_mbar}, we know that $|\Omega| = 2\bar{m} + 1$ for some $\bar{m}\in\{1,2,\dots,(n-3)/2\}$.

Next, let $T:= \mathcal{U}_{n-2\bar{m}-1}$ and let $S\subseteq\mathcal{U}_n$ be any set satisfying $|S| \leq s$. Combining this with~\eqref{s-bound} we find that
$$
|S|
\leq s
< \tfrac{n}{2(n-|\Omega|)}
= \tfrac{n}{2(n-2\bar{m} - 1)}
= \tfrac{n}{2|T|},
$$
which combined with~\eqref{coherence-equals} and choosing $\varepsilon = 1/2$ yields
$$
|S||T|
< \tfrac{n}{2}
= \tfrac{1}{2\mu^2([I_n \, \mathcal{F}_n])}
= \tfrac{1-\varepsilon}{\mu^2([I_n \, \mathcal{F}_n])}.
$$
Since this shows that~\eqref{ST-bound} does not hold, we must conclude from Lem.~\ref{lem:RiBoLemma} that
there does not exist a pair of nonzero vectors $\bar{y},\bar{z} \in \mathbb{C}^n$ satisfying $\|\bar{z} - \proj_{S}(\bar{z})\|_1\leq \tfrac{1}{2}\|\bar{z}\|_1$,
$\supprt(\bar{y}) = T$,
and
$\bar{z} =\mathcal{F}_n\bar{y}$; this is equivalent to saying that all vectors $\bar{y}\in\mathbb{C}^n$ with $\supprt(\bar{y}) = T$ must satisfy $
\|\mathcal{F}_n\bar{y} - \proj_{S}(\mathcal{F}_n\bar{y})\|_1
> \tfrac{1}{2}\|\mathcal{F}_n\bar{y}\|_1,
$
which itself may equivalently be stated as
\begin{equation} \label{key:T-bound}
\|(\mathcal{F}_n\bar{y})_{S^c}\|_1
> \|(\mathcal{F}_n\bar{y})_S\|_1
\ \ \text{$\forall \bar{y}$ with $\supprt(\bar{y}) = T$.} 
\end{equation}

Next, consider any $\Gamma\subseteq \mathcal{U}_n$ satisfying $|\Gamma| = |\Omega| + 1 =2(\bar{m} + 1)$, which implies that $|\Gamma^c| = n - 2(\bar{m}+1)$. It follows from this equality, \eqref{def:cGamma}, and the definition of $T$ that 
$$
\supprt(c(\Gamma^c))
= \mathcal{U}_{|\Gamma^c|+1}
= \mathcal{U}_{n-2\bar{m}-1}
= T.
$$
Thus, using~\eqref{DFT-on-c}, \eqref{F-is-zero}, and 
$\bar{y} = c(\Gamma^c)$  in~\eqref{key:T-bound}, we have
\begin{align*}
\sum_{k\in S^c\cap\Gamma}|f_{\Gamma^c}(\xi^k)|
&= \sum_{k\in S^c\cap\Gamma}|[\mathcal{F}_n(c(\Gamma^c))]_k| \\ &= \sum_{k\in S^c}|[\mathcal{F}_n(c(\Gamma^c))]_k|  \\
&= \|(\mathcal{F}_n c(\Gamma^c))_{S^c}\|_1 
 > \|(\mathcal{F}_n c(\Gamma^c))_S\|_1 \\
&= \sum_{k\in S}|[\mathcal{F}_n(c(\Gamma^c))]_k| \\
&= \sum_{k\in S\cap\Gamma}|[\mathcal{F}_n(c(\Gamma^c))]_k|
= \sum_{k\in S\cap\Gamma}|f_{\Gamma^c}(\xi^k)|,
\end{align*}
so that \eqref{eq:fGammaC} holds.
It now follows from Cor.~\ref{cor:pDFTasc} that $S\in \mathcal{T}_{\max}(\mathcal{F}_\Omega)$, i.e., $\{S\} \subset \mathcal{T}_{\max}(\mathcal{F}_\Omega)$, which in turn implies via Thm.\ref{thm:genthm2} that any $\bar{x}$ satisfying $\supprt(\bar{x}) = S$ is the unique solution to the optimization problem~\eqref{eq:l1pDFT}. Since $S$  satisfied $|S| \leq s$ but was otherwise arbitrary, the proof is complete.
\end{proof}

From Cor.~\ref{cor:pDFT_rec_s_sparse}, the partial DFT matrices we consider in Assumption~\ref{ass:Omega}, coherence based bounds on maximum sparsity that can be recovered are extremely simple to calculate. We note this as the biggest merit of this corollary. More generally, the results we provide in this subsection show how the machinery we built in the previous section can be used to describe the collection of all support sets for which $\ell_1$-recovery is always successful, when the dictionary is the partial Fourier transform - probably the most commonly studied dictionary in the compressed sensing/sparse recovery literature.

\subsection{Experiments on Partial DFT Matrices}\label{ssec:pDFT_exp}

In this section we perform two numerical experiments aimed at better understanding the theoretical results obtained for sparse recovery when the dictionary is a partial DFT matrix. 
We refer to \S~\ref{ssec:inc_matr_exp} for implementation details, in particular how the sampling of each random vector $\bar{x}$ is performed.

\subsubsection{Tightness of the Bound in Cor.~\ref{cor:pDFT_rec_s_sparse}}

The bound in Cor.~\ref{cor:pDFT_rec_s_sparse} on the sparsity for which recovery is guaranteed is not expected to be tight. We illustrate this with a simple experiment using a partial DFT matrix $\mathcal{F}_\Omega$ with $n=19$ and $\Omega$ chosen to satisfy \eqref{eq:assump_first_mbar} with $\bar{m}=7$ so that $|\Omega|=15$. By Cor.~\ref{cor:pDFT_rec_s_sparse}, we know that if $s < 19/8 = 2.375$, then every $s$-sparse signal $\bar{x}\in \mathbb{R}^{19}$ can be recovered as a solution of the $\ell_1$-minimization problem 
\begin{equation}\label{eq:pDFT19}
        \min_{x}
        \|x\|_1 \text{ subject to }\mathcal{F}_\Omega \bar{x}= \mathcal{F}_\Omega x. 
\end{equation}
Thus, the corollary guarantees recovery only for $s\in \{1,2\}$. 

To test this conclusion, for each sparsity level $s$, we sampled 1000 signals $\bar{x}\in \mathbb{R}^{19}$ and tested whether $\ell_1$-recovery via solving~\eqref{eq:pDFT19} was successful.  
The probability of recovering the true signal for each sparsity level is illustrated in  Fig.~\ref{fig:Exp_pDFT_1a}. In that same figure, we also plot the probability that a support set of cardinality $s$ is in $\mathcal{T}_{\max}(\mathcal{F}_\Omega)$, which is calculated using Cor.~\ref{cor:pDFTasc}. (We terminate the plot once the probabilities hit zero.) From this second plot, we see that the highest sparsity level that can be recovered is three, thus  showing that the bound provided by Cor.~\ref{cor:pDFT_rec_s_sparse} is not tight. Also, although the probability of an index set being in $\mathcal{T}_{\max}(\mathcal{F}_\Omega)$ drops rapidly, the drop rate for the exact recovery probability using  $\ell_1$-minimization is much slower. This indicates that for a given support set $S\notin\mathcal{T}_{\max}(\mathcal{F}_\Omega)$, the signals with support $S$ that are recoverable via $\ell_1$-minimization can still be abundant. {Moreover, this plot gives a summary of the structure of $\mathcal{T}_{\max}(\mathcal{F}_\Omega)$ in terms of support sets of cardinality $s$.}

\begin{figure}[t]
\centering
\includegraphics[width=0.4\textwidth]{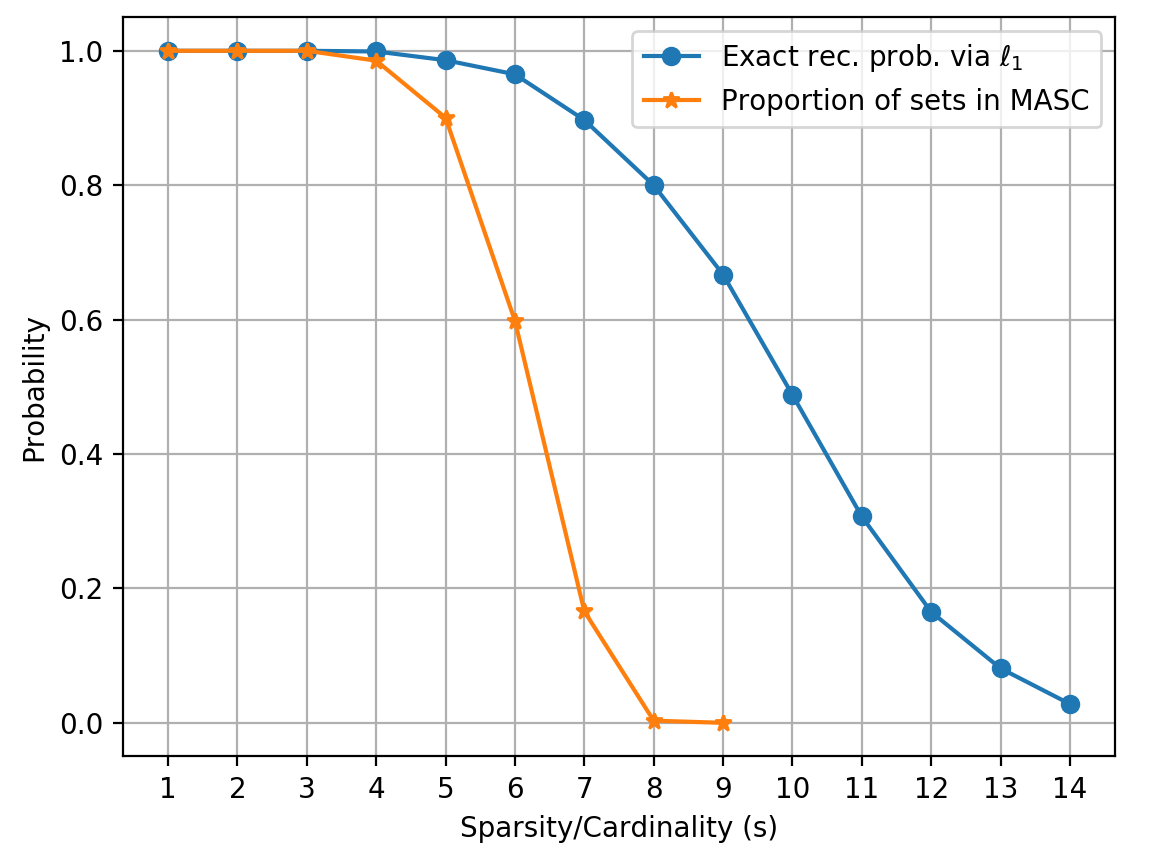}
\caption{The exact recovery probability (blue line) via $\ell_1$-minimization using~\eqref{eq:pDFT19} 
when the signal $\bar{x}$ is real-valued of dimension $n=19$ and the measurement set $\Omega$ for the partial DFT matrix $\mathcal{F}_\Omega$ satisfies  \eqref{eq:assump_first_mbar} with $\bar{m} = 7$. We also plot (orange line) the  probability that a randomly selected support set of cardinality $s$ from $\mathcal{U}_n$ is in $\mathcal{T}_{\max}(\mathcal{F}_\Omega)$.}
\label{fig:Exp_pDFT_1a}
\end{figure}

\subsubsection{The Maximal Recoverable Sparsity Level (MRSL)}\label{sec:dft-2}

Let $n$ be a prime number.  For any index set $\Omega\subseteq \mathcal{U}_n$ that satisfies \eqref{eq:assump_first_mbar},
what is the maximal sparsity level for which every vector of that sparsity level can be recovered via $\ell_1$-minimization? We call this number the MRSL and denote it by  $\smax(\Omega)$.
Cor.~\ref{cor:pDFT_rec_s_sparse} gives  a lower bound for $\smax(\Omega)$. In principle, $\smax(\Omega)$ can be obtained  by computing every extreme point of $\nullsp(\mathcal{F}_\Omega)\cap \mathbb{B}_1^n$ and checking whether  \eqref{eq:fGammaC} in Cor.~\ref{cor:pDFTasc} is satisfied. More precisely, let us first define
\begin{equation*}
    \mathfrak{E}(\Omega) := \{\Gamma\subseteq \mathcal{U}_n: |\Gamma|=|\Omega|+1\}.
\end{equation*}
For each $\Gamma\in \mathfrak{E}(\Omega)$, we sort  $\{|f_{\Gamma^c}(\xi^k)|\}_{k\in\mathcal{U}_n}$ into decreasing order and define $\smax(\Omega,\Gamma)$ as the maximum integer 
so that the sum of the first $\smax(\Omega,\Gamma)$ largest elements of $\{|f_{\Gamma^c}(\xi^k)|\}_{k\in\mathcal{U}_n}$ does not exceed half of the total sum $\sum_{k=0}^{n-1} |f_{\Gamma^c}(\xi^k)|$, so that  \eqref{eq:fGammaC} in Cor.~\ref{cor:pDFTasc} holds. It follows that
\begin{equation}\label{eq:smax}
    \smax(\Omega) \equiv \min_{\Gamma\in \mathfrak{E}(\Omega)} \smax(\Omega,\Gamma).
\end{equation}
For large $n$, the cardinality of $\mathfrak{E}(\Omega)$ (i.e., the number of extreme points)  is  
huge, making the calculation in \eqref{eq:smax} impractical. This motivates the need for an upper bound on $\smax(\Omega)$ that is tractable. We describe such a strategy next.

Let $\widehat{\mathfrak{E}}(\Omega)$ denote any randomly selected subset of  $\mathfrak{E}(\Omega)$.  For such a choice, it follows from~\eqref{eq:smax} that
\begin{equation}\label{eq:smax_estimate}
\widehat{\smax}(\Omega):=\min_{\Gamma\in \widehat{\mathfrak{E}}(\Omega)} \smax(\Omega,\Gamma)
    \geq \smax(\Omega),
\end{equation}
which means that  $\widehat{\smax}(\Omega)$ is an upper bound for $\smax(\Omega)$.  Moreover, the computation of $\widehat{\smax}(\Omega)$ will be efficient provided we select $\widehat{\mathfrak{E}}(\Omega)$ to be a relatively small subset of $\mathfrak{E}(\Omega)$.
To evaluate this strategy, we use the following set of matrices.

\begin{test}\label{test1}
Let $n=1009$ (a prime number) and choose the collection of sets $\{\Omega_j\}_{j=0}^{38}\subset \mathcal{U}_{1009}$ to satisfy \eqref{eq:assump_first_mbar} with sizes  $|\Omega_j| = 247+20j$ for each $0 \leq j \leq 38$.\footnote{To illustrate how large the number of extreme points can be, we remark that when $|\Omega| = 507$, the number of extreme points is $\genfrac(){0pt}{}{1009}{508} \simeq 1.34\times 10^{302}$.}
\end{test}

\begin{figure}[t]
\centering
\includegraphics[width=0.4\textwidth]{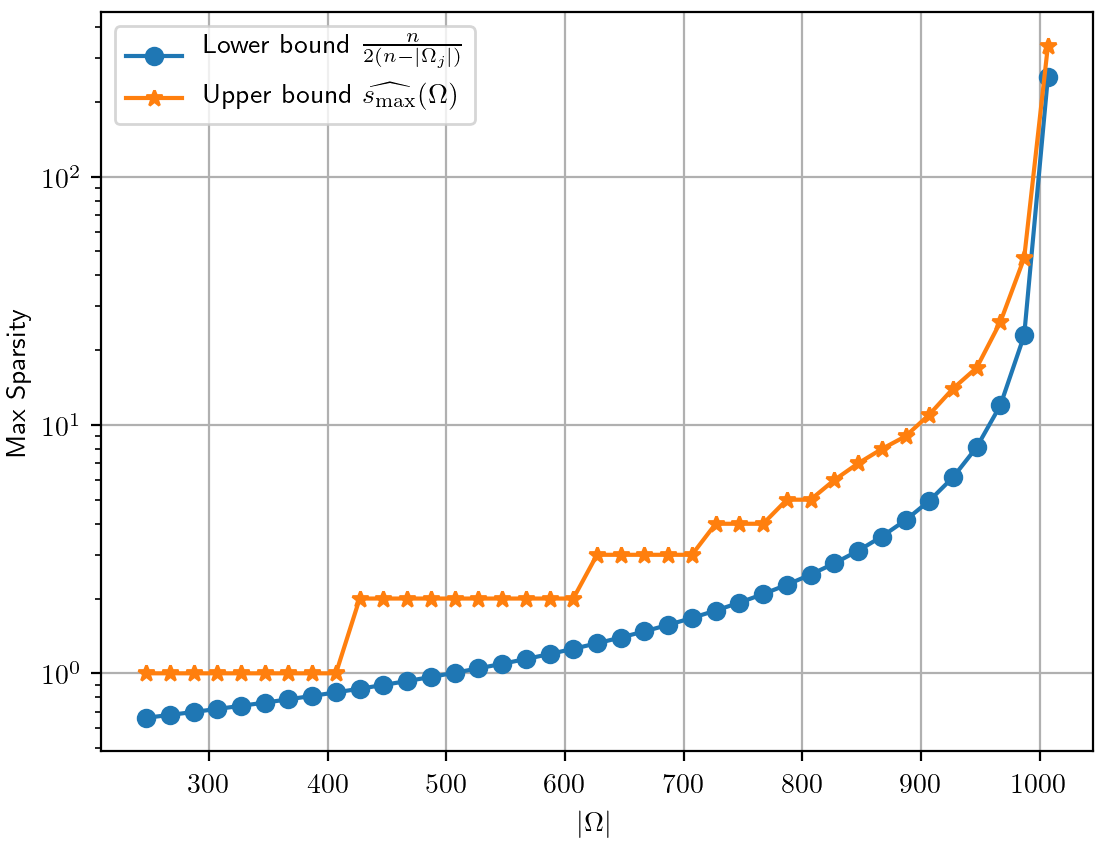}
\caption{Bounds for the MRSL  via \eqref{eq:pDFT19} for the subsets $\{\Omega_j\}_{j=0}^{38}$ in Test~\ref{test1}. 
The lower bound $\frac{n}{2(n-|\Omega_j|)}$ (blue line) is given by 
Cor.~\ref{cor:pDFT_rec_s_sparse} and the upper bound  $\widehat{\smax}(\Omega)$ by~\eqref{eq:smax_estimate} (orange line).}
\label{fig:Exp_pDFT_2a}
\end{figure}

For each $\Omega_j$ described in Test~\ref{test1}, we randomly select a subset $\widehat{\mathfrak{E}}(\Omega_j)\subset {e}(\Omega_j)$ of size $1000$ and proceed to compute $\widehat{\smax}(\Omega_j)$ using \eqref{eq:smax_estimate} as an upper bound on 
the MRSL when using the $\ell_1$-recovery problem   
\eqref{eq:pDFT19} with $\Omega = \Omega_j$.
The results are presented in Fig.~\ref{fig:Exp_pDFT_2a}. For each $\Omega_j$, we plot the sampling-based upper bound $\widehat{\smax}(\Omega_j)$ and the coherence-based lower bound  $\frac{n}{2(n-|\Omega_j|)}$ derived in Cor.~\ref{cor:pDFT_rec_s_sparse}. Note that for {$|\Omega_j|\leq 600$} it holds that  $\smax(\Omega_j) \leq 2$ so that the maximal  sparsity level that can be recovered is small.

To illustrate that our strategy is better than a na\"{i}ve approach that repetedly solves \eqref{eq:pDFT19} to estimate $s_{\max}(\Omega)$ (see Algorithm~\ref{alg:simple} below), consider the following setup.

\begin{test}\label{test2}
Let $n=61$ (a prime number) and define $\{\Omega_j\}_{j=0}^{22} \subset \mathcal{U}_n$ by choosing $\bar m\in\{7,8,\dots,29\}$ in~\eqref{eq:assump_first_mbar}, which means that 
$|\Omega_j| = 15+ 2j$ for $0 \leq j \leq 22$.
\end{test}
For each $\Omega_j$ described in Test~\ref{test2}, we use Algorithm~\ref{alg:simple} with inputs $\Omega = \Omega_j$ and $K = 1000$ to obtain $\widetilde\smax(\Omega_j)$ as an upper bound to $\smax(\Omega_j)$.
The results are shown in Fig.~\ref{fig:Exp_pDFT_2b}, where we also include the upper bound $\widehat{\smax}(\Omega)$ in \eqref{eq:smax_estimate} obtained via sampling of the extreme points. For large values of $|\Omega_j|$, the upper bound $\widehat{\smax}(\Omega_j)$ is significantly better. Moreover, in terms of the computation time the estimation obtained via sampling of the extreme points is more than an order of magnitude faster in comparison to the na\"{i}ve approach we described in Algorithm~\ref{alg:simple}.

\begin{figure}[t]
\centering
\includegraphics[width=0.4\textwidth]{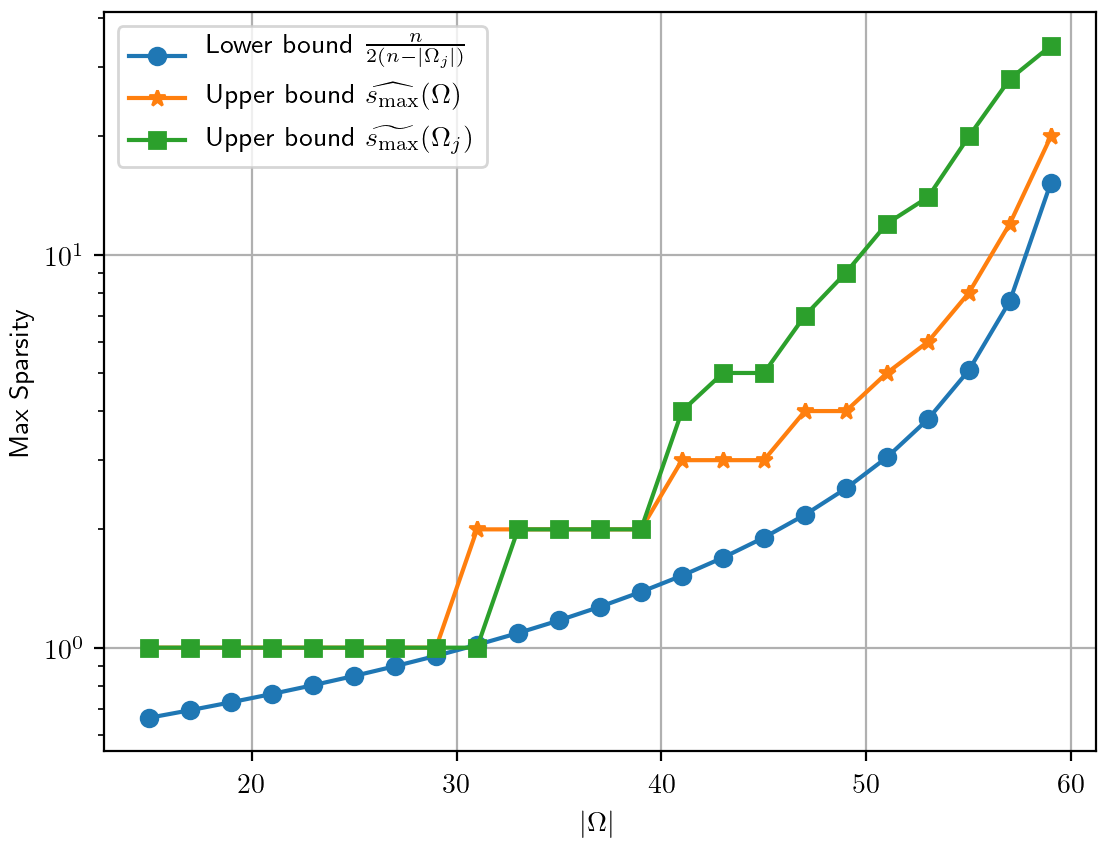}
\caption{Bounds for the MRSL via \eqref{eq:pDFT19} for the matrices $\{\Omega_j\}_{j=0}^{22}$ in Test~\ref{test2}. 
The lower bound $\frac{n}{2(n-|\Omega_j|)}$ (blue line) is given by Cor.~\ref{cor:pDFT_rec_s_sparse}. The upper bound $\widehat{\smax}(\Omega)$ (orange line) is  given by~\eqref{eq:smax_estimate} and the upper bound $\widetilde\smax(\Omega_j)$ (green line) is returned by Algorithm~\ref{alg:simple} with inputs $\Omega = \Omega_j$ and $K = 1000$.} 
\label{fig:Exp_pDFT_2b}
\end{figure}

\begin{algorithm}
\caption{A simple upper bound for the MRSL.}
\label{alg:simple}
\begin{algorithmic}[1]
  \State \textbf{inputs:} $\Omega \subset \mathcal{U}_n$ and sampling size $K \geq 1$.
  \State set $\tilde{s} \gets n$
  \Loop
  \State Randomly select $K$ vectors in $\mathbb{R}^n$ that are $\tilde{s}$-sparse.
  \If{all $K$ vectors are correctly recovered via~\eqref{eq:pDFT19}}
    \State \textbf{return} $\widetilde{\smax}(\Omega) \gets \tilde{s}$
  \EndIf
  \State set $\tilde{s} \gets \tilde{s} - 1$
  \EndLoop
\end{algorithmic}
\end{algorithm}

\section{Conclusion}
In this paper we introduced a new framework for studying the recovery of arbitrary sparsity patterns via $\ell_1$-minimization. We showed that there is a maximal recoverable sparsity pattern for each  dictionary $\Phi\in \mathbb{R}^{m\times n}$, which we called the \emph{maximum abstract simplicial complex (MASC) associated with $\Phi$}. We provided a characterization of the MASC using the extreme points of $\nullsp(\Phi)\cap \mathbb{B}_1^n$, and a second characterization using the vectors of minimal support of $\nullsp(\Phi)$. Furthermore, we showed how our approach benefits the analysis of sparse recovery when the dictionary is an incidence matrix associated with a simple graph or a $m\times p$ partial DFT matrix, where $p$ is a prime number {and the unknown signal is real}. In particular, we gave a complete characterization of the MASC associated with these matrix classes, which allowed us to characterize the collection of all support sets for which exact recovery via $\ell_1$-minimization is always possible. Interestingly, we showed that when the dictionary is an incidence matrix, the \emph{Nullspace Property} can be verified in polynomial time, although this condition is NP-hard to check for matrices in general. {We also showed that a computationally more advantageous characterization can be achieved if stronger assumptions are imposed on the measurement indices of the partial DFT matrix.}

Our framework opens the door for new directions of research. In particular, the connection between extreme points, vectors of minimal support, and the GNUP can be exploited to study sparse recovery problems for new classes of dictionaries.

\ifCLASSOPTIONcaptionsoff
  \newpage
\fi



\bibliographystyle{IEEEtran}
\bibliography{biblio/biblio,biblio/additional_biblio}

\begin{thebibliography}{10}
\providecommand{\url}[1]{#1}
\csname url@samestyle\endcsname
\providecommand{\newblock}{\relax}
\providecommand{\bibinfo}[2]{#2}
\providecommand{\BIBentrySTDinterwordspacing}{\spaceskip=0pt\relax}
\providecommand{\BIBentryALTinterwordstretchfactor}{4}
\providecommand{\BIBentryALTinterwordspacing}{\spaceskip=\fontdimen2\font plus
\BIBentryALTinterwordstretchfactor\fontdimen3\font minus
  \fontdimen4\font\relax}
\providecommand{\BIBforeignlanguage}[2]{{%
\expandafter\ifx\csname l@#1\endcsname\relax
\typeout{** WARNING: IEEEtran.bst: No hyphenation pattern has been}%
\typeout{** loaded for the language `#1'. Using the pattern for}%
\typeout{** the default language instead.}%
\else
\language=\csname l@#1\endcsname
\fi
#2}}
\providecommand{\BIBdecl}{\relax}
\BIBdecl

\bibitem{ZhKaViRoMa:18}
M.~Zhao, M.~D. Kaba, R.~Vidal, D.~R. Robinson, and E.~Mallada, ``Sparse
  recovery over graph incidence matrices,'' in \emph{57th IEEE Conference on
  Decision and Control (CDC)}, 12 2018, pp. 364--371.

\bibitem{donoho2006compressed}
D.~L. Donoho, ``Compressed sensing,'' \emph{IEEE Transactions on Information
  Theory}, vol.~52, no.~4, pp. 1289--1306, 2006.

\bibitem{mallat2008wavelet}
S.~Mallat, \emph{A wavelet tour of signal processing: the sparse way}.\hskip
  1em plus 0.5em minus 0.4em\relax Academic press, 2008.

\bibitem{candes2008introduction}
E.~J. Cand{\`e}s and M.~B. Wakin, ``An introduction to compressive sampling,''
  \emph{IEEE Signal Processing Magazine}, vol.~25, no.~2, pp. 21--30, 2008.

\bibitem{Elhamifar:NIPS11}
E.~Elhamifar and R.~Vidal, ``Sparse manifold clustering and embedding,'' in
  \emph{Neural Information Processing and Systems}, 2011.

\bibitem{Elhamifar:NIPS12}
E.~Elhamifar, G.~Sapiro, and R.~Vidal, ``Finding exemplars from pairwise
  dissimilarities via simultaneous sparse recovery,'' in \emph{Neural
  Information Processing and Systems}, 2012.

\bibitem{Elhamifar:TPAMI13}
E.~Elhamifar and R.~Vidal, ``Sparse subspace clustering: Algorithm, theory, and
  applications,'' \emph{IEEE Trans. Pattern Anal. Mach. Intell.}, vol.~35,
  no.~11, pp. 2765--2781, 2013.

\bibitem{wright2009robust}
J.~Wright, A.~Y. Yang, A.~Ganesh, S.~S. Sastry, and Y.~Ma, ``Robust face
  recognition via sparse representation,'' \emph{IEEE Transactions on Pattern
  Analysis and Machine Intelligence}, vol.~31, no.~2, pp. 210--227, 2009.

\bibitem{lustig2007sparse}
M.~Lustig, D.~Donoho, and J.~M. Pauly, ``Sparse {MRI}: The application of
  compressed sensing for rapid {MR} imaging,'' \emph{Magnetic Resonance in
  Medicine}, vol.~58, no.~6, pp. 1182--1195, 2007.

\bibitem{candes2006robust}
E.~J. Cand{\`e}s, J.~Romberg, and T.~Tao, ``Robust uncertainty principles:
  Exact signal reconstruction from highly incomplete frequency information,''
  \emph{IEEE Transactions on Information Theory}, vol.~52, no.~2, pp. 489--509,
  2006.

\bibitem{LiMiZoSe:15}
R.~Li{\'e}geois, B.~Mishra, M.~Zorzi, and R.~Sepulchre, ``Sparse plus low-rank
  autoregressive identification in neuroimaging time series,'' in \emph{IEEE
  Conference on Decision and Control}, 2015, pp. 3965--3970.

\bibitem{Schwab:SIIMS19}
E.~Schwab, B.~D. Haeffele, R.~Vidal, and N.~Charon, ``Global optimality in
  separable dictionary learning with applications to the analysis of diffusion
  mri,'' \emph{SIAM Journal of Imaging Sciences}, 2019.

\bibitem{coates2007compressed}
M.~Coates, Y.~Pointurier, and M.~Rabbat, ``Compressed network monitoring,'' in
  \emph{IEEE/SP Workshop on Statistical Signal Processing}.\hskip 1em plus
  0.5em minus 0.4em\relax IEEE, 2007, pp. 418--422.

\bibitem{haupt2008compressed}
J.~Haupt, W.~U. Bajwa, M.~Rabbat, and R.~Nowak, ``Compressed sensing for
  networked data,'' \emph{IEEE Signal Processing Magazine}, vol.~25, no.~2, pp.
  92--101, 2008.

\bibitem{xu2011compressive}
W.~Xu, E.~Mallada, and A.~Tang, ``Compressive sensing over graphs,'' in
  \emph{IEEE INFOCOM}.\hskip 1em plus 0.5em minus 0.4em\relax IEEE, 2011, pp.
  2087--2095.

\bibitem{YuLi:06}
M.~Yuan and Y.~Lin, ``Model selection and estimation in regression with grouped
  variables,'' \emph{Journal of the Royal Statistical Society: Series B
  (Statistical Methodology)}, vol.~68, no.~1, pp. 49--67, 2006.

\bibitem{ObLaVe:11}
G.~Obozinski, L.~Jacob, and J.-P. Vert, ``Group lasso with overlaps: the latent
  group lasso approach,'' \emph{arXiv preprint arXiv:1110.0413}, 2011.

\bibitem{HeCa:09a}
L.~He and L.~Carin, ``Exploiting structure in wavelet-based bayesian
  compressive sensing,'' \emph{IEEE Transactions on Signal Processing},
  vol.~57, no.~9, pp. 3488--3497, 2009.

\bibitem{HZM:11}
J.~Huang, T.~Zhang, and D.~Metaxas, ``Learning with structured sparsity,''
  \emph{Journal of Machine Learning Research}, vol.~12, no. Nov, pp.
  3371--3412, 2011.

\bibitem{BaCeDuHe:10}
R.~G. Baraniuk, V.~Cevher, M.~F. Duarte, and C.~Hegde, ``Model-based
  compressive sensing,'' \emph{IEEE Transactions on information theory},
  vol.~56, no.~4, pp. 1982--2001, 2010.

\bibitem{AdHaPoRo:17}
B.~Adcock, A.~C. Hansen, C.~Poon, and B.~Roman, ``Breaking the coherence
  barrier: A new theory for compressed sensing,'' in \emph{Forum of
  Mathematics, Sigma}, vol.~5.\hskip 1em plus 0.5em minus 0.4em\relax Cambridge
  University Press, 2017.

\bibitem{PaSuEl:17}
V.~{Papyan}, J.~{Sulam}, and M.~{Elad}, ``Working locally thinking globally:
  Theoretical guarantees for convolutional sparse coding,'' \emph{IEEE
  Transactions on Signal Processing}, vol.~65, no.~21, pp. 5687--5701, Nov
  2017.

\bibitem{Elhamifar:TSP12}
E.~Elhamifar and R.~Vidal, ``Block-sparse recovery via convex optimization,''
  \emph{{IEEE} Transactions on Signal Processing}, vol.~60, no.~8, pp.
  4094--4107, 2012.

\bibitem{YouRoVi:16}
C.~You, D.~Robinson, and R.~Vidal, ``Scalable sparse subspace clustering by
  orthogonal matching pursuit,'' in \emph{The IEEE Conference on Computer
  Vision and Pattern Recognition (CVPR)}, June 2016.

\bibitem{SoYaZu:2015}
S.~Soltan, M.~Yannakakis, and G.~Zussman, ``Joint cyber and physical attacks on
  power grids: Graph theoretical approaches for information recovery,'' in
  \emph{Proceedings of the 2015 ACM SIGMETRICS International Conference on
  Measurement and Modeling of Computer Systems}, ser. SIGMETRICS '15.\hskip 1em
  plus 0.5em minus 0.4em\relax New York, NY, USA: ACM, 2015, pp. 361--374.

\bibitem{smith:97}
S.~W. Smith, \emph{The scientist and engineer's guide to digital signal
  processing}.\hskip 1em plus 0.5em minus 0.4em\relax California Technical Pub.
  San Diego, 1997.

\bibitem{rock:70}
R.~Rockafellar, \emph{Convex Analysis}.\hskip 1em plus 0.5em minus 0.4em\relax
  Princeton University Press, 1970.

\bibitem{candes2005decoding}
E.~J. Candes and T.~Tao, ``Decoding by linear programming,'' \emph{IEEE
  Transactions on Information Theory}, vol.~51, no.~12, pp. 4203--4215, 2005.

\bibitem{baraniuk2008simple}
R.~Baraniuk, M.~Davenport, R.~DeVore, and M.~Wakin, ``A simple proof of the
  restricted isometry property for random matrices,'' \emph{Constructive
  Approximation}, vol.~28, no.~3, pp. 253--263, 2008.

\bibitem{donoho2001uncertainty}
D.~L. Donoho and X.~Huo, ``Uncertainty principles and ideal atomic
  decomposition,'' \emph{IEEE Transactions on Information Theory}, vol.~47,
  no.~7, pp. 2845--2862, 2001.

\bibitem{cohen2009compressed}
A.~Cohen, W.~Dahmen, and R.~DeVore, ``Compressed sensing and best $k$-term
  approximation,'' \emph{Journal of the American Mathematical Society},
  vol.~22, no.~1, pp. 211--231, 2009.

\bibitem{FoRa:13}
S.~Foucart and H.~Rauhut, \emph{A mathematical introduction to compressive
  sensing}.\hskip 1em plus 0.5em minus 0.4em\relax Birkh{\"a}user Basel, 2013,
  vol.~1, no.~3.

\bibitem{TiPf:14}
A.~M. Tillmann and M.~E. Pfetsch, ``The computational complexity of the
  restricted isometry property, the nullspace property, and related concepts in
  compressed sensing,'' \emph{IEEE Transactions on Information Theory},
  vol.~60, no.~2, pp. 1248--1259, 2014.

\bibitem{GoRo:01}
C.~Godsil and G.~F. Royle, \emph{Algebraic graph theory}, ser. Graduate text in
  mathematics.\hskip 1em plus 0.5em minus 0.4em\relax Springer, New York, 2001.

\bibitem{Ore:62}
O.~Ore, \emph{Theory of Graphs}, ser. American Mathematical Society colloquium
  publications.\hskip 1em plus 0.5em minus 0.4em\relax American Mathematical
  Society, 1962, no. pt. 1.

\bibitem{tao:03}
T.~Tao, ``An uncertainty principle for cyclic groups of prime order,''
  \emph{arXiv preprint math/0308286}, 2003.

\bibitem{meshulam:06}
R.~Meshulam, ``An uncertainty inequality for finite abelian groups,''
  \emph{European Journal of Combinatorics}, vol.~27, no.~1, pp. 63--67, 2006.

\bibitem{RiBo:18}
E.~Riegler and H.~B{\"o}lcskei, ``Uncertainty relations and sparse signal
  recovery,'' \emph{arXiv preprint arXiv:1811.03996}, 2018.

\end{thebibliography}
\end{document}